\newtheorem{df}{Definition}[section]
\newtheorem{thm}{Theorem}[section]
\newtheorem{prop}[thm]{Proposition}
\newtheorem{cor}[thm]{Corollary}
\newtheorem{lem}[thm]{Lemma}
\DeclareMathOperator*{\argmax}{argmax} % thin space, limits underneath in displays
\DeclareMathOperator*{\argmin}{argmin} % thin space, limits underneath in displays
\def\E{{\mathbb E}}    %%% for E (in math mode)
\def\N{{\mathbb N}}    %%% for R (in math mode)
\def\P{{\mathbb P}}    %%% for R (in math mode)
\def\R{{\mathbb R}}    %%% for R (in math mode)
\begin{document}
\title{The geometry of moral decision making}
\author{R. Friedrich\footnote{E-mail: roland.friedrich@uzh.ch}}
\maketitle
\begin{abstract}
We show how (resource) bounded rationality can be understood as the interplay of two fundamental moral principles: deontology and utilitarianism. 
In particular, we interpret deontology as a regularisation function in an optimal control problem, coupled with a free parameter, the inverse temperature, to shield the individual from expected utility. 
We discuss the information geometry of bounded rationality and aspects of its relation to rate distortion theory. A central role is played by Markov kernels and regular conditional probability, which are also studied geometrically. A gradient equation is used to determine the utility expansion path.
Finally, the framework is applied to the analysis of a disutility model of the restriction of constitutional rights that we derive from legal doctrine. The methods discussed here are also relevant to the theory of autonomous agents.
\end{abstract}
\tableofcontents
\section{Introduction}
Consider this: a trolley is bound to hit a group of people, but a bystander can intervene and redirect the trolley so that it hits only one person on a separate track, as Welzel considered in his essay, ``Zum Notstandsproblem"~\cite{welzel1951notstandsproblem}. The question of how a bystander should decide in this situation has been extensively and controversially debated, with two fundamental moral theories clashing: utilitarianism and deontology.

Deontological theories, in the Kantian tradition, postulate that there are ethical rules, called norms, that specify that a given action is always right or wrong in a given situation, regardless of the consequences, cf. e.g.~\cite{skurski1983new,neumann1994moral}. In game theory, norms are defined as exogenous rules that select actions and indicate how players should behave, cf. e.g.~\cite{lopez2006introducing}.
In utilitarianism, as postulated by Mill and Bentham, and as opposed to deontology, the optimal rule to follow in a given situation is the one that produces the greatest utility, see e.g.~\cite{casebeer2003moral,skurski1983new,neumann1994moral}. Therefore, harming others is considered acceptable if the net increase in the well-being of a greater number of people can be established~\cite{conway2013deontological}.

Figure~\ref{fig:ought_sets} illustrates the fundamental differences between the two approaches. Unlike the deontological actor, who chooses a course of action from a predefined set of permitted actions in a given situation, the utilitarian actor chooses the option that maximises utility from among all possible actions, which may go beyond the permitted set according to a moral system.

Psychology claims that this dichotomy is rooted in the way the human mind works.  Dual-process theories, as outlined by~Evans~\cite{evans2003two}, suggest that two distinct cognitive systems underpin human reasoning; an ancient evolutionary relic shared with other animals, and a recent addition to the evolutionary timeline unique to humans.
It suggests that moral responses to dilemmas are informed by both deontology and utilitarianism~\cite{conway2013deontological} when applied to moral decision making. 
Using functional magnetic resonance imaging (fMRI) to study moral dilemmas, Greene and colleagues~\cite{greene2004neural} showed that deontological principles are processed automatically and unconsciously in the brain, whereas utilitarian frameworks are processed explicitly and consciously. There was also evidence of competition between the two systems.

Rationality, even in the context of uncertainty, provides the basis for sound decision-making. One such example is expected utility theory (EUT)~\cite{Mas-Colell1995,gigerenzer2010moral}. However, EUT does not take into account the costs associated with searching.
Simons'~\cite{simon1997models} concept of bounded rationality is a decision-making framework that takes into account the limited resources of the individual in searching for the best course of action, especially since the individual must also consider the benefits and costs. 
Gigerenzer's~\cite{gigerenzer2010moral} emphasis on moral satisficing holds that moral behaviour is the result of the interplay between mind and environment, as it cannot be explained by reference to moral rules or maximisation principles alone. Rather, it is the result of pragmatic social heuristics.

Mattsson and Weibul's~\cite{mattsson2002probabilistic} microeconomic model is an example of bounded rationality outside of game theory. It is a probabilistic model in which the decision maker must explicitly consider disutility and finds the Gibbs distribution as the optimal solution.
Previous models of bounded rationality, such as McFadden~\cite{Mcfadden2012}'s `Quantal Response Equilibrium' model, generate the Boltzmann distribution as the solution as a consequence of Luce's Choice Axiom.~\cite{luce1959individual}.
In line with Hansen and Sargent~\cite{hansen2001robust}, these models are classified as multiplier robust-control problems or, alternatively, if a source probability distribution for world states is provided, they are classified as constraint robust-control problems. 
The latter type of control problem is discussed by Sims~\cite{SIMS2003} in `Rational Inattention Theory', in which the behaviour of economic agents is linked to Shannon's theory of communication~\cite{Berger1971,cover1991elements}.  Denti et al.~\cite{denti2020note} discuss the relationship between rational inattention and rate distortion theory through the systematic use of Markov kernels.
The Information Bottleneck' method, as proposed by Tishby et al.~\cite{tishby2000information}, identifies the greatest possible compression of the input signal while preserving as much information as possible in the output signal.  

Ortega and Braun~\cite{ortega2013thermodynamics} reinterpreted bounded rational decision making by thermodynamic means as a system in which information processing costs are modelled as state changes quantified by free energy differences. They also made a remarkable connection between the coupling constant and drift-diffusion models used to model reaction time experiments.
Building on the thermodynamic formalism for bounded rationality, Genewein et al.~\cite{genewein2015bounded} deduced the formation of abstractions and demonstrated the influence of information processing costs on decision hierarchies.

We build in part on our previous work~\cite{Friedrich2023},  
by using the theory of resource-bounded rationality to formalise ideas from dual-process theory, and consider their implications for both moral and legal systems. Although, formally similar, we interpret resource-bounded rationality as both a cost-based phenomenon and a regularisation-based phenomenon, a perspective also taken by Ortega and Stocker~\cite{ortega2016human}. 
Regularisation, e.g. Tikhonov regularisation, is used to avoid overfitting in empirical risk minimisation, but also in Bayesian statistics~\cite{CaE14,deisenroth2020mathematics}. 

Let us now summarise the content of the paper. 

We begin by discussing Markov kernels in some detail. We talk about group actions and co-sheaves as natural parts of our framework.
We then discuss the basic results from information geometry and how they relate to Markov kernels. This is necessary to clarify the geometry related to resource-bounded rationality in the following sections.
After these general steps, we examine the two main models of bounded rationality in the literature. The first is a robust multiplier control problem, which analyses separate bounded rational decision processes because it corresponds to a single event perspective.
We then study a general constraint robust control problem, which unites separate optimisation problems by using the bottleneck method in conjunction with a source probability on the states of the world. 
We solve this optimisation problem by considering the gradient of mutual information, which gives the utility expansion path.
In the final section, we show how the introduced framework can be used to model the process of restriction of fundamental rights by an authority, as codified in most continental European legal systems. 
We then interpret the deontological and utilitarian components of the formulae describing a bounded rational decision-making process. 
The most important point is that in all cases the coupling constant will remain a free parameter to be determined by the competent legislative or judicial authority.

A more comprehensive analysis of legal theory will be provided in a separate paper.

\begin{figure}
    \centering
    \includegraphics[width=0.6\linewidth]{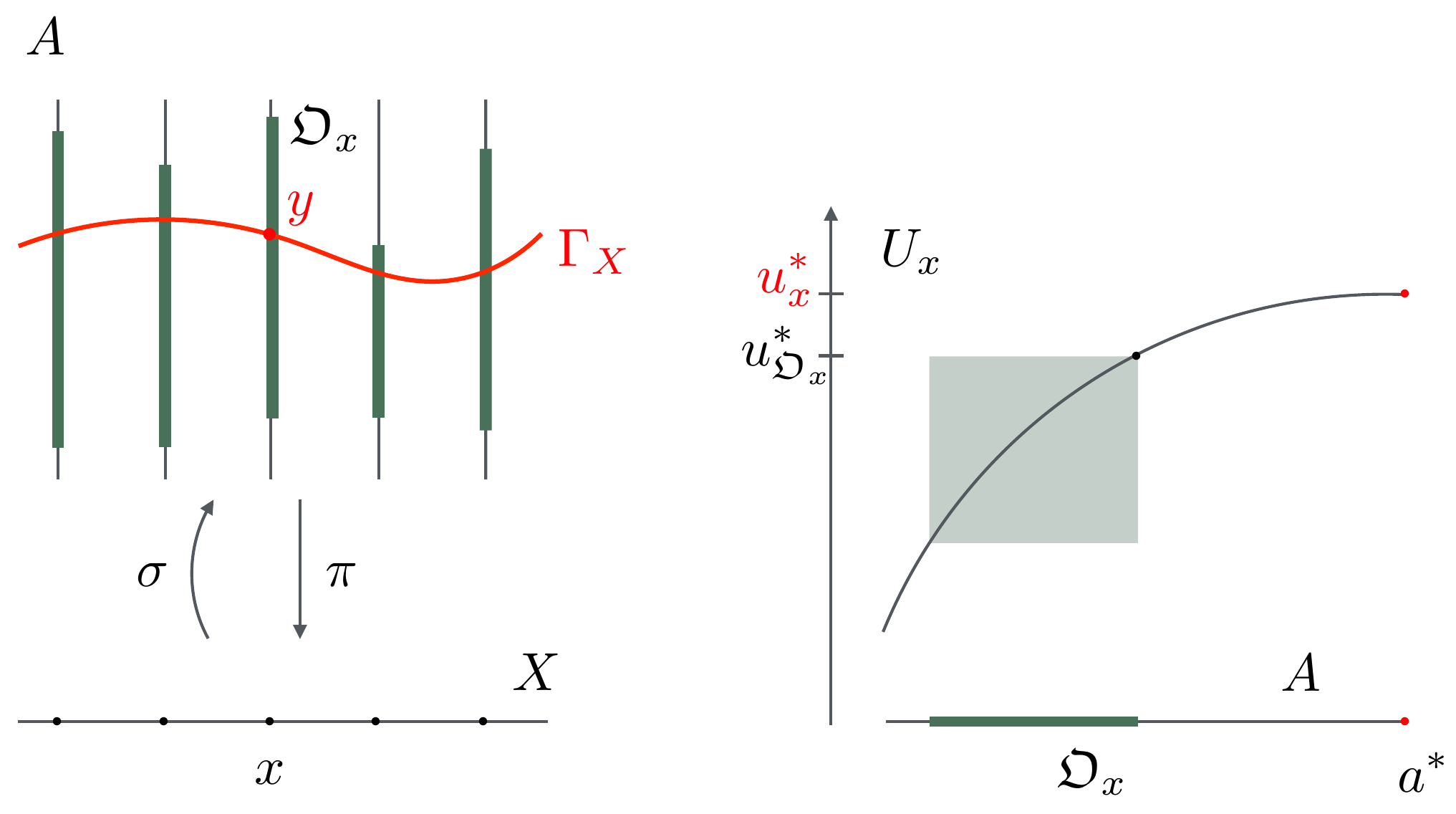}
    \caption{Left panel: Ought set $\mathfrak{O}_x\subset A$ of allowed actions in situation $x$. A particular norm or law corresponds to a co-section $\sigma:X\rightarrow Y$ with $y=\sigma(x)$. $\Gamma_X$ global co-section. Right panel: Utility of all actions. Maximum utility {\color{red} $u_x^*$} achieved at $a^*$, but not eligible because $a^*\notin\mathfrak{O}_x$.}
    \label{fig:ought_sets}
\end{figure}

\section{Varia}
Following Cooter~\cite[p. 217 et seq.]{skurski1983new}, let $P_{ij}$ be the proposition that specifies the action that changes the state $x_i$ to $x_j$, i.e. $P_{ij}.x_i=x_j$.  A rule, $OP_{ij}$, is then a statement of the form that in the state $x_i$ someone should take the action described by $P_{ij}$. So the imperative $OP_{ij}$ encodes the rule maker's preference $x_i\prec x_j$. Let us expand on this.
\subsection{Group actions}
Let $(X,\mathscr{X})$ be a discrete measurable space and $(A,\circ,\mathfrak{e})$ a finite monoid with neutral element $\mathfrak{e}$. 
A (left) monoid action $\alpha$ of $A$ on $X$ is a map $\alpha : A\times X\rightarrow X$, such that for all $s,t\in A$ and $x\in X$, $\alpha(s,\alpha(t,x)) = \alpha(s\circ t,x)$; and for all $x\in X$: $\alpha(\mathfrak{e},x) = x$. We shall write $a.x:=\alpha(a,x)$ for a left action, and refer to $a.x$ as the consequence of action $a$ on state $x$. If $(A,\mathscr{A})$ is a measurable monoid, we require the action to be measurable. If $Y< A$ is a submonoid, then the action restricted to $Y$ is denoted by $\alpha|_Y$. 
%The orbit of $x$ is defined as $A.x:=\{a.x~|~a\in A\}$.

Let $u:X\rightarrow\R$ be a $\mathscr{X}$-measurable utility function representing a preference relation $\succsim$ on $X$. Thus, for all $x,y\in X$, $x\succsim y\Leftrightarrow u(x)\geq u(y)$.
Define 
\begin{equation}
\label{eq:utility}
U:X\times A\rightarrow\R,\quad (x,a)\mapsto U(x,a):=u(a.x)-u(x),
\end{equation}
which is the change in utility given the action $a$ on the state $x$. For a fixed value of $x$ we use the notation $U_x$. 
%The semigroupoid of actions connecting given states $x,y\in X$ is defined as the set $G(x,y):=\{a\in A: a.x=y\}$.

Note that the utility function defines a potential function on the different states of the world. Consequently, different actions leading to the same state will have the same difference in utility. However, one can define a state-dependent cost function for actions that is independent of the utility of the resulting state, cf.~Section~\ref{sec:legal_app}. 

\subsection{Ought co-sheaves}
Set $[N]:=\{0,1,\dots,N\}$ for $N\in\N\cup\{+\infty\}$. Let $X$ be a discrete set and let $(B_i)_{i\in [N]}$ be a (countable) partition of $X$, with blocks $B_i\neq\emptyset$. Define the partition topology $\tau_X:=\tau\{B_i:i\in[N]\}$ and the partition $\sigma$-algebra $\mathscr{X}:=\sigma\{B_i:i\in[N]\}$ generated by the partition. Note that the open sets in $\tau_X$ are also closed, and $\tau_X$ and $\mathscr{X}$ are the same. 

Let $A$ be a (discrete) set. Assign a set $\mathfrak{O}(B_i)\subset A$ to each block $B_i$, and for $V,W\in\tau_X$, $C:=V\cap W$, define $\mathfrak{O}(V\cup W):=\mathfrak{O}(V)\bigsqcup_{C}\mathfrak{O}(W)$ by the pushout. 

Then every $U\in\tau_X$ has an up to order unique representation $U=\bigsqcup_{j=1}^m B_{i_j}$, and $\mathfrak{O}(U)$ has pairs $(i_j,a)$, $a\in\mathfrak{O}(B_{i_j})$, as elements. For $U\subset V$, the inclusion morphism $\iota_{UV}:\mathfrak{O}(U)\rightarrow\mathfrak{O}(V)$ is the inclusion of sets satisfying $\iota_{UU}$, which is the identity map.
This defines a set-valued pre-cosheaf $\mathfrak{O}$ on $\tau_X$ (which is actually a cosheaf)~\cite[V. {\S}1]{Bredon1997}.

Let $\mathfrak{A}$ and $\mathfrak{B}$ be precosheaves on $X$ and $Y$ respectively. A morphism of precosheaves along $f$ consists of a pair $(f,f_{\#})$, where $f:X\rightarrow Y$ is a continuous map and $f_{\#}:f_*\mathfrak{A}\rightarrow\mathfrak{B}$ is a natural transformation~\cite[p. 287]{Bredon1997}. Therefore, for all $U,V\in\tau_Y$, $U\subset V$, the diagram 
\begin{equation*}
\xymatrix{
f_*\mathfrak{A}(V) \ar[r]^{f_{\#V}} & \mathfrak{B}(V) \\
f_*\mathfrak{A}(U)\ar[r]^{f_{\#U}}\ar[u]_{\iota_{UV}^{f_*\mathfrak{A}}} & \mathfrak{B}(U)\ar[u]_{\iota_{UV}^{\mathfrak{B}}} 
}  
\end{equation*}
commutes; note that the direction is opposite to that for sheaves. If $f=\operatorname{id}_X$, it is called a morphism of pre-cosheaves on $X$. If $f:X\rightarrow Y$ is surjective, we call the pair $(f,f_{\#})$ an abstraction, and $\mathfrak{A}$ a specification of $\mathfrak{B}$.

Example: Let $x\mapsto\mathfrak{O}_x:=\mathfrak{O}(\{x\})\subset A$, for $\tau_X:=2^X$ (power set), be the map that associates the actions that should be taken in state $x$. The data is summarised in a policy matrix $\mathfrak{O}=(\mathfrak{o}_{xa})$ of size $|X|\times|A|$, with elements 
$$
\mathfrak{o}_{xa}=\begin{cases}
      & 1,\quad \text{if $a\in\mathfrak{O}_x$}, \\
      & 0,\quad \text{otherwise}.
\end{cases}
$$ 
When rows are selected from overlapping sets in this matrix, the co-sheaf property requires that duplicates are discarded and only one copy is retained. In summary, this construction yields the ought co-sheaf $\mathfrak{O}_X$ of actions.

Example: Let $K:(X,\mathscr{X})\rightarrow(Y, \mathscr{Y})$ be a Markov kernel (stochastic matrix), cf.~Section~\ref{sec:Markov}. By setting $\mathfrak{O}_x :=\operatorname{supp}(K_x)$ and extending it by the co-product, we obtain a co-sheaf on $X$.
Conversely, given a policy matrix $\mathfrak{O}$, it determines the support of the stochastic kernels that can be considered in a probabilistic control problem, i.e. $\operatorname{supp}(K_x)\subset\mathfrak{O}_x$ must hold. This is discussed in more detail in Section~\ref{sec:robust}.

\section{Markov kernels}
\label{sec:Markov}

Let $(X,\mathscr{X})$ and $(Y,\mathscr{Y})$ be measurable spaces. A Markov kernel from $(X,\mathscr{X})$ to $(Y,\mathscr{Y})$, is a map $K:X\times\mathscr{Y} \rightarrow[0,1]$, such that
\begin{eqnarray}
\label{eq:markov_kernel_1}
    &&K(x,\cdot)~\text{is a probability measure on $\mathscr{Y}$ for all $x\in X$};\\\label{eq:markov_kernel_2}
&&K(\cdot, B)=: K_B:X\rightarrow[0,1]~\text{is $\mathscr{X}$-measurable for all $B\in\mathscr{Y}$}.
\end{eqnarray}
Note that if $X$ is discrete and $\mathscr{X}:=2^X$, then~(\ref{eq:markov_kernel_2}) is always satisfied, since 
$\{K_B(x)\leq r\}\in 2^X$ for all $r\in [0,1]$ and $B\in\mathscr{Y}$.

For a fixed probability measure $\nu\in\Delta(Y)$ (from the space of probability measures on $(Y,\mathscr{Y})$, cf.~Section~\ref{sec:Info-geo}), the constant kernel $K_{\nu}$ is defined as $K_{\nu}(x,B):=\nu(B)$ for all $x\in X$ and $B\in\mathscr{Y}$.
Set $K_x:=K(x,\cdot)$ and $k(x,y):=K(x,\{y\})$ if $\{y\}\in\mathscr{Y}$. 
In the finite/discrete case, it is represented by a stochastic matrix $(K_{ij})$, where
\begin{equation}
K_{ij}\geq 0,\,\,\text{for all $i\in X, j\in Y$, and}\quad\sum_{j\in Y} K_{ij}=1~\text{for all $i\in X$}.
\end{equation}
A Markov kernel determines a conditional probability via $p(y|x):=K(x, \{y\})$. 
We denote (if the context is clear) by $\mathfrak{K}(X,Y)$ the set of all stochastic kernels from $(X,\mathscr{X})$ to $(Y,\mathscr{Y})$.

The semi-direct product $\rtimes:\Delta(X)\times\mathfrak{K}(X,Y)\rightarrow\Delta(X\times Y)$ between probability measures and Markov kernels, yields a probability measure $\pi:=\P\rtimes K$ on the product space $(X\times Y,\mathscr{X}\otimes\mathscr{Y})$. On a base, it is defined by 
\begin{equation}
\label{eq:semidirect_prod}
(\P\rtimes K)(A\times B):=\int_A K(x,B)d\P(x),\quad A\in\mathscr{X}, B\in\mathscr{Y}.
\end{equation}
For a constant kernel $K_{\nu}$ we get the product measure $\P\otimes\nu$, because 
$$
(\P\rtimes K_{\nu})(A\times B)=\int_A K_{\nu}(x,B)d\P(x)=\int_A\nu(B)d\P(x)=\nu(B)\cdot\P(A)=(\P\otimes\nu)(A\times B).
$$
For a finite probability measure $\P=(p_1,\dots,p_m)$ and stochastic matrix $K=(K_{ij})$, we have 
$$
(\P\rtimes K)_{ij}=p_i\cdot K_{ij}.
$$
\begin{figure}
    \centering
    \includegraphics[width=0.7\linewidth]{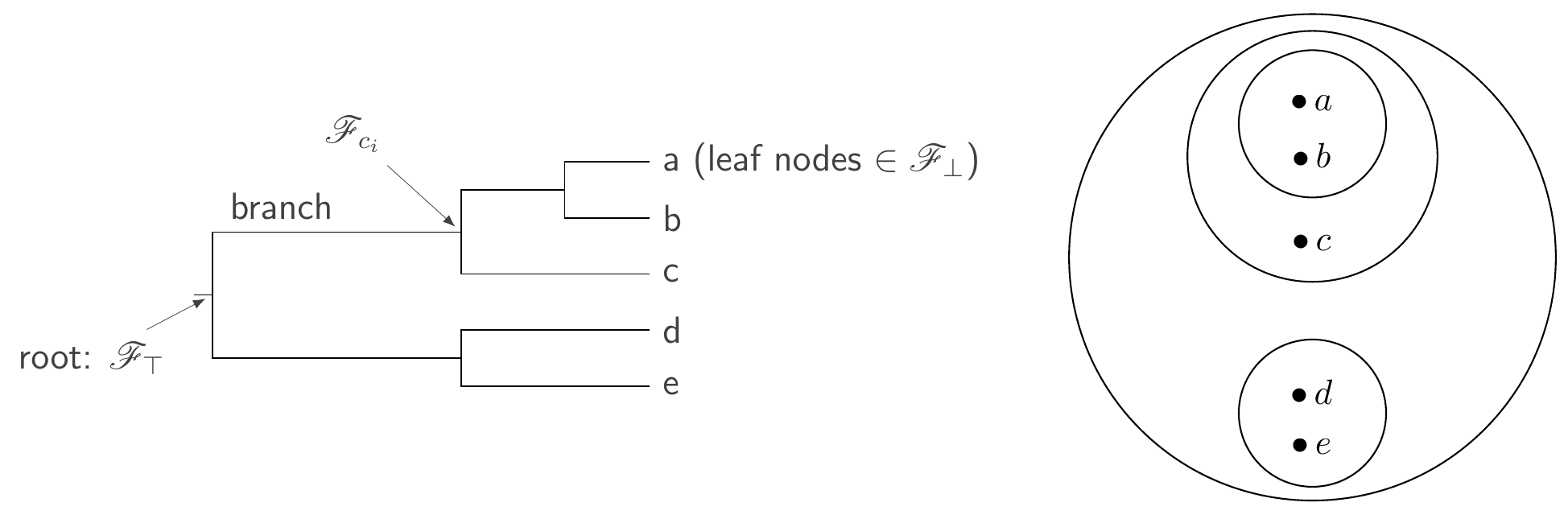}
    \caption{Hierarchical clustering and filtration with corresponding bracketing $\{\{\{\{A\},\{B\}\},\{C\}\},\{\{D\},\{E\}\}\}$. The root node corresponds to the trivial algebra $\mathcal{F}_{\top}$, and the leaf nodes correspond to the power set algebra $\mathcal{F}_{\bot}$.}
    \label{fig:clustering}
\end{figure}
\noindent 
A measurable space $(X, \mathscr{X})$ is said to be standard or Borel if there exists a Borel set $B \in \mathcal{B}(\R)$ such that $(X, \mathscr{X})$ is isomorphic to $(B, \mathcal{B}|_B)$. Every standard measurable space is isomorphic to either the Borel $\sigma$-algebra on the interval $[0,1]$, all finite sets of the form $[1,\dots,n]$, or countable sets, with the discrete $\sigma$-algebra~\cite[p. 11]{ccinlar2011probability}.

Let $\operatorname{pr}_X:X\times Y\rightarrow X$, $\operatorname{pr}_X(x,y):=x$ be the projection onto the $X$-component. The $X$-marginal of a probability measure is given by 
$({\operatorname{pr}_{X}})_*:\Delta(X\times Y)\rightarrow\Delta(X)$, $\pi\mapsto(\operatorname{pr}_X)_*\pi=:\pi_X$. The definition and the properties for the $Y$-projection $\operatorname{pr}_{Y}$ are analogous.  
For $\pi\in\Delta(X\times {\color{blue}Y})$, where ${\color{blue}(Y,\mathscr{Y})}$ is Borel, the `Disintegration Theorem' (\cite[Theorem 2.18]{ccinlar2011probability}) states that there exist $\mu\in\Delta(X)$ and $K\in\mathfrak{K}(X,Y)$ such that $\mu\rtimes K=\pi$. Then the solution is written (symbolically) as 
$$
K=\frac{\pi}{\pi_X},
$$ 
with $\mu=\pi_X$.
In terms of conditional probabilities we have (in components)
$$
\pi(y|x)=\frac{\pi(x,y)}{\pi(x)}.
$$
The push-forward by a Markov kernel $K_*:\Delta(X)\rightarrow\Delta(Y)$, is defined as
\begin{equation}
\label{eq:Markov_map}
   K_*\P(B):=\int_X K(x,B)d\P(x),\quad B\in\mathscr{Y}.
\end{equation}
It gives the $Y$-marginal of $\pi:=\P\rtimes K$, i.e. $K_*\P=\pi_Y$, as shown in the commutative diagram
\begin{equation}
\label{eq:Y_marginal}
\xymatrix{
\Delta(X) \ar[r]^{\!\!\!\!\!\rtimes K} \ar[dr]_{K_*} & \,\,\,\Delta(X\times Y) \ar[d]^{{\operatorname{pr}_{Y}}_*} \\
 & \Delta(Y)
}  
\end{equation}
\begin{lem} 
\label{rem:kernel}
The semi-direct product $\rtimes$ is convex-bilinear. Furthermore, $\mathfrak{K}(X,Y)$ is a convex subset of the Euclidean space $\R^{X\times Y}$. For finite $X,Y$, it is compact.
\end{lem}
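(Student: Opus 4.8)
The plan is to treat the three assertions in turn, each of which reduces to linearity of the integral together with elementary facts about finite measures and finite-dimensional convex sets.

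First, for convex-bilinearity of $\rtimes$, I would fix $\lambda\in[0,1]$ and verify the two convex-linearity identities on measurable rectangles, where the definition (\ref{eq:semidirect_prod}) applies directly. In the first slot, writing $\P=\lambda\P_1+(1-\lambda)\P_2$ and using linearity with respect to the integrating measure gives
$$((\lambda\P_1+(1-\lambda)\P_2)\rtimes K)(A\times B)=\lambda(\P_1\rtimes K)(A\times B)+(1-\lambda)(\P_2\rtimes K)(A\times B).$$
In the second slot, writing $K=\lambda K_1+(1-\lambda)K_2$ and using linearity of the integrand yields the analogous identity. The only point that needs care is that these equalities are so far established only on the rectangles $A\times B$; to upgrade them to identities of measures on all of $\mathscr{X}\otimes\mathscr{Y}$ I would invoke the uniqueness theorem for finite measures: the rectangles form a $\pi$-system generating $\mathscr{X}\otimes\mathscr{Y}$ and containing $X\times Y$, so two probability measures agreeing on them coincide. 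This is really the only genuine obstacle, and it is a standard measure-theoretic step.

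Second, for convexity of $\mathfrak{K}(X,Y)$, I would take $K_1,K_2\in\mathfrak{K}(X,Y)$, set $K:=\lambda K_1+(1-\lambda)K_2$ pointwise, and check that $K$ again satisfies (\ref{eq:markov_kernel_1}) and (\ref{eq:markov_kernel_2}). Condition (\ref{eq:markov_kernel_1}) holds because a convex combination of probability measures on $\mathscr{Y}$ is again a probability measure (nonnegativity and countable additivity are preserved, and $K(x,Y)=\lambda+(1-\lambda)=1$), and condition (\ref{eq:markov_kernel_2}) holds because a linear combination of $\mathscr{X}$-measurable functions is $\mathscr{X}$-measurable. This also legitimises the convex combination $\lambda K_1+(1-\lambda)K_2$ used implicitly in the second slot above. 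In the finite/discrete case this identifies $\mathfrak{K}(X,Y)$ with the stochastic matrices inside $\R^{X\times Y}$, cut out by the closed affine constraints $K_{ij}\geq 0$ and $\sum_{j}K_{ij}=1$, which is manifestly convex.

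Finally, for compactness when $X,Y$ are finite, I would work inside the finite-dimensional Euclidean space $\R^{X\times Y}$ and invoke Heine--Borel. Boundedness is immediate since every entry lies in $[0,1]$, so $\mathfrak{K}(X,Y)\subset[0,1]^{X\times Y}$. Closedness follows because the defining set is the intersection of the closed half-spaces $\{K_{ij}\geq 0\}$ with the closed hyperplanes $\{\sum_{j}K_{ij}=1\}$. Equivalently, one may observe that $\mathfrak{K}(X,Y)\cong\prod_{x\in X}\Delta(Y)$ is a finite product of probability simplices, each compact, hence compact. No part of the compactness argument presents any real difficulty.
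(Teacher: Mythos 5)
Your proof is correct and follows essentially the same route as the paper: bilinearity via linearity of the integral on rectangles $A\times B$, convexity by checking that convex combinations of kernels satisfy the kernel axioms, and compactness for finite $X,Y$ via the identification $\mathfrak{K}(X,Y)\cong\prod_{x\in X}\Delta(Y)$ (the paper invokes compactness of finite products where you equivalently use Heine--Borel). Your explicit appeal to the $\pi$-system uniqueness theorem to upgrade the rectangle identities to equalities of measures on $\mathscr{X}\otimes\mathscr{Y}$ is a welcome point of rigour that the paper leaves implicit.
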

\begin{proof}
For $K,K_1,K_2\in\mathfrak{K}(X,Y)$, $\mu,\nu\in\Delta(X)$ and $t\in[0,1]$, we have $(1-t)K_1+tK_2\in\mathfrak{K}(X,Y)$ and $(1-t)\mu+t\nu\in\Delta(X)$.
Thus
\begin{equation*}
\begin{split}
    ((1-t)(\P\rtimes K_1)+t(\P\rtimes K_2))(A\times B)=(1-t)\int_A K_1(x,B)d\P(x)+t\int_A K_2(y,B)d\P(x)\\
    =\int_A((1-t)K_1(x,B)+tK_2(x,B))d\P(x)=\left(\P\rtimes\left((1-t)K_1+tK_2\right)\right)(A\times B),
\end{split}
\end{equation*}
and
\begin{equation*}
\begin{split}
    ((1-t)(\mu\rtimes K)+t(\nu\rtimes K))(A\times B)=\int_A K(x,B)(1-t)d\mu(x)+\int_A K(y,B)td\nu(x)\\
    =\int_AK(x,B)((1-t)d\mu(x)+td\nu(x))=\left(((1-t)\mu+t\nu)\rtimes K\right)(A\times B).
\end{split}
\end{equation*}
In the finite-dimensional case, $\mathfrak{K}(X,Y)$ is isomorphic to $\prod_{x\in X}\Delta(Y)$, and the product of compact spaces is compact according to the `Tube Lemma'.\footnote{\url{https://en.wikipedia.org/wiki/Tube_lemma}}
\end{proof}

\subsection{Rate distortion}
A triple $(X,K,Y)$, consisting of a finite input alphabet $X$ (source), a finite output alphabet $Y$ (target), and a fixed Markov kernel $K:X\rightarrow Y$, is called a discrete memoryless channel (DMC)~\cite{topsoe1972new,Berger1971,cover1991elements}. 

For $p,q\in\Delta(X)$, $p$ is absolutely continuous w.r.t. $q$, denoted as $p\ll q$, if and only if $\operatorname{supp}(p)\subset\operatorname{supp}(q)$, where $\operatorname{supp}$ denotes the support of the measure.

The Kullback-Leibler divergence $D_{KL}(p||q)$ of two finite-dimensional distributions $p,q\in\Delta_{n}$, is given by
\begin{equation}
D_{KL}(p||q):=
\begin{cases}
      & \sum_{i=0}^n p_i\ln\frac{p_i}{q_i},\quad \text{if $p\ll q$}, \\
      & +\infty,\quad\text{else},
\end{cases}
\end{equation}
with the conventions $0\ln\frac{0}{q}=0$, $p\ln\frac{p}{0}=+\infty$ and $0\ln\frac{0}{0}=0$ in place. It satisfies $D_{KL}(p,q)\geq 0$ and $D_{KL}(p||q)=0$ if and only if  $p=q$,~\cite[p. 18]{amari2016information}.
For the Dirac measure at $i$ and $q\in\Delta^{°}_n$, i.e. a measure with full support (cf.~(\ref{eq:interior_simplex})), we have $D_{KL}(\delta_i||q)=-\ln q_i$, while $D_{KL}(q||\delta_i)=+\infty$, illustrating the asymmetry of the relative entropy. 

The mutual information $I(\P;K)$ between $\P\in\Delta(X)$ and $K\in\mathfrak{K}(X,Y)$ is defined as
\begin{equation}
\label{eq:mutual_info}
    I(\P;K):=D_{\operatorname{KL}}(\P\rtimes K||\P\otimes K_*\P),
\end{equation}
or, equivalently, $I(\pi):=D_{\operatorname{KL}}(\pi||\pi_X\otimes\pi_Y)$, for $\pi:=\P\rtimes K$.
Given random variables  $X\sim p$ and $Y\sim q$, the mutual information\footnote{Let $\pi_{(X,Y)}$ be the joint distribution of $X$ and $Y$. Then $I(X,Y):=D_{\operatorname{KL}}(\pi_{(X,Y)}||p\otimes q)$.} 
satisfies $I(X,Y)\geq0$ and $I(X,Y)=0$ if and only if $X$ and $Y$ are independent, cf.~\cite[p. 27]{cover1991elements} or~\cite[p. 46]{amari2016information}.

The channel capacity $C_K$ of a Markov kernel $K$ is defined as:
\begin{equation}
    C_K:=\max_{\P\in\Delta(X)} I(\P;K).
\end{equation}
\subsection{Renormalisation}
Clustering of data is modelled by sub-$\sigma$-algebras, and more generally by filtrations, cf. Figure~\ref{fig:clustering}. If the initial $\sigma$-algebra is replaced by a coarser one by means of coarse graining, then only events on a larger scale can be observed. This process corresponds to the `emergence of abstractions'~\cite[p. 5]{genewein2015bounded}.  It is equally important to recognise that any form of statistical manipulation can, at best, preserve the same amount of information.

Given coarse graining (surjective) maps $f:[m]\rightarrow[\tilde{m}]$ and $g:[n]\rightarrow[\tilde{n}]$ with $\tilde{m}\leq m$ and $\tilde{n}\leq n$ respectively, then for a stochastic matrix $K=(k_{ij})_{\substack{i\in[m] \\ j\in[n]}}$, the renormalised matrix $\tilde{K}=(\tilde{k}_{\mu\nu})_{\substack{\mu\in[\tilde{m}] \\ \nu\in[\tilde{n}]}}$ is given by
\begin{equation}
\tilde{k}_{\mu\nu}=\frac{1}{\P(f^{-1}(\mu))}\sum_{i\in f^{-1}(\mu)}\sum_{j\in g^{-1}(\nu)}k_{ij}.
\end{equation}
The above shows that in the discrete case, we can change the order of summation without changing the result. Note that the conditioning smears the support of each probability measure in the kernel. We will now discuss the general steps involved in this procedure.

Let $g:Y\rightarrow Z$ be a (surjective) $\mathscr{Y}/\mathscr{Z}$-measurable map and $K\in\mathfrak{K}(\mathscr{X},\mathscr{Y})$. 
The push-forward of $K$ by $g$, denoted $g_*K$, is defined for all $x\in X$ and $A\in\mathscr{Z}$ as
\begin{equation}
(g_*K)(x,A):=K(x,g^{-1}(A)).
\end{equation}
In particular, for every $A\in\mathscr{Z}$, $(g_*K)_{A}(x):=K(x,g^{-1}(A))$ is $\mathscr{X}/\mathcal{B}([0,1])$-measurable, and hence $g_*K\in\mathfrak{K}(\mathscr{X},\mathscr{Z})$.

We consider conditional Markov kernels $\E[K|\mathscr{F}]$ via regular conditional distributions.
Let $(X,\mathscr{X},\P)$ be a probability space, $(Y,\mathscr{Y})$ a Borel space, and $K\in\mathfrak{K}(\mathscr{X},\mathscr{Y})$ a Markov kernel. 
Consider $(X\times Y,\mathscr{X}\otimes\mathscr{Y},\pi)$, the joint probability space, where $\pi:=\P\rtimes K$, and let $\operatorname{pr}_X$ and $\operatorname{pr}_Y$ be the projections onto the $X$ and $Y$ components, respectively.
For $(Z,\mathscr{Z})$, an arbitrary measurable space, let $f:X\rightarrow Z$ be $\mathscr{X}/\mathscr{Z}$-measurable, and define $\xi:X\times Y\rightarrow Z$ as $\xi:=f\circ\operatorname{pr}_X$, i.e. $(x,y)\mapsto f(x)$. Then there exists a regular conditional distribution $f_*^{\P}K\in\mathfrak{K}(\mathscr{Z},\mathscr{Y})$ of $\operatorname{pr}_Y$ given $\xi$~(cf.~\cite[2.18 Theorem]{ccinlar2011probability}), such that (cf.~Figure~\ref{img:reg_conditional_prob})
\begin{equation}
f_*^{\P}K(z,B)=\begin{cases}
      & \pi(\operatorname{pr}_Y^{-1}(B)|\xi=z)\quad\text{if $\pi(\xi^{-1}(z))>0$ }, \\
      & \pi(\operatorname{pr}_Y^{-1}(B))\quad\text{otherwise},
\end{cases}
\end{equation}
holds, for $f_*\P$-almost all $z\in Z$ and for all $B\in\mathscr{Y}$; i.e.
$$
f_*^{\P}K(z,B)\equiv \E[\mathbbm{1}_{\operatorname{pr}_Y^{-1}(B)}|\xi=z]=\frac{1}{f_*\P(z)}\int\limits_{f^{-1}(z)\times Y}\!\!\!\!\!\!\mathbbm{1}_B(y)\,\mathrm{d}\pi(x,y)\pmod{f_*\P}.
$$
The regular conditional probability is therefore given by the disintegration/factorisation of the product measure, i.e. we write symbolically
$$
f_*^{\P}K=\frac{\P\rtimes K}{f_*\P}.
$$
So we have the following commutative diagrams for kernels and measures:
\begin{equation}
\xymatrix{
(X,\mathscr{X})\ar[r]^{K}\ar[dr]_{g_{*}K} & (Y,\mathscr{Y})\ar[d]^{g}&&&(X,\mathscr{X})\ar[r]^K\ar[d]_f&(Y,\mathscr{Y})\,\,\,\text{(Borel)}\\  
& (Z,\mathscr{Z})&&&(Z,\mathscr{Z})\ar[ur]_{f_*^{\P}K}&
}  
\end{equation}

\begin{minipage}{0.4\textwidth}
These maps correspond to the following covariant functors
\begin{eqnarray*}
{\color{red}g_*}&:&\mathfrak{K}({\color{blue}\mathscr{X}},\mathscr{Y})\rightarrow\mathfrak{K}({\color{blue}\mathscr{X}},\mathscr{Z}),\\
 &&K(x,-)\mapsto K(x,{\color{red}g^{-1}}(-))\\
{\color{red}f^{\P}_*}&:&\mathfrak{K}(\mathscr{X},{\color{blue}\mathscr{Y}})\rightarrow\mathfrak{K}(\mathscr{Z},{\color{blue}\mathscr{Y}}),\\
 &&K(-,B)\mapsto (z\mapsto\E_{\color{red}\P}[K_{B}|{\color{red}f}=z])
\end{eqnarray*}
$x\in X$, $z\in Z$, $B\in\mathscr{Y}$, where {\color{red}$\P$} enters as a parameter. 
\end{minipage}%
\begin{minipage}{0.6\textwidth}
\begin{center}
    \includegraphics[width=0.8\textwidth]{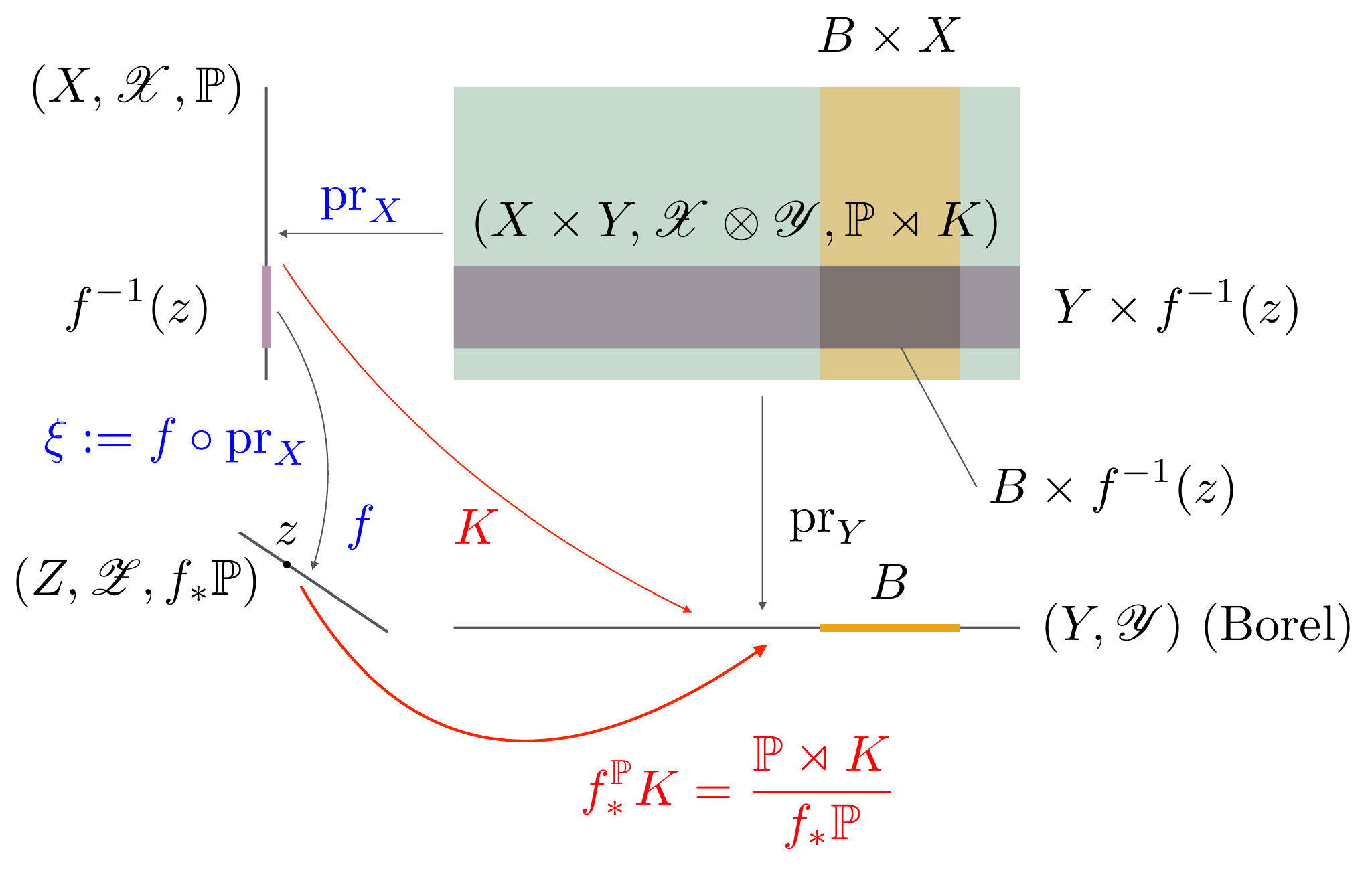}
    \captionof{figure}{Conditional Markov kernel}
    \label{img:reg_conditional_prob}
\end{center}
\end{minipage}

Let us briefly consider the impact of the operations introduced on information. 
The sequence of random variables 
$X\sim p$, $Y\sim K_*p$ and $Z=g\circ Y\sim (g_*K)_*p$ form a Markov chain $X\rightarrow Y\rightarrow g\circ Y$. The `Data Processing Inequality' implies that 
\begin{equation}
\label{eq:data_processing_inequality}
I(\P,K_*\P)\geq I(\P,(g_*K)_*\P),
\end{equation}
or, in a more conventional form, $I(X,Y)\geq I(X,g(Y))$. Therefore, it is impossible to increase the amount of information contained in $X$ by any function of the data~\cite[Corollary to 2.8.1]{cover1991elements}.

Consider the joint probability space $(X\times Y,\mathscr{X}\otimes\mathscr{Y},\P\rtimes K)$. 
Let $(F,\mathscr{F})$ and $(G,\mathscr{G})$ be discrete measurable spaces, $(F\times G,\mathscr{F}\otimes\mathscr{G})$ their product, and 
$f:X\rightarrow F$ and $g:Y\rightarrow G$ are surjective measurable maps.
Then $f\times g:X\times Y\rightarrow F\times G$ is $\mathscr{X}\otimes\mathscr{Y}/\mathscr{F}\otimes\mathscr{G}$-measurable, since $(f\times g)^*(\mathscr{F}\otimes\mathscr{G})=f^*\mathscr{F}\otimes g^*\mathscr{G}\subset\mathscr{X}\otimes\mathscr{Y}$. 
 \begin{lem}
With the above conditions, the push-forward of $\P\rtimes K$ by $f\times g$ to $\mathscr{F}\otimes\mathscr{G}$, satisfies
$$
(f\times g)_*(\P\rtimes K)=f_*\P\rtimes g_*(f_*^{\P}K).
$$
In addition, $g_*$ and $f_*^{\P}$ commute, i.e. $g_*f^{\P}_*K=f_*^{\P}g_*K$, as shown in the diagram:
$$
\xymatrix{
(X,\mathscr{X},\P)\ar[d]_f\ar[r]^{K} & (Y,\mathscr{Y})\ar[d]^{g}\\  
(F,\mathscr{F})\ar[r]& (G,\mathscr{G})
}  
$$
\end{lem}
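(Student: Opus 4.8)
The plan is to verify both identities by testing against product sets, exploiting that $F$ and $G$ are discrete: a measure on $F\times G$ is determined by its values on boxes $A\times B$ with $A\in\mathscr{F}$, $B\in\mathscr{G}$, and a kernel into $G$ is determined $f_*\P$-almost everywhere by the integrals of its singleton values over such $A$. The single tool driving everything will be a ``testing identity'' for the regular conditional distribution, which I would isolate first and then substitute into both claims.

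So first I would record that master identity. For any Markov kernel $\tilde{K}$ from $(X,\mathscr{X})$ into an arbitrary target $(Z,\mathscr{Z})$, the defining property of $f_*^{\P}\tilde{K}$ as a version of $\E_{\P\rtimes\tilde{K}}[\mathbbm{1}_{\operatorname{pr}_Z^{-1}(C)}\mid\xi=z]$, combined with the defining property of conditional expectation on the $\sigma(\xi)$-measurable set $\xi^{-1}(A)$, gives for all $A\in\mathscr{F}$ and $C\in\mathscr{Z}$
$$
\int_A f_*^{\P}\tilde{K}(z,C)\,d(f_*\P)(z)=\int_{\xi^{-1}(A)}\mathbbm{1}_{\operatorname{pr}_Z^{-1}(C)}\,d(\P\rtimes\tilde{K})=\int_{f^{-1}(A)}\tilde{K}(x,C)\,d\P(x).
$$
The two facts I must check here are that $\xi^{-1}(A)=f^{-1}(A)\times Z$ while $\operatorname{pr}_Z^{-1}(C)=X\times C$, so their intersection is $f^{-1}(A)\times C$ and the middle term unfolds by~(\ref{eq:semidirect_prod}); and that $\xi_*(\P\rtimes\tilde{K})=f_*\P$, which holds because $\xi=f\circ\operatorname{pr}_X$ and the $X$-marginal of $\P\rtimes\tilde{K}$ is $\P$.

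For the first identity I would unfold both sides on a box $A\times B$. On the left, $(f\times g)^{-1}(A\times B)=f^{-1}(A)\times g^{-1}(B)$, so by~(\ref{eq:semidirect_prod}) the value is $\int_{f^{-1}(A)}K(x,g^{-1}(B))\,d\P(x)$. On the right, the semi-direct product together with the definition $g_*(f_*^{\P}K)(z,B)=f_*^{\P}K(z,g^{-1}(B))$ gives $\int_A f_*^{\P}K(z,g^{-1}(B))\,d(f_*\P)(z)$, and the master identity with $\tilde{K}=K$ and $C=g^{-1}(B)$ turns this into the very same integral. Hence the two measures agree on all boxes, and by discreteness they coincide.

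For the commutation statement, both $g_*f_*^{\P}K$ and $f_*^{\P}g_*K$ are kernels from $(F,\mathscr{F})$ to $(G,\mathscr{G})$ (the latter being defined since a countable discrete space is standard Borel), so it suffices to compare $\int_A(\,\cdot\,)(z,C)\,d(f_*\P)(z)$ for $A\in\mathscr{F}$, $C\in\mathscr{G}$. For the left-hand kernel, $g_*f_*^{\P}K(z,C)=f_*^{\P}K(z,g^{-1}(C))$, and the master identity with $\tilde{K}=K$ gives $\int_{f^{-1}(A)}K(x,g^{-1}(C))\,d\P(x)=\int_{f^{-1}(A)}(g_*K)(x,C)\,d\P(x)$. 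For the right-hand kernel, the master identity applied directly to $\tilde{K}=g_*K$ produces the identical integral. Since the two integrals agree for every $A$, the two $z$-functions agree $f_*\P$-a.e.\ for each fixed $C$; as $G$ is countable, the union over $c\in G$ of these null sets is again $f_*\P$-null, so the kernels agree $f_*\P$-almost everywhere. I expect the only genuine subtlety to be exactly this bookkeeping of the $f_*\P$-a.e.\ clause, together with the correct identification of the conditioning variable $\xi$ and of $\xi_*(\P\rtimes\tilde{K})$; once the master identity is established, the rest is substitution.
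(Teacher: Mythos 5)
Your proposal is correct and follows essentially the same route as the paper: the paper disintegrates $\pi:=(f\times g)_*(\P\rtimes K)$ via \cite[2.18~Theorem]{ccinlar2011probability}, identifies $\pi_X=f_*\P$, and leaves the rest to ``calculation'' using~(\ref{eq:semidirect_prod}) --- and your master identity plus the box computation is precisely that calculation written out, with the paper's appeal to $\pi_X$-a.s.\ uniqueness of the disintegration replaced by the equivalent direct verification on rectangles and the $f_*\P$-a.e.\ uniqueness of conditional expectation. Your explicit bookkeeping (identifying $\xi_*(\P\rtimes\tilde{K})=f_*\P$, the countable union of null sets over $G$, and the remark that discrete countable $G$ is standard Borel so that $f_*^{\P}g_*K$ exists) fills in details the paper's terse proof omits.
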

\begin{proof}
Let $\pi:=(f\times g)_*(\P\rtimes K)$. By~\cite[2.18~Theorem]{ccinlar2011probability} there exists a $\pi_X$-unique factorisation $\pi=\pi_X\rtimes K_{\pi}$, where $K_{\pi}:=\frac{\pi}{\pi_X}\in\mathfrak{K}(\mathscr{F},\mathscr{G})$. If we unwind the definition of the different quantities, we find that $\pi_X=f_*\P$ and then by calculation and using~(\ref{eq:semidirect_prod}) the rest of the statement is shown.
\end{proof}
Note that for $X$ and $Y$ discrete, the map induced by $f$ and $g$, $f_*^{\P}\circ g_*:\mathfrak{K}(X,Y)\rightarrow\mathfrak{K}(F,G)$ is surjective. This is a consequence of the point-wise application of the following:
\begin{lem}
\label{lem:projection_measure}
Let $(X,2^X)$ and $(Y,2^Y)$ be discrete measurable spaces, equipped with the power set $\sigma$-algebra and $f:X\rightarrow Y$ surjective. Then $f_*:\Delta(X)\rightarrow\Delta(Y)$ is surjective.
\end{lem}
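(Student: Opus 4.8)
The plan is to exploit the elementary duality between surjections and sections: a set map is surjective precisely when it admits a right inverse, and the pushforward operation $(-)_*$ is functorial, so a right inverse of $f$ at the level of maps yields a right inverse of $f_*$ at the level of measures.

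First I would invoke the axiom of choice to pick, for each $y\in Y$, a point $s(y)\in f^{-1}(\{y\})$, which is nonempty by surjectivity of $f$. This defines a section $s:Y\rightarrow X$ with $f\circ s=\operatorname{id}_Y$. Because both spaces carry the power-set $\sigma$-algebra, every map out of $Y$ is automatically $2^Y/2^X$-measurable; in particular $s$ is measurable, so the pushforward $s_*:\Delta(Y)\rightarrow\Delta(X)$ is well defined.

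Next, given an arbitrary target measure $\nu\in\Delta(Y)$, I would set $\mu:=s_*\nu\in\Delta(X)$ and compute $f_*\mu$ using functoriality of the pushforward, $f_*\circ s_*=(f\circ s)_*$. Then
\[
f_*\mu=f_*(s_*\nu)=(f\circ s)_*\nu=(\operatorname{id}_Y)_*\nu=\nu,
\]
so $\nu$ lies in the image of $f_*$; since $\nu$ was arbitrary, $f_*$ is surjective. Concretely, in the countable case this $\mu$ is just the atomic measure $\mu=\sum_{y\in Y}\nu(\{y\})\,\delta_{s(y)}$, which transports the mass $\nu(\{y\})$ onto a single chosen preimage, and one checks directly that for every $B\subset Y$ one has $\mu(f^{-1}(B))=\sum_{y\in B}\nu(\{y\})=\nu(B)$.

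There is essentially no hard step here; the only genuine content is the appeal to the axiom of choice to produce the section $s$, which is exactly where the surjectivity of $f$ enters. The one point worth verifying is that $\mu$ is indeed a probability measure, i.e.\ has total mass one, which holds because $\nu$ does and pushing forward preserves total mass; equivalently $\sum_{y\in Y}\nu(\{y\})=1$ in the explicit atomic description.
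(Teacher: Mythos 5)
Your proof is correct and follows essentially the same route as the paper: the paper's embedding $\iota:Y\hookrightarrow X$ with $\iota(y)\in f^{-1}(y)$ is exactly your section $s$, and the paper's pointwise verification $f_*(\iota_*q)(y)=q(\iota^{-1}(f^{-1}(y)))=q(y)$ is the component-wise form of your functoriality identity $f_*\circ s_*=(f\circ s)_*=\operatorname{id}$. Your explicit remarks on measurability under the power-set $\sigma$-algebra and on the atomic description $\mu=\sum_{y\in Y}\nu(\{y\})\,\delta_{s(y)}$ are fine additions but do not change the argument.
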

\begin{proof}
Let $\iota:Y\hookrightarrow X$ be an embedding such that $\iota(y)\in f^{-1}(y)$ for all $y\in Y$ and $q\in\Delta(Y)$. Then $\iota_*q\in\Delta(X)$ and $f_*(\iota_*q)=q$. In fact, for $y\in Y$ we have $f_*(\iota_*q)(y)=\iota_*q(f^{-1}(y))=q(\iota^{-1}(f^{-1}(y)))=q(y)$.
\end{proof}
For fixed $\mu\in\Delta(X)$ and $\nu\in\Delta(Y)$, define $\mathfrak{K}_{\mu\nu}: =\{K\in\mathfrak{K}(\mathscr{X},\mathscr{Y})~|~K_*\mu=\nu\}$, and similarly for the set $\mathfrak{K}_{\nu\mu}$. 

Given $K\in\mathfrak{K}_{\mu\nu}$, we define its reciprocal $K^{-1}\in\mathfrak{}K_{\nu\mu}$ by disintegration, i.e. 
$$
K^{-1}:=\frac{\mu\rtimes K}{K_*\mu},
$$
which is $\nu$-a.s. unique.
Then the set of couplings $\Gamma(\mu,\nu)$, as a consequence of disintegration, is in bijective correspondence with $\mathfrak{K}_{\mu\nu}$, given by $\pi\mapsto K_{\pi}$, which is $\mu$-a.s. well defined, cf.~\cite[2.18~Theorem]{ccinlar2011probability}. Furthermore, for all $K\in\mathfrak{K}_{\mu\nu}$ and $\kappa\in\mathfrak{K}_{\nu\mu}$ we have $\kappa_*(K_*\mu)=\mu$ and $K_*(\kappa_*\nu)=\nu$. Note also the relationship with congruent Markov kernels~\cite[Theorem 5.2]{ay2017information}.

Bayes' factorisation for conditional probabilities states in components that
$$
\pi(y|x)\pi(x)=\pi(x,y)=\pi(x|y)\pi(y).
$$

\begin{prop}[Bayes' disintegration/factorisation]
Let $(X,\mathscr{X},\mu)$ and $(Y,\mathscr{Y},\nu)$ be Borel. Then for all $\pi\in\Gamma(\mu,\nu)$ there exists an a.s.-unique pair $(K_{\pi}, K_{\pi}^{-1})$ of reciprocal Markov kernels such that  
$$
\mu\rtimes K_{\pi}=\pi=\nu\rtimes K_{\pi}^{-1}.
$$
\end{prop}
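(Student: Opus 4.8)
The plan is to apply the Disintegration Theorem (\cite[2.18~Theorem]{ccinlar2011probability}) twice, once with each factor playing the role of the Borel target, and then to identify the two resulting kernels as reciprocals in the sense defined above. Throughout, $\pi\in\Gamma(\mu,\nu)$ is a coupling, so that $\pi_X=\mu$ and $\pi_Y=\nu$.

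First I would establish the existence of $K_\pi$. Since $(Y,\mathscr{Y})$ is Borel, disintegration yields a $\mu$-a.s. unique $K_\pi\in\mathfrak{K}(\mathscr{X},\mathscr{Y})$ with $\mu\rtimes K_\pi=\pi$, namely $K_\pi=\pi/\pi_X$. The $Y$-marginal diagram~(\ref{eq:Y_marginal}) then gives $(K_\pi)_*\mu=\pi_Y=\nu$, so $K_\pi\in\mathfrak{K}_{\mu\nu}$. For the reverse kernel, let $s:X\times Y\to Y\times X$ be the coordinate swap and set $\tilde\pi:=s_*\pi$; its first marginal is $\pi_Y=\nu$ and its second is $\pi_X=\mu$, so $\tilde\pi\in\Gamma(\nu,\mu)$. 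Because $(X,\mathscr{X})$ is \emph{also} Borel, a second application of disintegration produces a $\nu$-a.s. unique $K_\pi^{-1}\in\mathfrak{K}(\mathscr{Y},\mathscr{X})$ with $\nu\rtimes K_\pi^{-1}=\tilde\pi$, equivalently $\nu\rtimes K_\pi^{-1}=\pi$ after undoing the swap, and $(K_\pi^{-1})_*\nu=\mu$, so $K_\pi^{-1}\in\mathfrak{K}_{\nu\mu}$.

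It remains to check that this $K_\pi^{-1}$ is genuinely the reciprocal of $K_\pi$ in the sense $K^{-1}=\frac{\mu\rtimes K}{K_*\mu}$. The key point is that both kernels disintegrate the \emph{same} measure $\pi$: one has $\mu\rtimes K_\pi=\pi=\nu\rtimes K_\pi^{-1}$, which in components is precisely Bayes' factorisation $\pi(x)\,\pi(y|x)=\pi(x,y)=\pi(y)\,\pi(x|y)$ with $\pi(y|x)=K_\pi(x,\{y\})$ and $\pi(x|y)=K_\pi^{-1}(y,\{x\})$. Substituting $(K_\pi)_*\mu=\nu$ into the defining formula gives $\frac{\mu\rtimes K_\pi}{(K_\pi)_*\mu}=\frac{\pi}{\nu}$, and by the bijective correspondence $\Gamma(\nu,\mu)\leftrightarrow\mathfrak{K}_{\nu\mu}$ together with the a.s.-uniqueness of the factorising kernel, this quotient coincides $\nu$-a.s. with the $K_\pi^{-1}$ constructed above. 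Hence $(K_\pi,K_\pi^{-1})$ is reciprocal and factorises $\pi$ from both sides.

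Uniqueness of the pair is inherited directly from the a.s.-uniqueness clauses of the two disintegrations: $K_\pi$ is $\mu$-a.s. determined by $\mu\rtimes K_\pi=\pi$, and $K_\pi^{-1}$ is $\nu$-a.s. determined by $\nu\rtimes K_\pi^{-1}=\pi$. I expect the only genuinely delicate point to be the reciprocity identification of the third step — namely verifying, through a Fubini/monotone-class argument on the generating $\pi$-system of rectangles $A\times B$, that the swap-disintegration really reproduces the quotient $\pi/\nu$ and not merely some other kernel with the correct marginal; everything else is a direct invocation of disintegration and of the marginal diagram~(\ref{eq:Y_marginal}).
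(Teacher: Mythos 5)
Your proof is correct and follows essentially the same route as the paper, whose entire proof is a double invocation of the Disintegration Theorem~\cite[2.18~Theorem]{ccinlar2011probability} with the roles of $X$ and $Y$ exchanged. You merely make explicit what the paper leaves implicit — the marginal identification $(K_\pi)_*\mu=\nu$ via the diagram~(\ref{eq:Y_marginal}) and the check that the swap-disintegration coincides $\nu$-a.s.\ with the defined reciprocal $\frac{\mu\rtimes K_\pi}{(K_\pi)_*\mu}$ — both of which follow from the a.s.-uniqueness clause of the disintegration.
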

\begin{proof}
By assumption, both measurable spaces are Borel. Then by~\cite[2.18~Theorem]{ccinlar2011probability} and by changing the role of $X$ and $Y$ respectively, the claim follows.
\end{proof}

Renormalisation reduces not only dimensions but also divergence, as follows from the principle of `Information Monotonicity',
$$
D_{\operatorname{KL}}(\P\rtimes K||\P\otimes K)\geq D_{\operatorname{KL}}(f_*\P\rtimes g_*(f_*^{\P}K)||f_*\P\otimes g_*(f_*^{\P}K)),
$$
with equality, if the statistic is sufficient~\cite[p. 53 et seq.]{amari2016information}.

Let $U:X\times Y\rightarrow\R$ be a $(\mathscr{X}\otimes\mathscr{Y})$-measurable (utility) function. The conditional expectation w.r.t. $\mathscr{F}\otimes \mathscr{G}$ is then given by $\E[U|f^{*}\mathscr{F}\otimes g^*\mathscr{G}]$. In particular, the coarse-grained version of~(\ref{eq:constrained_information}) is
\begin{equation}
\label{eq:coarse_grained_eq}
\argmax_{\tilde{K}\in\mathfrak{K}(\tilde{X},\tilde{Y})}\E_{f_*\P\rtimes \tilde{K}}[\E[U|\mathscr{F}\otimes\mathscr{G}]]-\frac{1}{\beta} D_{\operatorname{KL}}(f_*\P\rtimes \tilde{K}||f_*\P\otimes \tilde{K}).
\end{equation}
The expression~(\ref{eq:coarse_grained_eq}) is well defined, because by~Lemma~\ref{lem:projection_measure} we have surjectivity.

\section{Some information geometry}
\label{sec:Info-geo}
Let us recall some facts from information geometry that we will use. Standard references are Amari~\cite{amari2016information} and Ay et al.~\cite{ay2017information}. Pistone~\cite{pistone2020information} provides an introduction to the information geometry of the probability simplex.

For a set $X$ with $n+1$ elements, the set of all probability functions $p:X\rightarrow[0,1]$, denoted as $\Delta(X)$, forms an $n$-dimensional simplex $\Delta_n$, i.e.  
\begin{equation}
    \Delta(X):=\left\{p=(p_0,p_1,\dots,p_n)~|~\sum_{i=0}^n p_i=1,\, p_i\geq0, \forall i\right\}.
\end{equation}
The $p_i$ are the barycentric coordinates of $p$ with respect to the canonical base $\{\delta_i\}_{i=1}^n$, where $\delta_i$ is the Dirac supported measure at vertex $i$.

The interior of $\Delta(X)$, denoted by $\Delta^{\circ}(X)$, is given by 
\begin{equation}
\label{eq:interior_simplex}
    \Delta^{\circ}(X):=\left\{p=(p_0,p_1,\dots,p_n)~|~\sum_{i=0}^n p_i=1,\, p_i>0, \forall i\right\}.
\end{equation}

Let $\mathscr{S}_0(X):=\{v:X\rightarrow\R~|~\sum_{x\in X}v(x)=0\}$ be the $\R$-algebra of the functions on $X$ with mean zero.
The tangent space $T_p\Delta^{\circ}(X)$ at $p$ is given by
\begin{equation}
   T_p\Delta^{\circ}(X)=\{p\}\times\mathscr{S}_0(X),
\end{equation} 
and the collection of all tangent spaces forms the tangent bundle $T\Delta^{\circ}(X)$.

The Bregman-Voronoi ball~\cite{boissonnat2010bregman} $B_r(q)$, centred at $q$ and of radius $r$, is defined as
    \begin{equation}
    \label{eq:Bregman_ball}
        B_r(q):=\{p\in\Delta(\Omega)~|~D_{\operatorname{KL}}(p||q)\leq r\}.
    \end{equation}
It is a convex set~\cite[Thm. 2.7.2]{cover1991elements}. The Bregman sphere of radius $r$ is given by
    \begin{equation}
    \label{eq:Bregman_sphere}
        \partial B_r(q):=\{p\in\Delta(Y)~|~D_{\operatorname{KL}}(p||q)=r\}.
    \end{equation}

Given two points $p,q\in\Delta^{\circ}(X)$,\footnote{It should be noted that for the $e$-geodesic to be well-defined, the points in question must be interior points.} there exist two types of geodesics curves connecting them~\cite[p. 81]{ay2017information}.

The $m$-geodesic $\gamma^{(m)}_{p,q}:[0,1]\rightarrow\Delta^{\circ}(X)$ is defined as
$$
\gamma^{(m)}_{p,a}(t)=(1-t)p+tq,\quad t\in[0,1].
$$
The $e$-geodesic $\gamma^{(e)}_{p,q}:[0,1]\rightarrow\Delta^{\circ}(X)$ is defined as
$$
\ln\gamma^{(e)}_{p,q}(t)=(1-t)\ln p+t\ln q-\psi(t),\quad t\in[0,1],
$$
with $\ln p$ defined point-wise for each $x\in X$. The $t$-dependent normalisation factor $\psi(t)$ transforms $\gamma^{(e)}_{p,q}(t)$ into a probability mass function, with the factor given by $$
\psi(t):=\ln\int_{X}p(x)^{1-t}q(x)^tdx.
$$
A subset $S\subset\Delta(X)$ is $m$-convex if, for all $p,q\in S$ and $t\in[0,1]$, $\gamma_{p,q}^{(m)}(t)\in S$. It is $e$-convex if, for all $p,q\in S$ and $t\in[0,1]$, $\gamma^{(e)}_{p,q}(t)\in S$, cf.~\cite[p. 46]{hino2024geometry}. 
\begin{lem}
Every $m$- or $e$-convex set $S\subset\Delta(X)$ is simply connected.
\end{lem}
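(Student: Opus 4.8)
The plan is to reduce both cases to the elementary facts that a star-shaped subset of a Euclidean space is contractible, and that a contractible space is simply connected. Since simple connectedness is a topological invariant, it suffices in each case to exhibit a homeomorphism carrying $S$ onto a convex (hence star-shaped) subset of some $\R^n$.

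For the $m$-convex case this is immediate. The mixture geodesic $\gamma^{(m)}_{p,q}(t)=(1-t)p+tq$ is precisely the affine segment, so an $m$-convex set is a convex subset of the affine hyperplane $\{\sum_i p_i=1\}\cong\R^n$ in which $\Delta(X)$ sits. Any convex set is star-shaped with respect to each of its points: the straight-line homotopy $H(x,s):=(1-s)x+sx_0$ contracts it onto a chosen $x_0\in S$, so $S$ is contractible and therefore simply connected.

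For the $e$-convex case the point is that the $e$-geodesics straighten out in the natural (exponential) coordinates. Writing $X=\{0,1,\dots,n\}$ and working on $\Delta^{\circ}(X)$ (as required for the $e$-geodesic to be defined), I would introduce the coordinate map
\[
\Phi:\Delta^{\circ}(X)\rightarrow\R^n,\qquad \Phi(p)_i:=\ln\frac{p_i}{p_0},\quad i=1,\dots,n,
\]
which is a diffeomorphism, with inverse the softmax $\theta\mapsto\bigl(e^{\theta_i}/\sum_j e^{\theta_j}\bigr)$ where $\theta_0:=0$. The key computation is that $\Phi$ sends the $e$-geodesic to a straight segment: applying $\Phi$ to $\ln\gamma^{(e)}_{p,q}(t)=(1-t)\ln p+t\ln q-\psi(t)$ and taking the difference of the $i$-th and $0$-th log-coordinates, the normalisation term $\psi(t)$ cancels, leaving $\Phi(\gamma^{(e)}_{p,q}(t))_i=(1-t)\Phi(p)_i+t\Phi(q)_i$. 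Hence $\Phi$ maps $e$-geodesics to affine segments, so $S$ is $e$-convex if and only if $\Phi(S)$ is convex in $\R^n$. As in the $m$-case, $\Phi(S)$ is then contractible, and since $\Phi$ restricts to a homeomorphism $S\to\Phi(S)$, $S$ is simply connected.

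The only point requiring care, and the main (though mild) obstacle, is the $e$-convex case: one must verify that $\Phi$ is genuinely a homeomorphism onto $\R^n$ and that the cancellation of $\psi(t)$ is legitimate, which is exactly what forces the standing assumption that $S\subset\Delta^{\circ}(X)$ consists of interior points. Once the coordinate change is in place, simple connectedness transfers along the homeomorphism with no further work.
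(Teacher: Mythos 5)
Your proof is correct, and it takes a recognisably different route from the paper's, even though the two are secretly the same contraction. The paper proves the $e$-case by writing down an explicit ``logarithmic homotopy'' $(s,t)\mapsto(1-t)\ln p+t\ln\gamma(s)-\psi(s,t)$ that slides a given loop $\gamma$ to the basepoint $p$ along $e$-geodesics, and dismisses the $m$-case as analogous; in your logit chart $\Phi(p)_i=\ln(p_i/p_0)$ that homotopy is exactly the straight-line homotopy $(1-s)\Phi(\gamma)+s\Phi(p)$, since $\psi$ cancels in the coordinate differences. What your packaging buys: (i) a strictly stronger conclusion --- contractibility of $S$, not just triviality of $\pi_1$ --- and, incidentally, you do not need to argue separately that $S$ is path-connected, which the paper's loop-contraction implicitly presupposes; (ii) the continuity and well-definedness of the contraction come for free from $\Phi$ being a diffeomorphism $\Delta^{\circ}(X)\to\R^n$, whereas the paper's formula requires one to check by hand that $(s,t)\mapsto\psi(s,t)$ is continuous and that the homotopy stays in $S$ (which is precisely $e$-convexity, used silently there); (iii) your treatment of the $m$-case by ambient affine convexity is the honest version of the paper's ``analogously''. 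You are also right to flag the interiority caveat: as literally stated the lemma allows $S\subset\Delta(X)$, but the $e$-geodesic is only defined between interior points (the paper's own footnote), and without the standing assumption $S\subset\Delta^{\circ}(X)$ the $e$-convexity condition can hold vacuously for sets containing boundary points (e.g.\ a pair of vertices), for which the conclusion fails; both your proof and the paper's need this assumption, and yours makes it explicit.
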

\begin{proof} 
Let $\gamma:\mathbb{S}^1\rightarrow S$ be a continuous closed curve in $S$ with $p:=\gamma(0)$.
Consider the logarithmic homotopy
$\mathbb{S}^1\times[0,1]\rightarrow S$, $(s,t)\mapsto (1-t)\ln p+t\ln\gamma(s)-\psi(s,t)$
where $\psi(s,t):=\ln\int_Xp(x)^{1-t}\gamma(s)^tdx$. 
The $m$-connected case is shown analogously.
\end{proof}
A curve $\gamma(t)$ is orthogonal to the set $S$, if there exists a $\tau_0\in\R$ such that the tangent vector $\dot{\gamma}(\tau_0)$ is orthogonal to $T_{\gamma(\tau_0)}S$ (w.r.t. to the Fisher metric).
\subsection{Exponential families}
An exponential family is a statistical manifold of the form
\begin{equation*}
\mathscr{M}_{\exp}:=\left\{p(x;\theta)=c(x)\cdot\exp\left(\sum_{i=1}^n\theta_i t_i(x)-\psi(\theta)\right)~|~\theta=(\theta_1,\dots,\theta_n)\in\boldsymbol{\theta}\right\}.
\end{equation*}
It consists, with $x$ a random variable, of linearly independent functions $t_i(x)$ (observables), a positive function $c(x)$, $\boldsymbol{\theta}\subset\R^n$ the parameter set and $\psi(\theta)$ the normalisation factor (partition function) necessary to obtain a probability measure.

Note that the partition function $\psi(\theta)$ is convex in $\theta$, cf.~\cite{amari2016information,ay2017information}. 
Furthermore, every exponential manifold is $e$-convex, cf. e.g.~\cite{hino2024geometry}; although they can also be $m$-convex~\cite[Theorem 2.4]{ay2017information}. In Section~\ref{sec:Infogeo_Markov_kernels} we will encounter a simultaneously $e$- and $m$-convex set.

The Boltzmann-Gibbs distribution~(\ref{eq:Boltzmann_Gibbs}) is an example of a one-parameter exponential family, cf.~\cite{amari2016information,ay2017information,naudts2009q}. 

\subsection{Information geometry connected to Markov kernels}
\label{sec:Infogeo_Markov_kernels}
In this section we focus exclusively on finite-dimensional spaces.

For $\P\in\Delta(X)$, $\nu\in\Delta(Y)$, $K_C\in\mathfrak{K}(X,Y)$ and $r\geq0$, define the following subsets of $\Delta(X\times Y)$: 
\begin{eqnarray*}
    \mathscr{M}_{\P}&:=&\{\P\rtimes K~|~K\in\mathfrak{K}(X,Y)\},\\
    \mathscr{M}_{K_C}&:=&\{\mu\rtimes K_C~|~\mu\in\Delta(X)\},\\
    \mathscr{E}_{\perp\!\!\!\perp}&:=&\{\mu\otimes\nu~|~\mu\in\Delta(X), \nu\in\Delta(Y)\},\\\mathscr{E}_r&:=&\{\pi\in\Delta(X\times Y)~|~I(\pi_X,\pi_Y)=r\},\\
    \Gamma(\P,\nu)&:=&\{\pi\in\Delta(X\times Y)~|~\pi_X=\P, \pi_Y=\nu\}.
\end{eqnarray*}
Recall that $\pi\in\Delta(X\times Y)$ is called a coupling of $\mu\in\Delta(X)$ and $\nu\in\Delta(Y)$ if $({\operatorname{pr}_X})_*\pi=\mu$ and $({\operatorname{pr}_Y})_*\pi=\nu$.
\begin{lem}
The subset of all product measures $\mathscr{E}_{\perp\!\!\!\perp}$ is $e$-convex, i.e. an exponential family. For $r>0$, the hypersurface of constant mutual information $\mathscr{E}_r$ consists of two connected components. The intersection $\mathscr{E}_0(\P):=\mathscr{E}_{\perp\!\!\!\perp}\cap\mathscr{M}_{\P}$ is both $e$- and $m$-convex. 
\end{lem}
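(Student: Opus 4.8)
The plan is to treat the three assertions separately; the first and third reduce to one and the same factorisation computation, while the second is the genuinely topological claim. For the $e$-convexity of $\mathscr{E}_{\perp\!\!\!\perp}$ I would show directly that the $e$-geodesic between two product measures is again a product. Take $p=\mu_0\otimes\nu_0$ and $q=\mu_1\otimes\nu_1$ in $\Delta^\circ(X\times Y)$. Because $\ln p(x,y)=\ln\mu_0(x)+\ln\nu_0(y)$ (and similarly for $q$), the exponent $(1-t)\ln p+t\ln q$ splits additively in the $x$- and $y$-variables; the only thing to verify is that the normaliser factorises,
$$
\psi(t)=\ln\sum_{x,y}p(x,y)^{1-t}q(x,y)^t=\ln\Big(\sum_x\mu_0(x)^{1-t}\mu_1(x)^t\Big)+\ln\Big(\sum_y\nu_0(y)^{1-t}\nu_1(y)^t\Big)=:\psi_X(t)+\psi_Y(t).
$$
Hence $\gamma^{(e)}_{p,q}(t)=\tilde\mu_t\otimes\tilde\nu_t$, where $\tilde\mu_t,\tilde\nu_t$ are the $e$-geodesics in $\Delta(X)$ and $\Delta(Y)$ joining the factors; both are probability vectors, so the geodesic stays in $\mathscr{E}_{\perp\!\!\!\perp}$. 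Equivalently, the full-support product measures are the exponential family with linearly independent sufficient statistics $\{\mathbbm{1}[x=i]\}\cup\{\mathbbm{1}[y=j]\}$ and partition function $\psi_X+\psi_Y$, so $e$-convexity also follows from the general fact that exponential families are $e$-convex.

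For $\mathscr{E}_0(\P)$ I would first identify the set: a product $\mu\otimes\nu$ lies in $\mathscr{M}_{\P}$ iff its $X$-marginal is $\P$, i.e. $\mu=\P$, so $\mathscr{E}_0(\P)=\{\P\otimes\nu:\nu\in\Delta(Y)\}$ is the image of $\Delta(Y)$ under the embedding $\nu\mapsto\P\otimes\nu$. Since this embedding is affine in $\nu$ by convex-bilinearity of $\rtimes$ (Lemma~\ref{rem:kernel}), it carries straight segments to straight segments, giving $m$-convexity. For $e$-convexity I would rerun the computation above with $\mu_0=\mu_1=\P$: now $\sum_x\P(x)^{1-t}\P(x)^t=\sum_x\P(x)=1$, so $\psi(t)=\psi_Y(t)$ and $\gamma^{(e)}(t)=\P\otimes\tilde\nu_t\in\mathscr{E}_0(\P)$. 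Thus both geodesic structures on $\mathscr{E}_0(\P)$ are simply those of $\Delta(Y)$ transported by $\nu\mapsto\P\otimes\nu$.

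The two-component claim for $\mathscr{E}_r$ is the part requiring care. Since $I(\pi)=D_{\operatorname{KL}}(\pi\|\pi_X\otimes\pi_Y)$ vanishes exactly on $\mathscr{E}_{\perp\!\!\!\perp}$, for $r>0$ the level set is disjoint from the independence locus. I would fibre $\mathscr{E}_r$ over the connected base $\Delta^\circ(X)\times\Delta^\circ(Y)$ of marginal pairs via $\pi\mapsto(\pi_X,\pi_Y)$. On a fixed fibre $\Gamma(\mu,\nu)$ the reference $\mu\otimes\nu$ is constant, so $I=D_{\operatorname{KL}}(\cdot\|\mu\otimes\nu)$ is strictly convex on the transportation polytope $\Gamma(\mu,\nu)$, with a unique minimiser of value $0$ at the product coupling in its relative interior. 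Writing $|X|=m+1,\ |Y|=n+1$, for $r$ below the fibre's maximum the level set $\{I=r\}\cap\Gamma(\mu,\nu)$ is the boundary of a convex body around that minimiser, a topological sphere $S^{mn-1}$. In the binary case $m=n=1$ this is $S^0$, and globally the two branches are separated by the continuous invariant $\delta(\pi):=\pi_{00}\pi_{11}-\pi_{01}\pi_{10}$, which vanishes precisely on products and hence nowhere on $\mathscr{E}_r$; this gives the clopen decomposition $\mathscr{E}_r=(\mathscr{E}_r\cap\{\delta>0\})\sqcup(\mathscr{E}_r\cap\{\delta<0\})$ into two pieces, nonempty and interchanged by the column-swap involution $(\pi_{i0},\pi_{i1})\mapsto(\pi_{i1},\pi_{i0})$, which preserves $I$ and reverses $\operatorname{sign}\delta$.

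The main obstacle is upgrading ``at least two'' to ``exactly two'', i.e. showing each sign class is connected. I would realise each class as the graph of a continuous section: along the $\delta>0$ half of every fibre, $I$ increases strictly monotonically from $0$ at $\mu\otimes\nu$ to its value at the extreme coupling, so $I=r$ has a unique solution there whenever that extremal value exceeds $r$. Connectedness of the class then reduces to connectedness of the base region $\{(\mu,\nu):\max_{\delta>0}I\geq r\}$, a superlevel set of a continuous function on the convex base; establishing this (via quasiconcavity in the marginals, or a deformation retraction toward the uniform marginals) is the delicate point on which the whole count rests. I would also note that this argument pins the phenomenon to $mn=1$: for $mn\geq2$ each fibre contributes the connected sphere $S^{mn-1}$ and no global sign invariant exists, so $\mathscr{E}_r$ becomes connected — the ``two components'' is exactly the $S^0$ effect of the binary setting.
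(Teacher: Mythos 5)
Your treatment of the first and third claims is correct and is essentially the paper's own proof: the paper performs exactly your factorisation $\psi(t)=\psi_X(t)+\psi_Y(t)$ to see that the $e$-geodesic between two product measures stays in $\mathscr{E}_{\perp\!\!\!\perp}$, and for $\mathscr{E}_0(\P)$ it runs the same computation with constant kernels, obtaining $\rho_t=\P\otimes Q_t$ for $e$-convexity and $m_t(x,y)=p(x)\left((1-t)q_1(y)+tq_2(y)\right)$ for $m$-convexity; your repackaging of the $m$-convexity via the affine embedding $\nu\mapsto\P\otimes\nu$ (convex-bilinearity of $\rtimes$) is a harmless variant of the same computation.

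The second claim is where the comparison is substantive, because the paper does not really prove it: its entire argument is the sentence that connectivity ``follows from the geometric characterisation'' of $I(\pi)$ as the divergence from the $m$-projection $\pi_X\otimes\pi_Y$ in~(\ref{eq:mini_mutual_info}) --- implicitly, that $\mathscr{E}_r$ lies on the two sides of the independence locus, as in the $\Delta(2\times 2)$ pictures of Figure~\ref{fig:combined2}. Your fibration over marginal pairs is a genuinely different and more honest route, and it exposes two things the paper glosses over. First, your separation argument (the sign of $\delta:=\pi_{00}\pi_{11}-\pi_{01}\pi_{10}$, nonvanishing on $\mathscr{E}_r$, with the column-swap involution preserving $I$ and exchanging the sign classes) establishes only \emph{at least} two components; the upgrade to \emph{exactly} two --- connectedness of each sign class via a continuous section over the region of marginal pairs where the fibrewise maximum of $I$ exceeds $r$ --- is, as you yourself flag, not carried out, so your proof of this part remains incomplete at precisely the step the paper skips. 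Second, and this is a real observation absent from the paper: the dimension count is right. With $|X|=m+1$, $|Y|=n+1$, the independence locus has codimension $mn$ in $\Delta(X\times Y)$ and the fibre level sets are (possibly truncated) boundaries of convex bodies, spheres $S^{mn-1}$ in the clean range; for $mn\geq 2$ no sign invariant exists and the complement of $\mathscr{E}_{\perp\!\!\!\perp}$ is connected, so the separation mechanism disappears entirely and the two-component statement can only be expected in the binary case $|X|=|Y|=2$ --- the case the paper actually draws. (Your side remark that $\mathscr{E}_r$ is then connected still needs care where the convex sublevel body is truncated by the boundary of the transportation polytope, but the failure of the lemma's stated generality does not depend on that.) In short: parts one and three are correct and match the paper; for part two, the paper offers no proof, your attempt identifies the correct structure but has one acknowledged gap, and your diagnosis that the statement implicitly requires $|X|=|Y|=2$ appears to be correct and should be recorded as a hypothesis of the lemma.
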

Note that $\mathscr{M}_{\P}$, with the exception of $\mathscr{E}_0(\P)$, is not $e$-convex, as can be shown by considering examples, cf.~Figure~\ref{fig:e_geodesics}.

\begin{proof}
Consider two product measures $\rho_1:=p_1\otimes q_1$ and $\rho_2:=p_2\otimes q_2$ in $\mathscr{E}_{\perp\!\!\!\perp}$. In this case, the partition function can be expressed as the sum of two separate partition functions
$$
\psi(t)=\psi_X(t)+\psi_Y(t)=\ln\int_{X\times Y}p_1(x)^{1-t}p_2(x)^t q_1(y)^{1-t}q_2(y)^tdxdy.
$$
The first claim now follows from
$$
\ln \rho_t(x,y)=(1-t)\ln p_1(x)+t\ln p_2(x)-\psi_X(t)+(1-t)\ln q_1(y)+t\ln q_2(y)-\psi_Y(t).
$$
The statement about the connectivity follows from the geometric characterisation, as further discussed in~(\ref{eq:mini_mutual_info}).

For the last statement, recall that for a constant kernel $K_{\nu}$, $\nu\in\Delta(Y)$, we have $\P\rtimes K_{\nu}=\P\otimes\nu$. Consider, $\ln \rho_t(x,y):=(1-t)\ln p(x)\mu(y)+t\ln p(x)\nu(y)-\psi_Y(t)$, with $\psi_Y(t):=\ln\int_Y \mu(y)^{1-t}\nu(y)^tdy$. Then, 
\begin{eqnarray*}
    \ln \rho_t(x,y)&=&(1-t)\ln p(x)+t\ln p(x)+(1-t)\ln\mu(y)+t\ln\nu(y)-\psi_Y(t)\\
    &=&\ln p(x)+(1-t)\ln \mu(y)+t\ln\nu(y)-\psi_Y(t).
\end{eqnarray*}
Hence $\rho_t(x,y)=p(x)\cdot\mu(y)^{1-t}\nu(y)^t e^{-\psi_Y(t)}$, which is a product measure on $X\times Y$, of the form $\P\otimes Q_t$, with $Q_t(y):=\mu(y)^{1-t}\nu(y)^t e^{-\psi_Y(t)}\in\Delta(Y)$, which shows $e$-convexity.  

From $m_t(x,y)=(1-t)p(x)q_1(y)+tp(x)q_2(y)=p(x)((1-t)q_1(y)+tq_2(y))$ we obtain $m$-convexity.

In summary, $\mathscr{E}_{0}(\P)$ is both $e$- and $m$-convex.
\end{proof}

% Parameters for the figures
%\footnote{In Figure~\ref{fig:tetrahedron_independent} the channel is $K_C:=\left(\begin{array}{cc}1/4 & 3/4 \\2/5 & 3/5\end{array}\right)$, the utility matrix $U:=\left(\begin{array}{cc}7 & 1 \\2 & 10\end{array}\right)$ and $\P:=(0.7,0.3)$.
%}
\begin{figure}
\centering
\begin{subfigure}{0.49\textwidth}
\centering
\includegraphics[width = \textwidth]{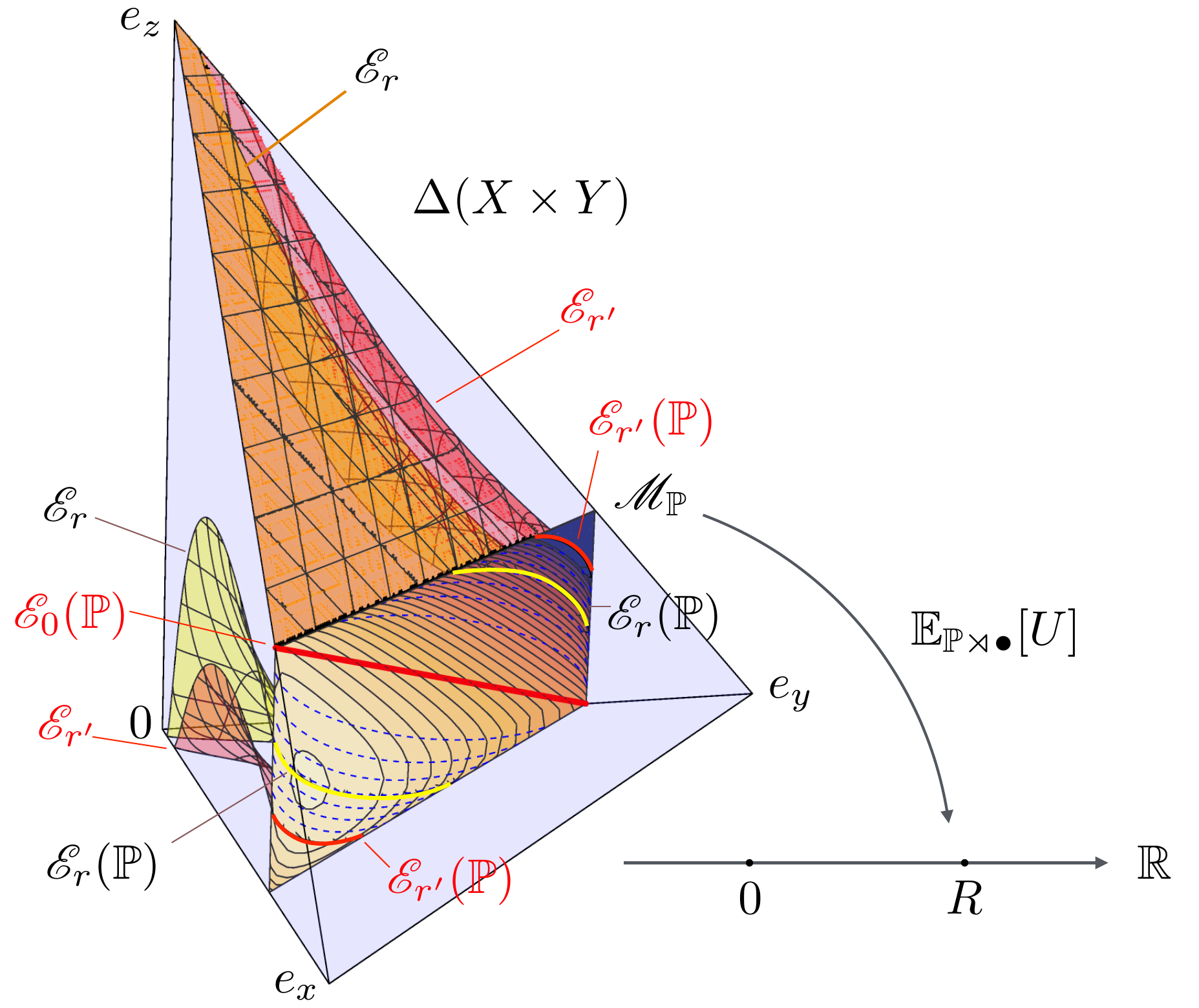}
\caption{Information foliatation}
\label{fig:Information foliatation}
\end{subfigure}
\begin{subfigure}{0.46\textwidth}
\centering
\includegraphics[width = \textwidth]{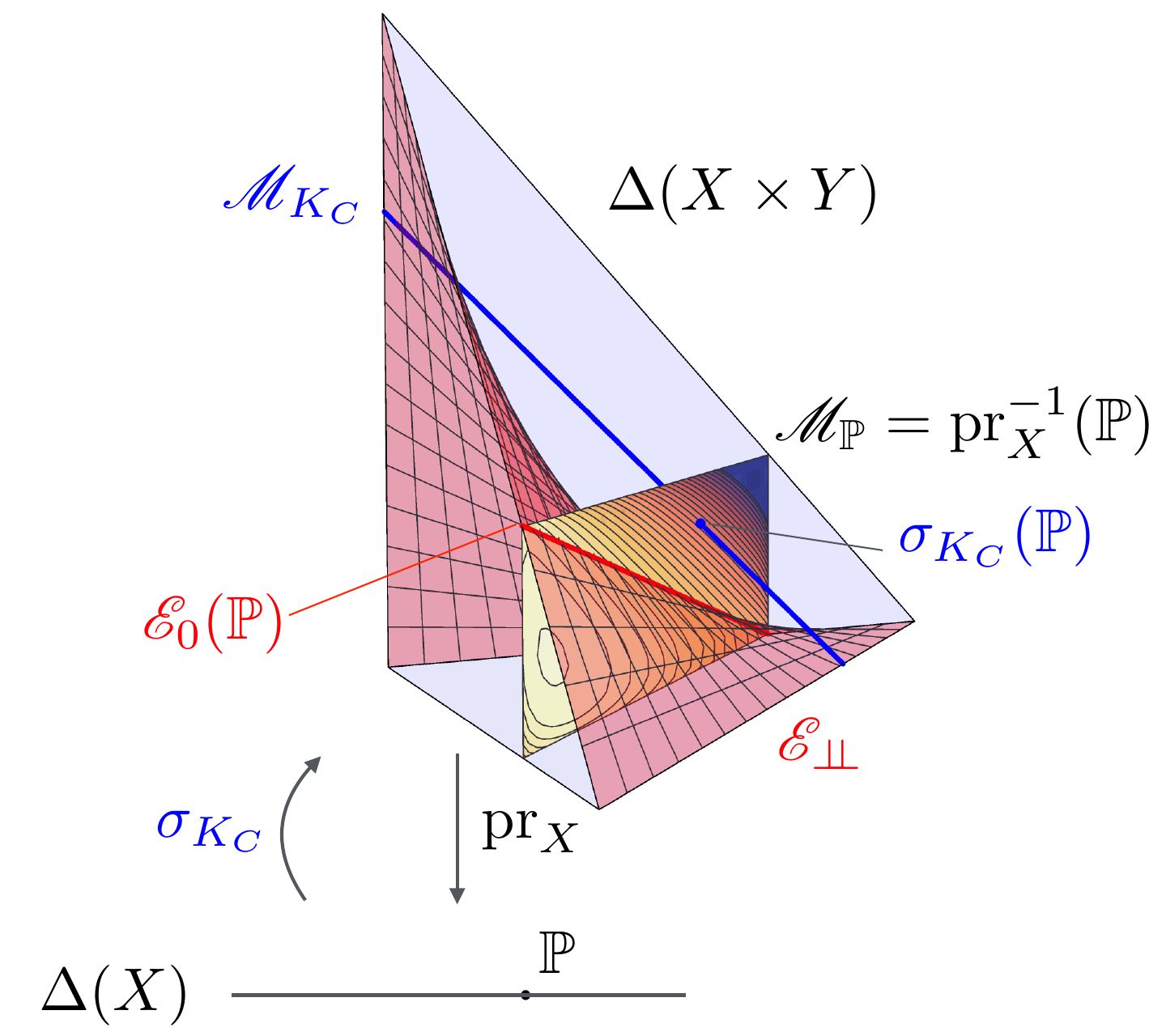}
\caption{Rate distortion}
\label{fig:Rate distortion}
\end{subfigure}
\caption{Left panel~\ref{fig:Information foliatation}: Submanifolds $\mathscr{M}_{\P}$,  ${\color{red}\mathscr{E}_r}$ and ${\color{red}\mathscr{E}_{r'}}$, $r<r'$, of $\Delta(X\times Y)$. Black contour lines surround the coloured areas corresponding to the different values of the free energy difference $F_{\beta}:=\E_{\pi}[U]-\frac{1}{\beta}I(\pi)$. 
The {\color{blue} blue dashed} lines show contours of constant mutual information. Right panel~\ref{fig:Rate distortion}: Submanifolds $\mathscr{M}_{\P}$ (fibre over $\P$),  ${\color{red}\mathscr{E}_{\perp\!\!\!\perp}}$ (exponential family of product measures) and {\color{red}$\mathscr{E}_0(\P)$} of $\Delta(X\times Y)$. ${\color{red}\mathscr{E}_0(\P)}=\mathscr{M}_{\P}\cap{\color{red}\mathscr{E}_{\perp\!\!\!\perp}}$. Section ${\color{blue}\sigma_{K_C}}:\Delta(X)\rightarrow\Delta(X\times Y)$, $\mu\rightarrow\mu\rtimes K_C$ induced by channel $K_C$, with image ${\color{blue}\mathscr{M}_{K_C}}$, and ${\color{blue}\sigma_{K_C}(\P)}:=\mathscr{M}_{\P}\cap{\color{blue}\mathscr{M}_{K_C}}$.}
\label{fig:combined1}
\end{figure}
\begin{figure}
\centering
\begin{subfigure}{0.33\textwidth}
\centering
\includegraphics[width = \textwidth]{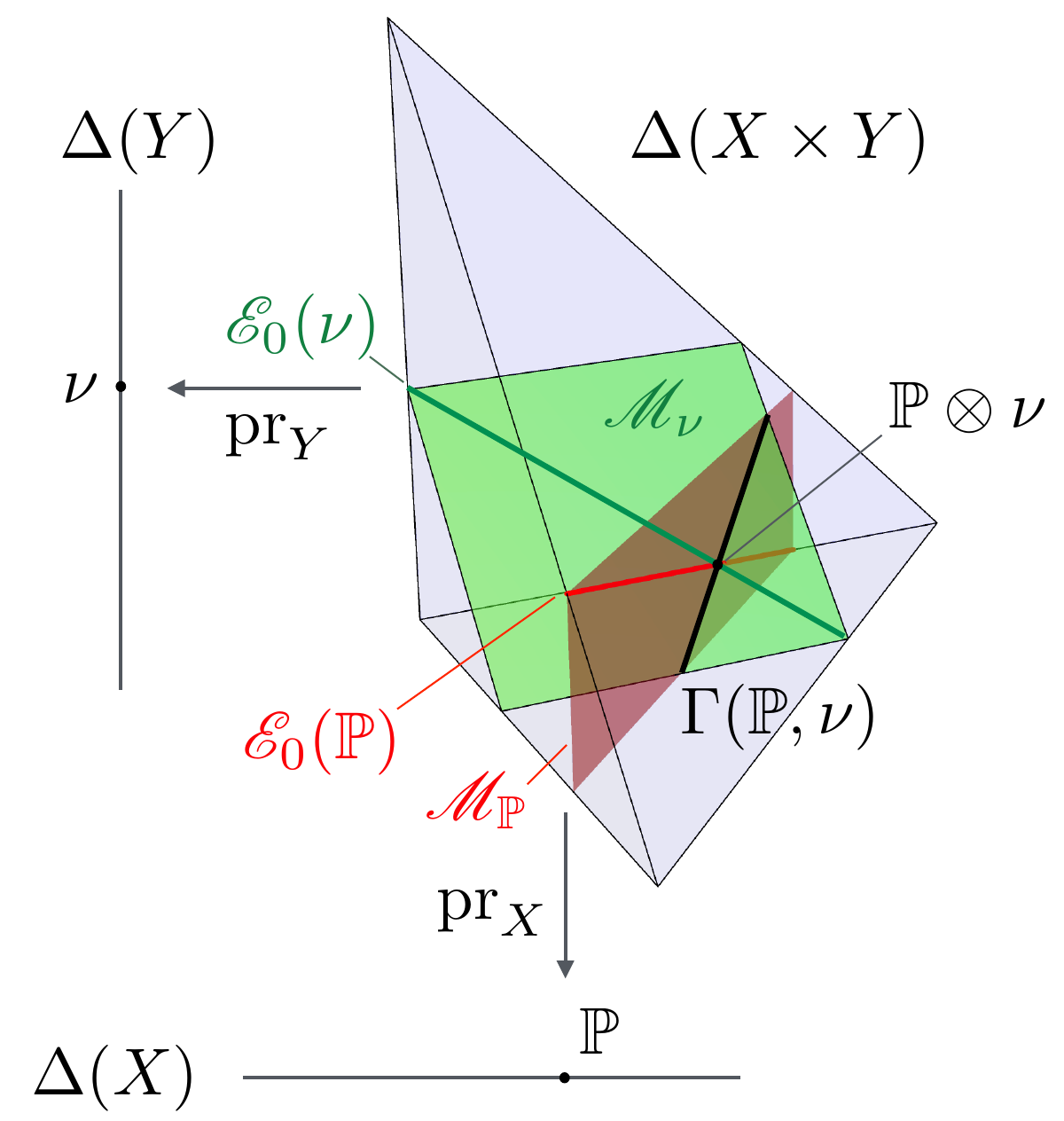}
\caption{Couplings}
\label{fig:couplings}
\end{subfigure}
\begin{subfigure}{0.46\textwidth}
\centering
\includegraphics[width = \textwidth]{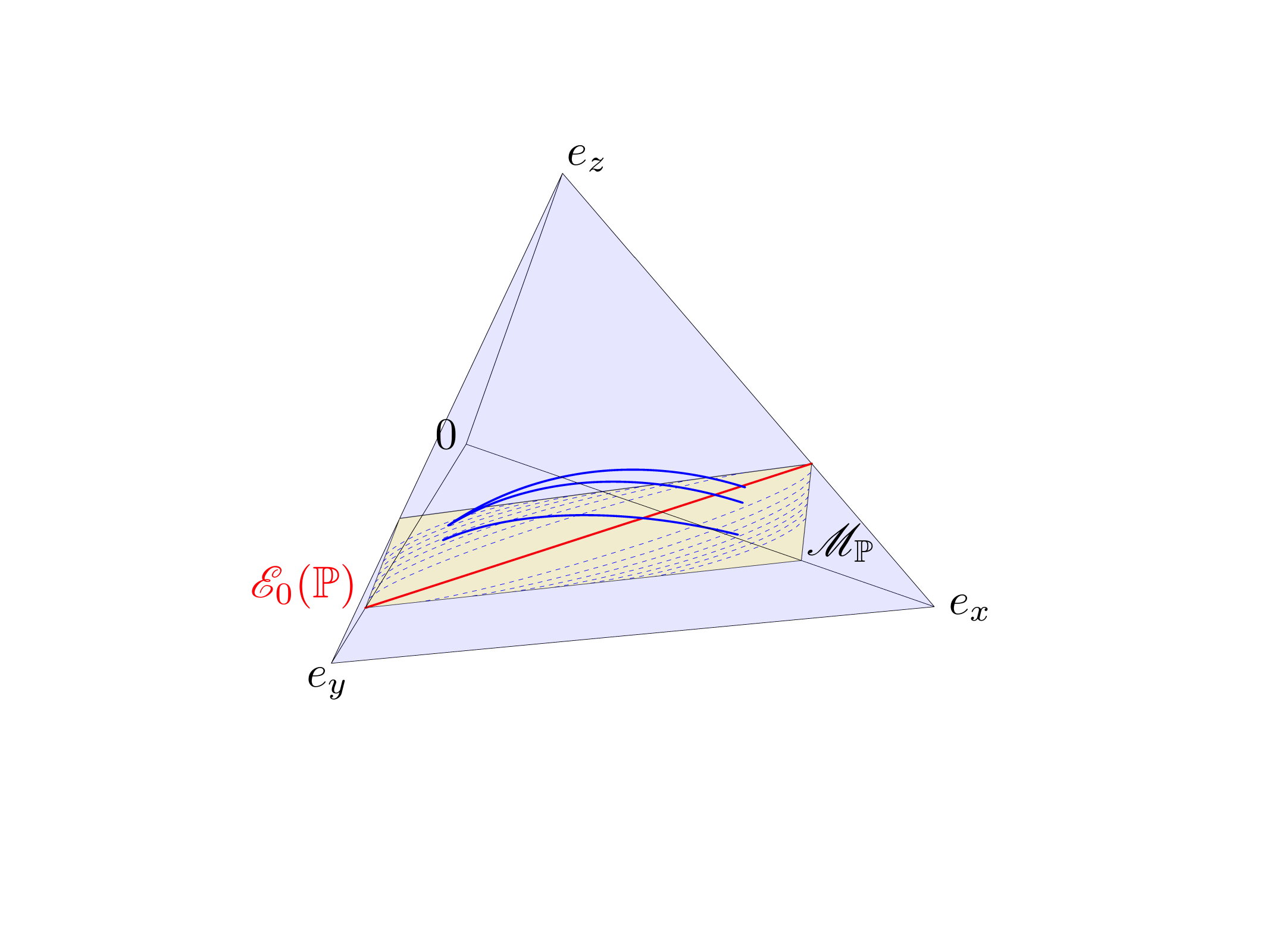}
\caption{$e$-geodesics}
\label{fig:e_geodesics}
\end{subfigure}
\caption{Set of product measures ${\color{red}\mathscr{E}_0(\P)}$ ({\color{red} red diagonal}) in $\mathscr{M}_{\P}$ and ${\color{teal}\mathscr{E}_0(\nu)}$ ({\color{teal} dark green diagonal}) in ${\color{teal}\mathscr{M}_{\nu}}$. Left panel~\ref{fig:couplings}: $m$-convex set of couplings of $\P$ and $\nu$  in $\Delta(X\times Y)$, given by the intersection $\Gamma(\P,\nu)={\color{teal} \mathscr{M}_{\nu}}\cap{\color{red}\mathscr{M}_{\P}}$, and ${\color{red}\mathscr{E}_0(\P)}\cap{\color{teal}\mathscr{E}_0(\nu)}=\P\otimes\nu$. Right panel~\ref{fig:e_geodesics}: Rear view of the probability simplex $\Delta(2\times 2)$. The {\color{blue} blue dashed} lines show contours of constant mutual information. The thick {\color{blue} blue curves} represent three different {\color{blue}$e$-geodesics} in $\Delta(2\times 2)$ that start and end in $\mathscr{M}_{\P}$ but are not otherwise included.}
\label{fig:combined2}
\end{figure}

\begin{lem}
\label{prop:properties}
Let $(X,\mathscr{X})$ be a measurable space and $(Y,\mathscr{Y})$ a Borel space.\footnote{The `standard' or Borel property already follows from the fact that $Y$ is countable. Furthermore, we identify the space of finite probability distributions with subsets of some $\R^d$, i.e. we don't distinguish between projection maps and the induced push-forward on measures.} Then the following properties hold for all $\P\in\Delta^{\circ}(X)$ and $\nu\in\Delta(Y)$.
    \begin{enumerate}
    \item The fibre $\operatorname{pr}_X^{-1}(\P)\subset\Delta(X\times Y)$ over $\P$ is equal to $\mathscr{M}  _{\P}$. The projection $\operatorname{pr}_Y:\mathscr{M}_{\P}\rightarrow\Delta(Y)$ is surjective.
     \item $\mathscr{M}_{\P}$ is $m$-convex and compact.
     \item $\mathscr{M}_{K_C}$ is $m$-convex and transversal to $\mathscr{M}_{\P}$.
      \item The set of couplings $\Gamma(\P,\nu)$ is compact and $m$-convex, and it satisfies $\Gamma(\P,\nu)=\mathscr{M}_{\P}\cap\mathscr{M}_{\nu}$.
    \end{enumerate}
\end{lem}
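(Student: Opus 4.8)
The plan is to handle the four parts separately, the common tools being the Disintegration Theorem, the convex-bilinearity of $\rtimes$ from Lemma~\ref{rem:kernel}, and one explicit computation in barycentric coordinates for the transversality claim; all four sets live in the finite-dimensional simplex $\Delta(X\times Y)$, so ``compact'' just means closed. For part (1), I would first note that by the defining property~(\ref{eq:semidirect_prod}) every $\P\rtimes K$ has $X$-marginal $\P$, giving $\mathscr{M}_{\P}\subset\operatorname{pr}_X^{-1}(\P)$. The reverse inclusion is exactly the Disintegration Theorem: since $(Y,\mathscr{Y})$ is Borel, any $\pi$ with $\pi_X=\P$ factors as $\pi=\pi_X\rtimes K_\pi=\P\rtimes K_\pi$, so $\pi\in\mathscr{M}_{\P}$. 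Surjectivity of $\operatorname{pr}_Y$ on $\mathscr{M}_{\P}$ then follows by exhibiting, for each $\nu\in\Delta(Y)$, the constant kernel $K_\nu$, for which $\P\rtimes K_\nu=\P\otimes\nu$ has $Y$-marginal $\nu$.

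For parts (2) and (3), $m$-convexity is immediate from convex-bilinearity: the midpoint identity $(1-t)(\P\rtimes K_1)+t(\P\rtimes K_2)=\P\rtimes((1-t)K_1+tK_2)$ keeps the geodesic in $\mathscr{M}_{\P}$ because $\mathfrak{K}(X,Y)$ is convex, and the symmetric identity $(1-t)(\mu_1\rtimes K_C)+t(\mu_2\rtimes K_C)=((1-t)\mu_1+t\mu_2)\rtimes K_C$ keeps it in $\mathscr{M}_{K_C}$ because $\Delta(X)$ is convex. Compactness of $\mathscr{M}_{\P}$ is then the continuous image $K\mapsto\P\rtimes K$ of the compact set $\mathfrak{K}(X,Y)$ from Lemma~\ref{rem:kernel}.

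The transversality in part (3) is the step I expect to be the real work. I would pass to coordinates $\pi=(\pi_{xy})$ and compute tangent spaces at the intersection point, which is the single measure $\P\rtimes K_C$ (unique because equating $X$-marginals of $\mu\rtimes K_C=\P\rtimes K$ forces $\mu=\P$). Differentiating the two parametrisations, and using $\P\in\Delta^{\circ}(X)$ so that every $p_x>0$, gives $T_{\pi_0}\mathscr{M}_{\P}=\{v:\sum_y v_{xy}=0\ \forall x\}$ and $T_{\pi_0}\mathscr{M}_{K_C}=\{v:v_{xy}=c_x(K_C)_{xy},\ \sum_x c_x=0\}$. To see these span $T_{\pi_0}\Delta(X\times Y)=\{w:\sum_{x,y}w_{xy}=0\}$, I would take such a $w$, set $c_x:=\sum_y w_{xy}$ (so $\sum_x c_x=0$), put $v_{xy}:=c_x(K_C)_{xy}$, and verify $u:=w-v$ lies in $T_{\pi_0}\mathscr{M}_{\P}$; the decisive cancellation $\sum_y u_{xy}=c_x-c_x\sum_y(K_C)_{xy}=0$ uses precisely that $K_C$ has unit row sums. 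A dimension count, $|X|(|Y|-1)+(|X|-1)=|X||Y|-1$, confirms the sum is in fact direct.

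For part (4), I would interpret $\mathscr{M}_{\nu}$ as the fibre $\operatorname{pr}_Y^{-1}(\nu)$ via the symmetric version of part (1) (the roles of $X$ and $Y$ interchanged). Then $\mathscr{M}_{\P}\cap\mathscr{M}_{\nu}=\{\pi:\pi_X=\P,\ \pi_Y=\nu\}=\Gamma(\P,\nu)$ directly from the definitions. The $m$-convexity and compactness of $\Gamma(\P,\nu)$ are then inherited as an intersection of the two $m$-convex, compact sets $\mathscr{M}_{\P}$ and $\mathscr{M}_{\nu}$; equivalently, one sees them directly, since $\Gamma(\P,\nu)$ is the intersection of $\Delta(X\times Y)$ with the affine subspace cut out by the linear marginal constraints, hence closed and convex.
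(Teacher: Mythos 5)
Your proposal is correct, and it does substantially more than the paper: the paper's proof consists of a single line that establishes only statement~1 (``Since $Y$ is finite, it is Borel'', followed by a citation of the Disintegration Theorem~\cite[Theorem 2.18]{ccinlar2011probability}), leaving statements~2--4 unproved. Your part~(1) coincides with the paper's argument --- the inclusion $\mathscr{M}_{\P}\subset\operatorname{pr}_X^{-1}(\P)$ from the definition of $\rtimes$, the reverse inclusion from disintegration, and surjectivity via the constant kernel $K_\nu$ --- but your parts~(2)--(4) supply content the paper omits entirely. Your $m$-convexity arguments via the convex-bilinearity of $\rtimes$ (Lemma~\ref{rem:kernel}) and your compactness argument (continuous image of the compact $\mathfrak{K}(X,Y)$, or equivalently a closed subset of the compact simplex) are exactly right, and your transversality computation is the genuinely new piece: the identification of the intersection as the single point $\P\rtimes K_C$ (uniqueness of $K$ uses $\P\in\Delta^{\circ}(X)$, which is presumably why the lemma assumes it), the tangent spaces $T\mathscr{M}_{\P}=\{v:\sum_y v_{xy}=0\ \forall x\}$ and $T\mathscr{M}_{K_C}=\{c_x(K_C)_{xy}:\sum_x c_x=0\}$, the explicit splitting $w=(w-v)+v$ with $c_x:=\sum_y w_{xy}$ using the unit row sums of $K_C$, and the dimension count $|X|(|Y|-1)+(|X|-1)=|X||Y|-1$ all check out (with the understanding that tangent spaces are taken in the ambient affine hull of $\Delta(X\times Y)$, so the argument is insensitive to whether $\P\rtimes K_C$ lies on the simplex boundary when $K_C$ has vanishing entries). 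Two minor points worth flagging: in part~(4) the mirrored version of part~(1) needs $(X,\mathscr{X})$ Borel for the disintegration in the other direction, which is automatic since Section~\ref{sec:Infogeo_Markov_kernels} restricts to finite spaces; and your fallback observation --- that $\Gamma(\P,\nu)$ is cut out of $\Delta(X\times Y)$ by linear marginal constraints, hence closed and convex without any appeal to disintegration --- is actually the cleaner route and avoids that hypothesis altogether.
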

\begin{proof}
We show only the first statement. Since $Y$ is finite, it is Borel. So the statement follows from the `Disintegration theorem'~\cite[Theorem 2.18]{ccinlar2011probability}.
\end{proof}

\begin{lem}
The set $C_r:=\{I(\P;K)\leq r\}$ is a compact convex subset of $\mathscr{M}_{\P}$.
\end{lem}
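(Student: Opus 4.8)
The plan is to regard $C_r$ as the sublevel set $\{\pi\in\mathscr{M}_{\P}\mid I(\pi)\le r\}$ of the mutual-information functional restricted to $\mathscr{M}_{\P}$, under the identification of $K\in\mathfrak{K}(X,Y)$ with $\pi:=\P\rtimes K$, so that $I(\P;K)=I(\pi)$. By Lemma~\ref{prop:properties} the ambient set $\mathscr{M}_{\P}$ is already known to be compact and $m$-convex, so the work splits into two independent claims: that $I$ is a convex function on $\mathscr{M}_{\P}$, which delivers $m$-convexity of its sublevel set $C_r$, and that $I$ is lower semicontinuous, which makes $C_r$ closed and hence compact as a closed subset of a compact set.

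For convexity I would exploit that every $\pi\in\mathscr{M}_{\P}$ has the same $X$-marginal $\pi_X=\P$. Thus, by~(\ref{eq:mutual_info}), the restriction of $I$ to $\mathscr{M}_{\P}$ is $I(\pi)=D_{\operatorname{KL}}(\pi||\P\otimes\pi_Y)$. Both arguments are affine in $\pi$: the first is the identity, and the second is the composition of the linear marginalisation $\pi\mapsto\pi_Y=(\operatorname{pr}_Y)_*\pi$ with the linear map $q\mapsto\P\otimes q$ obtained by tensoring with the fixed $\P$. Since the Kullback--Leibler divergence is jointly convex in its two arguments~\cite[Thm. 2.7.2]{cover1991elements}, and the composition of a jointly convex function with affine maps is convex, $I$ is convex on the ($m$-convex, i.e. ordinarily convex) set $\mathscr{M}_{\P}$. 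Its sublevel set $C_r$ is therefore $m$-convex.

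For compactness I would use that on the finite product simplex $\Delta(X\times Y)$ the map $\pi\mapsto I(\pi)$ is continuous (in particular lower semicontinuous): with the conventions $0\ln\frac{0}{q}=0$ in force, each summand $\pi(x,y)\ln\frac{\pi(x,y)}{\P(x)\pi_Y(y)}$ extends continuously to the boundary, because $\pi(x,y)>0$ forces $\P(x)>0$ and $\pi_Y(y)>0$. Hence $C_r=I^{-1}([0,r])\cap\mathscr{M}_{\P}$ is the intersection of a closed set with the compact set $\mathscr{M}_{\P}$, and is therefore compact.

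The only genuine subtlety is the behaviour of $I$ near the boundary of the simplex, where the logarithm is singular and a direct convexity/continuity computation would be delicate. This is exactly what the reformulation $I(\pi)=D_{\operatorname{KL}}(\pi||\P\otimes\pi_Y)$ as a relative entropy of two affine functions of $\pi$ is designed to sidestep: it lets me import joint convexity and lower semicontinuity of $D_{\operatorname{KL}}$ wholesale, rather than analysing $I$ summand by summand.
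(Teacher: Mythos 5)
Your proof is correct and takes essentially the same route as the paper's: convexity of $C_r$ via joint convexity of $D_{\operatorname{KL}}$ in pairs (your affine-composition argument is precisely the derivation of~\cite[Thm.~2.7.4]{cover1991elements}, which the paper cites directly), and compactness from the compact simplex $\Delta(X\times Y)$. You additionally spell out the continuity of $I$ on the finite simplex needed for closedness of the sublevel set, a step the paper's proof leaves implicit.
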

\begin{proof}
$\Delta(X\times Y)$ is a compact and convex subset of $\R^{|X||Y|}$ for any finite sets $X$ and $Y$. The rest follows from the convexity of $D_{\operatorname{KL}}(p||q)$ in pairs $(p,q)$~\cite[Thm. 2.7.2]{cover1991elements}, and that for fixed $p(x)$, $I(X,Y)$ is convex~\cite[Thm. 2.7.4]{cover1991elements}.
\end{proof}
For $\pi\in\Delta(X\times Y)$, a non-independent distribution, and $\rho\in\mathscr{E}_{\perp\!\!\!\perp}(X\times Y)$, an independent distribution, the solution to
\begin{equation}
\label{eq:mini_mutual_info}
\rho^*:=\argmin_{\rho\in\mathscr{E}_{\perp\!\!\!\perp}}D_{\operatorname{KL}}(\pi||\rho),
\end{equation}
is $\rho^*=\pi_X\otimes\pi_Y$, i.e. the product of the $X$- and $Y$-marginal of $\pi$, cf.~\cite[p. 46]{amari2016information} and~\cite[Lemma 13.8.1]{cover1991elements}.
Geometrically, $\rho^*$ corresponds to the point where the $m$-geodesic starting at $\pi$ orthogonally intersects $\mathscr{E}_{\perp\!\!\!\perp}$, i.e. the $m$-projection of $\pi$ onto $\mathscr{E}_{\perp\!\!\!\perp}$. Thus, the mutual information $I(\pi)$ quantifies the deviation of $\pi$ from independence~\cite[p. 46]{amari2016information}.

In the theory of rate distortion~(\cite{Berger1971,cover1991elements,hino2024geometry}), the aim is to maximise the distance between $\mathscr{M}_{K_C}$ and $\mathscr{E}_{\perp\!\!\!\perp}$, cf.~Figure~\ref{fig:Rate distortion}.

\subsection{Legendre transform}
\begin{lem}
Let $\P\in\Delta^{\circ}(X)$, $X$ and $Y$ be finite, and, considered as random variables, $(X,Y)\sim\pi(x,y)=p(x)k(x,y)$. We have:
\begin{enumerate}
\item $I\big|\in\mathscr{C}^{\infty}(\Delta^{\circ}(X\times Y),\R_+)$, i.e. the mutual information is a smooth function of its arguments $\pi(x,y)$. 
\item $I\big|:\mathscr{M}^{\circ}_{\P}\rightarrow\R_+$ is a strictly convex $\mathscr{C}^{\infty}$-function on the interior $\mathscr{M}_{\P}^{\circ}$ of $\mathscr{M}_{\P}$, i.e. for $\pi_1,\pi_2\in\mathscr{M}^{\circ}_{\P}$, $\pi_i\notin\mathscr{E}_{\P}(0)$, $i=1$ or $i=2$, and $\lambda\in[0,1]$, the strict inequality
\begin{equation}
I(\lambda (\pi_1)+(1-\lambda)\pi_2)<\lambda I(\pi_1)+(1-\lambda)I(\pi_2),
\end{equation}
holds.
\item The restriction $I\big|:\mathscr{M}_{K_C}\rightarrow\R_+$ is concave.
\end{enumerate}
\end{lem}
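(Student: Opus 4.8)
The plan is to handle the three assertions separately, working throughout from the closed form
$I(\pi)=\sum_{x,y}\pi(x,y)\ln\pi(x,y)-\sum_x\pi_X(x)\ln\pi_X(x)-\sum_y\pi_Y(y)\ln\pi_Y(y)=H(\pi_X)+H(\pi_Y)-H(\pi)$,
valid on $\Delta^{\circ}(X\times Y)$, where $H$ denotes Shannon entropy and $\pi_X(x)=\sum_y\pi(x,y)$, $\pi_Y(y)=\sum_x\pi(x,y)$. For assertion (1), on the interior every coordinate $\pi(x,y)$ is strictly positive, hence so are the two marginals, being finite sums of positive reals. Each summand is $t\mapsto t\ln t$ precomposed with a linear (smooth) map into $(0,\infty)$, and $t\ln t\in\mathscr{C}^{\infty}((0,\infty))$; a finite sum of such compositions is $\mathscr{C}^{\infty}$, and positivity of $I$ was already recorded. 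This settles (1) with no real obstacle.

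For assertion (2), I would exploit that $\mathscr{M}_{\P}=\operatorname{pr}_X^{-1}(\P)$ is an affine slice of $\Delta(X\times Y)$, so its $m$-geodesics are ordinary segments (cf.~Lemma~\ref{prop:properties}) and convexity reduces to nonnegativity of the Hessian of $I$ along the tangent space $T\mathscr{M}_{\P}=\{v:\sum_y v(x,y)=0\ \forall x\}$. On $\mathscr{M}_{\P}$ the term $H(\pi_X)$ is constant, and direct differentiation gives $\partial_{\pi(x,y)}I=\ln(\pi(x,y)/\pi_Y(y))$ together with the quadratic form $\operatorname{Hess}I[v,v]=\sum_{x,y}v(x,y)^2/\pi(x,y)-\sum_y w(y)^2/\pi_Y(y)$, where $w(y):=\sum_x v(x,y)$. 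Applying the Cauchy--Schwarz inequality fibrewise in $y$, namely $w(y)^2=(\sum_x v(x,y))^2\le\pi_Y(y)\sum_x v(x,y)^2/\pi(x,y)$, yields $\operatorname{Hess}I[v,v]\ge 0$, i.e. (non-strict) convexity; equivalently one may invoke the joint convexity of $D_{\operatorname{KL}}$ applied to the representation $I=\sum_x \P(x)\,D_{\operatorname{KL}}(K(x,\cdot)\,\|\,\pi_Y)$.

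The delicate point, and the main obstacle, is strictness. Equality in the fibrewise Cauchy--Schwarz forces $v(x,y)=c(y)\pi(x,y)$ for some $c:Y\to\R$, and the tangency constraint then reads $\sum_y c(y)\pi(x,y)=0$ for every $x$; along such a direction $I$ is in fact affine, so strict convexity can hold only transversally to these flat directions. The task is therefore to show that the only such null directions are those tangent to the product-measure locus $\mathscr{E}_0(\P)=\mathscr{E}_{\perp\!\!\!\perp}\cap\mathscr{M}_{\P}$, so that the chord inequality becomes strict once $\pi_1$ or $\pi_2$ lies outside $\mathscr{E}_0(\P)$. Concretely, the condition $\sum_y c(y)\pi(x,y)=0$ for all $x$ says that $c$ lies in the kernel of the matrix $(\pi(x,y))_{x,y}$, so I would analyse this kernel using $\P\in\Delta^{\circ}(X)$ together with the support/rank of the joint law, to locate precisely when it is nontrivial (it always is on the independent slice, where $\Pi$ has rank one). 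This rank-and-support bookkeeping — and pinning down the exact genericity hypothesis under which the kernel collapses off $\mathscr{E}_0(\P)$ — is where the real work lies.

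For assertion (3), I would parametrise $\mathscr{M}_{K_C}$ by the input law: $\pi=\mu\rtimes K_C$ is affine in $\mu\in\Delta(X)$, and the same closed form becomes $I=H(\pi_Y)-\sum_x\mu(x)H(K_C(x,\cdot))$ with $\pi_Y=(K_C)_*\mu$ linear in $\mu$. The second term is linear in $\mu$, while $H$ is concave and $(K_C)_*$ linear, so $H\circ(K_C)_*$ is concave; a concave function minus a linear one is concave, which gives (3) — this is the concavity of mutual information in the input distribution (Cover--Thomas, Thm.~2.7.4), and beyond bookkeeping it presents no obstacle.
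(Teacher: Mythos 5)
Parts 1 and 3 of your proposal are correct. For part 1, your entropy decomposition $I=H(\pi_X)+H(\pi_Y)-H(\pi)$ gives smoothness more cleanly than the paper's direct differentiation of the KL sum, and for part 3 your computation $I=H((K_C)_*\mu)-\sum_x\mu(x)H(K_C(x,\cdot))$, a concave function minus a linear one, is precisely the standard proof of the result that the paper merely cites from Cover--Thomas. The genuine gap is in part 2: the Hessian-plus-Cauchy--Schwarz argument establishes only non-strict convexity, and you explicitly leave the strict inequality --- which is the entire content of the second assertion --- unproved, deferring the ``rank-and-support bookkeeping'' to future work. As a proof of the lemma this is incomplete.

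Moreover, your own equality analysis, pushed one step further, shows that the deferred step cannot be carried out as you planned. For $v(x,y)=c(y)\pi(x,y)$ with $\sum_y c(y)\pi(x,y)=0$ for all $x$, the segment $\pi_t(x,y)=\pi(x,y)(1+tc(y))$ stays in $\mathscr{M}_{\P}$, and the factor $1+tc(y)$ cancels in the ratio $\pi_t/\bigl((\pi_t)_X\otimes(\pi_t)_Y\bigr)$, so $I(\pi_t)$ is affine in $t$, exactly as you observed. But whenever $|Y|>|X|$, the $|X|\times|Y|$ matrix $(\pi(x,y))$ has nontrivial kernel at \emph{every} interior point of $\mathscr{M}_{\P}$, so such flat segments pass through points arbitrarily far from the product locus: the null directions are not confined to $\mathscr{E}_{\P}(0)$, and the genericity hypothesis you hoped to pin down does not exist without further assumptions (e.g. $|Y|\leq|X|$ together with full column rank of the joint law). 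For comparison, the paper argues along the chord: with $\pi_\lambda:=\lambda\pi_1+(1-\lambda)\pi_2$ and $\rho_\lambda:=\lambda\,p\otimes q_1+(1-\lambda)\,p\otimes q_2$ it computes the per-term second derivative $(\pi_2\rho_1-\pi_1\rho_2)^2/\bigl(\pi_\lambda\rho_\lambda^2\bigr)$ and asserts positivity for every pair $(x,y)$; your flat directions are exactly the case $k_2(x,y)=k_1(x,y)q_2(y)/q_1(y)$, in which $\pi_2\rho_1\equiv\pi_1\rho_2$ and that assertion fails. So your instinct that strictness is the delicate point is sound --- indeed sharper than the paper's own treatment, which glosses over it --- but the proposal as written both fails to prove the claimed strict inequality and, if completed honestly, would show that the inequality requires hypotheses beyond $\pi_i\notin\mathscr{E}_{\P}(0)$.
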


\begin{proof}

The first statement is shown as follows. 
For $\pi(x,y)\in\Delta^{\circ}(X\times Y)$, $\pi_X(x)=\sum_y\pi(x,y)$ and $\pi_Y(y)=\sum_x\pi(x,y)$ are the marginals. The claim follows from
\begin{equation*}
\begin{split}
\frac{\partial}{\partial\pi(x,y)}I(\pi)&=\frac{\partial}{\partial\pi(x,y)}\sum_{x',y'}\pi(x',y')\ln\frac{\pi(x',y')}{\sum_{\nu}\pi(x',\nu)\sum_{\mu}\pi(\mu,y')}\\
&=\ln\frac{\pi(x,y)}{\pi_X(x)\pi_Y(y)}-\frac{\pi(x,y)}{\pi_X(x)}-\frac{\pi(x,y)}{\pi_Y(y)}+1.\\
%&=\ln\frac{\pi(x,y)}{\pi_X(x)\pi_Y(y)}-\frac{\pi(x,y)(\pi_X(x)+\pi_Y(y))}{\pi_X(x)\pi_Y(y)}+1.\\
%&=\ln\frac{\pi(x,y)}{\pi_X(x)\pi_Y(y)}+\\
%&\pi(x,y)\cdot\frac{\pi_X(x)\pi_Y(y)}{\pi(x,y)}\cdot\frac{\pi_X(x)\pi_Y(y)-\pi(x,y)(\pi_X(x)'\pi_Y(y)+\pi_X(x)\pi_Y(y)')}{(\pi_X(x)\pi_Y(y))^2}\\
%&=\ln\frac{\pi(x,y)}{\pi_X(x)\pi_Y(y)}+\frac{\pi_X(x)\pi_Y(x)-\pi(x,y)(\pi_X(x)+\pi_Y(y))}{\pi_X(x)\pi_Y(y)}
\end{split}
\end{equation*}

Note, $\mathscr{M}_{\P}$ is convex, and thus its interior $\mathscr{M}_{\P}^{\circ}$ is a convex domain for the restriction $I\big|$.

The smoothness of the restricted mutual information follows from the first statement, by restriction to a relatively open set.

Next, consider two different kernels $K_1(x,y)$ and $K_2(x,y)$. The corresponding joint distributions are $\pi_1(x,y)=p(x)k_1(x,y)$ and $\pi_1(x,y)=p(x)k_1(x,y)$, and their respective $X$- and $Y$-marginals are $p(x)$ and $q_i(y):=\sum_xp(x)k_i(x,y)$ for $i=1,2$.

Define $\pi_{\lambda}(x,y):=\lambda\pi_1(x,y)+(1-\lambda)\pi_2(x,y)$ and $\rho_{\lambda}(x,y):=\lambda p(x)q_1(y)+(1-\lambda)p(x)q_2(y)$ as the product of the marginal values.

Then  
$I(\pi_{\lambda})=D_{\operatorname{KL}}(\pi_{\lambda}||\rho_{\lambda})=\sum_{x,y}\pi_{\lambda}(x,y)\ln\frac{\pi_{\lambda}(x,y)}{\rho_{\lambda}(x,y)}$, which is a smooth function of $\lambda\in(0,1)$.
Strict convexity follows from 
$$
\frac{d^2}{d\lambda^2}D_{\operatorname{KL}}(\pi_{\lambda}||\rho_{\lambda})=\sum_{x,y}\underbrace{\frac{d^2}{d\lambda^2}\left(\pi_{\lambda}(x,y)\ln\frac{\pi_{\lambda}(x,y)}{\rho_{\lambda}(x,y)}\right)}_{(*)>0}>0,
$$
given that $(*)$ is positive for every pair $(x,y)$. The latter follows from 
$$
\frac{(\pi_2 \rho_1 - \pi_1 \rho_2)^2}{\left(\lambda\pi_1 +(1-\lambda)\pi_2\right) \left(\lambda \rho_1 + (1-\lambda)\rho_2\right)^2}>0,
$$
which is true for all $\lambda\in(0,1)$.

The third statement is a direct consequence of~\cite[Theorem 2.7.4]{cover1991elements}. 
\end{proof}
In particular, the level sets $\mathscr{E}_{\P}(r)$, $r>0$, define strictly convex subsets of $\mathscr{M}^{\circ}_{\P}$, and therefore, 
the supporting hyperplane to $\mathscr{E}_{\P}(r)$ exists at $\pi_0$. For $\pi,\pi_0\in\boldsymbol{\operatorname{dom}}(I)$, $\pi\neq\pi_0$, it satisfies the strict first-order condition 
\begin{equation}
\label{eq:first_order}
I(\pi)>I(\pi_0)+\langle\nabla I(\pi_0),(\pi-\pi_0)\rangle,
\end{equation}
where $\nabla I$ is the usual gradient; cf.~\cite[p.~70]{boyd2004convex}.
The Bregman divergence can then be expressed as 
$$
D_I[\pi:\pi_0]:=I(\pi)-I(\pi_0)-\langle\nabla I(\pi_0),(\pi-\pi_0)\rangle,
$$
where $\pi^*:=\nabla I(\pi)$ is equal to the Legendre transform of $\pi$,~\cite[p. 14]{amari2016information}.

\section{A multiplier-robust control problem}
Let $(Y,2^Y)$ be a discrete measurable space and $U:Y\rightarrow\R$ be a measurable utility function. Let $u_i:=U(i)$, $i^*_{\max}:=\argmax_{i\in Y}u_i$ and $i^*_{\min}:=\argmin_{i\in Y}u_i$; assuming that $\max$ and $\min$ are unique.
Fix both $q\in\Delta^{\circ}(Y)$, the prior, and $\beta>0$. The parameter $\beta$ is called the inverse temperature in thermodynamics, i.e. $1/\beta$ is the physical temperature. 

The model of resource-bounded rationality that we will consider is given by the following multiplier-robust control problem~\cite{mattsson2002probabilistic,ortega2013thermodynamics,genewein2015bounded}:
\begin{equation}
\label{eq:max}
U_{\beta}^*:=\max_{p\in\Delta_n}\left[\E_p[U]-\frac{1}{\beta}D_{KL}(p||q)\right]\quad \text{and} \quad p^*:=\argmax_{p\in\Delta_n}\left[\E_p[U]-\frac{1}{\beta}D_{KL}(p||q)\right].
\end{equation}
Then the optimal solution $U_{\beta}^*$ is called the (negative) free energy difference~\cite{ortega2013thermodynamics,genewein2015bounded}, or the certainty equivalent~\cite{ortega2016human}. The partition function is defined as
\begin{equation}
\label{eq:partition_sum}
Z_{\beta}:=\sum_{i\in Y} q_i e^{\beta u_i}.
\end{equation}
The free energy is given by
\begin{equation}
\label{eq:partition function} 
\beta\mapsto\ln Z_{\beta},
\end{equation}
which is known in thermodynamics as the Massieu function~\cite{naudts2009q}. The Boltzmann-Gibbs distribution (cf.~\cite{naudts2009q,amari2016information,ay2017information}) is defined as
\begin{equation}
\label{eq:Boltzmann_Gibbs}
p_{\beta}(i):=\frac{1}{Z_{\beta}}q_ie^{\beta u_i},\quad i\in Y.
\end{equation}
The Boltzmann distribution, which plays an important role in the multinomial logistic choice model, is obtained for a uniform prior probability $q_i\equiv1/(n+1)$, cf.~\cite{Mcfadden2012,mattsson2002probabilistic}. 
In addition, we have
\begin{equation}
\label{eq:first_cumulant}
\frac{d}{d\beta}\ln Z_{\beta}=\frac{Z'_{\beta}}{Z_{\beta}}=\E_{p_{\beta}}[U],
\end{equation}
which is the first cumulant or the expectation value of $U$, and 
\begin{equation}
\label{eq:second_cumulant}
\begin{split}
\frac{d^2}{d\beta^2}\ln Z_{\beta}&=(\frac{Z'_{\beta}}{Z_{\beta}})'=\frac{Z''_{\beta}}{Z_{\beta}}-(\frac{Z'_{\beta}}{Z_{\beta}})^2=\E_{p_{\beta}}[U^2]-\E_{p_{\beta}}[U]^2\\
&=\underbrace{\E_{p_{\beta}}[(U-\E_{p_{\beta}}[U])^2]}_{=\operatorname{Var}_{p_{\beta}}(U)}>0,
\end{split}
\end{equation} 
which is the second cumulant or the variance of $U$. Hence, by~$(\ref{eq:second_cumulant})$, the free energy is strictly convex if $U$ is not constant.
\begin{prop}
The free energy difference $U^*({\beta}):=\frac{1}{\beta}\ln Z_{\beta}$ is a strictly increasing sigmoidal function. The following asymptotic bounds hold:
\begin{align*}
\lim_{\beta\to0}\frac{1}{\beta}Z_{\beta}&=\E_q[U]&\lim_{\beta\to0} p_{\beta}&=q,\quad\text{(high-temperature regime)}\\
\lim_{\beta\to+\infty}\frac{1}{\beta}Z_{\beta}&=u_{i^*_{\max}}&\lim_{\beta\to+\infty} p_{\beta}&=\delta_{i^*_{\max}},\quad\text{(low-temperature regime)}\\
\lim_{\beta\to-\infty}\frac{1}{\beta}Z_{\beta}&=u_{i^*_{\min}} &\lim_{\beta\to-\infty}p_{\beta}&=\delta_{i^*_{\min}},
\end{align*}
where $u_{i^*_{\max}}$ is the maximum, $u_{i^*_{\min}}$ is the minimum, and $\delta_i$ is the Dirac measure at the vertex $i$.
\end{prop}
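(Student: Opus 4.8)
The plan is to make the Massieu function $\phi(\beta) := \ln Z_\beta$ the primary object, since $U^*(\beta) = \phi(\beta)/\beta$ and the first two derivatives of $\phi$ are already recorded in~(\ref{eq:first_cumulant}) and~(\ref{eq:second_cumulant}): $\phi'(\beta) = \E_{p_\beta}[U]$ and $\phi''(\beta) = \operatorname{Var}_{p_\beta}(U)$, the latter strictly positive whenever $U$ is non-constant. I would first observe that $\phi(0) = \ln\sum_i q_i = 0$ because $q$ is a probability measure, so $U^*$ presents the indeterminate form $0/0$ at $\beta = 0$; this single fact anchors both the value at $0$ and the monotonicity argument.

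For strict monotonicity I would introduce the auxiliary function $g(\beta) := \beta\,\phi'(\beta) - \phi(\beta)$, chosen so that $\frac{d}{d\beta}U^*(\beta) = g(\beta)/\beta^2$. A one-line computation gives $g(0) = 0$ and $g'(\beta) = \beta\,\phi''(\beta)$. Since $\phi''(\beta) > 0$ by~(\ref{eq:second_cumulant}), the derivative $g'$ carries the sign of $\beta$, so $g$ strictly decreases on $(-\infty,0)$ and strictly increases on $(0,+\infty)$; in either case $g(\beta) > g(0) = 0$. Hence $(U^*)'(\beta) = g(\beta)/\beta^2 > 0$ for every $\beta \neq 0$, which is the claimed strict monotonicity.

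The removable singularity at the origin is handled by L'Hôpital (or equivalently a first-order expansion of $Z_\beta$): $\lim_{\beta\to 0}\phi(\beta)/\beta = \phi'(0) = \E_{p_0}[U] = \E_q[U]$, using $p_0 = q$ read off from~(\ref{eq:Boltzmann_Gibbs}), and the limit $p_\beta \to q$ is immediate by evaluating~(\ref{eq:Boltzmann_Gibbs}) at $\beta = 0$. For the two extreme regimes I would factor out the dominant exponential: for $\beta \to +\infty$ write $Z_\beta = e^{\beta u_{i^*_{\max}}}\bigl(q_{i^*_{\max}} + \sum_{i \neq i^*_{\max}} q_i e^{\beta(u_i - u_{i^*_{\max}})}\bigr)$, where every remaining exponent is strictly negative, so the bracket tends to $q_{i^*_{\max}} > 0$ and $\frac{1}{\beta}\ln Z_\beta \to u_{i^*_{\max}}$; the case $\beta \to -\infty$ is symmetric with $i^*_{\min}$. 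The same factorisation shows $p_\beta(i) = q_i e^{\beta(u_i - u_{i^*_{\max}})}/(\text{bracket})$ tends to $1$ for $i = i^*_{\max}$ and to $0$ otherwise, giving $p_\beta \to \delta_{i^*_{\max}}$, and likewise $\delta_{i^*_{\min}}$ at $-\infty$. These three limits furnish the two horizontal asymptotes $u_{i^*_{\min}}$ and $u_{i^*_{\max}}$, which together with strict monotonicity yield the sigmoidal (bounded, strictly increasing, S-shaped) profile.

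The main obstacle I anticipate is the monotonicity step: the quotient $\phi(\beta)/\beta$ is not visibly increasing, and the clean conclusion rests entirely on spotting the auxiliary function $g$, whose derivative $\beta\,\phi''(\beta)$ inherits its sign directly from the strict convexity of $\phi$ in~(\ref{eq:second_cumulant}). Everything else — the $0/0$ evaluation and the dominant-term asymptotics — is routine once the exponential is factored correctly; the only point of care is the uniqueness of $i^*_{\max}$ and $i^*_{\min}$ assumed in the setup, which is exactly what guarantees a strictly positive limiting bracket and hence the clean concentration onto a single Dirac mass.
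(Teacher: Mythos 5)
Your proof is correct, and your monotonicity argument takes a genuinely different route from the paper's. Both proofs ultimately rest on the strict convexity of the Massieu function, $\phi''(\beta)=\operatorname{Var}_{p_\beta}(U)>0$ from~(\ref{eq:second_cumulant}), but the paper proceeds via an integral representation: it writes $\frac{1}{\beta}\ln Z_\beta=\frac{1}{\beta}\int_0^\beta \phi'(x)\,dx$ and substitutes $x=\beta t$ to get $U^*(\beta)=\int_0^1 \E_{p^*_{\beta t}}[U]\,dt$, then observes that each integrand $f_t(\beta)=\E_{p^*_{t\beta}}[U]$ is strictly increasing since $f_t'(\beta)=t\cdot\operatorname{Var}_{p^*_{t\beta}}(U)>0$; monotonicity of $U^*$ follows because it is an average of strictly increasing functions (with the case $\beta<0$ handled by a separate integral). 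You instead differentiate the quotient directly and introduce $g(\beta):=\beta\phi'(\beta)-\phi(\beta)$ with $g(0)=0$ and $g'(\beta)=\beta\phi''(\beta)$, concluding $(U^*)'(\beta)=g(\beta)/\beta^2>0$ for $\beta\neq0$. It is worth noting that your $g$ is exactly the function $r(\beta)=\beta\frac{d}{d\beta}\ln Z_\beta-\ln Z_\beta=D_{\operatorname{KL}}(p^*_\beta\|q)$ appearing later in the paper in~(\ref{eq:bregman}), and your computation $g'(\beta)=\beta\,\operatorname{Var}_{p^*_\beta}(U)$ is precisely the one the paper uses to prove Lemma~\ref{eq:r_beta_relation}; so your argument in effect anticipates that lemma and yields the pleasant identity $(U^*)'(\beta)=D_{\operatorname{KL}}(p^*_\beta\|q)/\beta^2$, making the strict positivity of the derivative transparent (the optimal distribution differs from the prior whenever $\beta\neq0$). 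Your route buys a closed-form derivative and a uniform treatment of both signs of $\beta$, while the paper's buys the conceptually suggestive reading of the free energy difference as an average of expected utilities along the geodesic. You also spell out the removable singularity at $\beta=0$ via $\phi(0)=0$ and L'H\^opital, and the asymptotics by factoring out the dominant exponential (where the assumed uniqueness of $i^*_{\max}$ and $i^*_{\min}$ guarantees a strictly positive bracket); the paper's written proof addresses only monotonicity and leaves these limits implicit, so on that score your proposal is the more complete of the two.
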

\begin{proof}\footnote{I would like to thank Takuya Murayama for kindly providing me  the ansatz for the proof.}
Let us show the strict monotonicity of $U^*({\beta})$. Consider the case $\beta>0$. We have 
\begin{equation}
\frac{1}{\beta}\ln Z_{\beta}=\frac{1}{\beta}\int_0^{\beta}\frac{d}{dx}\left(\ln Z_x\right)dx=\frac{1}{\beta}\int_0^{\beta}\E_{p^*_x}[U]\,dx.
\end{equation}
By changing the variables as $x=\beta t$, one obtains
$$
\frac{1}{\beta} \ln Z_{\beta}=\int_0^1\E_{p^*_{\beta t}}[U] dt.
$$
For each fixed $t\in(0,1]$, the function $f_t(\beta):=\E_{p^*_{t\beta}}[U]$ is strictly increasing in $\beta$, as $\frac{d}{d\beta}f_t(\beta)=t\cdot\operatorname{Var}_{p^*_{t\beta}}(U)>0$ by~(\ref{eq:second_cumulant}), if not all $u_i$ are equal. 

The case of $\beta<0$, is treated by considering $\frac{1}{\beta}\int_{\beta}^0\frac{d}{dx}(\ln Z_{\beta})dx$.
\end{proof}
\begin{prop}[Free energy difference~\cite{mattsson2002probabilistic,ortega2013thermodynamics,genewein2015bounded}]
\label{prop:one}
For $q\in\Delta^{\circ}(Y)$, $\beta\in\R_+$, the resulting net utility $U_{\beta}^*$ is given by the free energy difference, i.e.
\begin{equation*}
U_{\beta}^*=\frac{1}{\beta}\ln Z_{\beta},
\end{equation*}
and the optimal distribution $p^*_{\beta}$, is given by the Boltzmann-Gibbs distribution
\begin{equation*}
p_{\beta}^*(i)=\frac{1}{Z_{\beta}}q_i e^{\beta u_i},\,\,i\in Y.
\end{equation*}
\end{prop}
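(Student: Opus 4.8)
The plan is to bypass Lagrange multipliers entirely and instead rewrite the objective functional in~(\ref{eq:max}) so that the Boltzmann--Gibbs distribution $p_{\beta}$ becomes manifestly optimal through the nonnegativity of relative entropy. Concretely, I would first multiply the bracketed expression by $\beta>0$ and absorb the linear term $\beta\E_p[U]=\sum_i p_i(\beta u_i)$ into the logarithm by writing $\beta u_i=\ln e^{\beta u_i}$, obtaining
\begin{equation*}
\beta\E_p[U]-D_{KL}(p\|q)=\sum_{i\in Y}p_i\ln\frac{q_i\,e^{\beta u_i}}{p_i}.
\end{equation*}

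The key step is then to recognise $q_i\,e^{\beta u_i}=Z_{\beta}\,p_{\beta}(i)$ directly from the definitions~(\ref{eq:Boltzmann_Gibbs}) of the Gibbs distribution and~(\ref{eq:partition_sum}) of the partition sum. Substituting and splitting the logarithm gives
\begin{equation*}
\sum_{i\in Y}p_i\ln\frac{Z_{\beta}\,p_{\beta}(i)}{p_i}=\Big(\sum_{i\in Y}p_i\Big)\ln Z_{\beta}-D_{KL}(p\|p_{\beta})=\ln Z_{\beta}-D_{KL}(p\|p_{\beta}),
\end{equation*}
using $\sum_i p_i=1$. Dividing back by $\beta$ yields the pointwise identity
\begin{equation*}
\E_p[U]-\frac{1}{\beta}D_{KL}(p\|q)=\frac{1}{\beta}\ln Z_{\beta}-\frac{1}{\beta}D_{KL}(p\|p_{\beta}),
\end{equation*}
valid for every $p\in\Delta_n$; here the conventions on $D_{KL}$ handle any vanishing coordinates, and $q\in\Delta^{\circ}(Y)$ guarantees $p\ll q$ so that the left-hand divergence is finite.

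Finally I would invoke the nonnegativity of relative entropy recalled earlier, namely $D_{KL}(p\|p_{\beta})\ge 0$ with equality if and only if $p=p_{\beta}$. Since $\frac{1}{\beta}\ln Z_{\beta}$ is independent of $p$ and $\beta>0$, the right-hand side is maximised precisely by forcing the divergence to zero, i.e.\ at $p=p_{\beta}$, which gives $U_{\beta}^*=\frac{1}{\beta}\ln Z_{\beta}$ and $p_{\beta}^*=p_{\beta}$, as claimed. I expect no genuine obstacle: the whole argument rests on this completion identity, which reduces the constrained optimisation to the Gibbs inequality and simultaneously delivers uniqueness of the maximiser at no extra cost. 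The only points requiring care are bookkeeping ones --- verifying that $p_{\beta}\in\Delta^{\circ}(Y)$ so the substitution $q_i e^{\beta u_i}=Z_{\beta}p_{\beta}(i)$ is legitimate, and tracking the sign of $\beta$ when dividing the identity. By contrast, a direct Lagrange-multiplier computation would recover the same critical point but would then require separately establishing that it is the global maximum via the strict concavity of $p\mapsto\E_p[U]-\frac{1}{\beta}D_{KL}(p\|q)$, a step the present approach sidesteps.
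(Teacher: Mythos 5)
Your proof is correct, and it takes a genuinely different route from the paper. The paper itself gives no in-text proof of Proposition~\ref{prop:one} --- it attributes the result to the literature --- and its own derivation of the Boltzmann--Gibbs form appears only later, via the Karush--Kuhn--Tucker analysis of the equivalent constrained problem~(\ref{eq:constrained_opti}): the stationarity conditions~(\ref{eq:lagr1})--(\ref{eq:lagr4}) yield $p_i=q_iZ_\beta^{-1}e^{\beta u_i}$, and the correspondence between the multiplier and the constraint radius is then settled separately through Lemma~\ref{eq:r_beta_relation}. Your argument instead rests on the completion identity $\E_p[U]-\frac{1}{\beta}D_{KL}(p\|q)=\frac{1}{\beta}\ln Z_\beta-\frac{1}{\beta}D_{KL}(p\|p_\beta)$, i.e.\ the Gibbs variational principle, and this buys you three things the multiplier computation does not give for free: global (not merely stationary) optimality, uniqueness of the maximiser directly from $D_{KL}(p\|p_\beta)=0\Leftrightarrow p=p_\beta$, and no need for constraint qualifications or a separate strict-concavity argument. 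Your bookkeeping is also sound: $q\in\Delta^{\circ}(Y)$ forces $p_\beta\in\Delta^{\circ}(Y)$, so the substitution $q_ie^{\beta u_i}=Z_\beta p_\beta(i)$ is legitimate and $D_{KL}(p\|p_\beta)$ is finite for every $p\in\Delta_n$, while the conventions $0\ln 0=0$ handle vanishing coordinates when the logarithm is split. What the paper's KKT route buys in exchange is structural information your argument does not expose: it treats the constrained and multiplier formulations in parallel, produces the $r\leftrightarrow\beta(r)$ correspondence, and feeds directly into the geometric picture of the solution as an $e$-geodesic through $q$. The one caveat worth stating explicitly is that your division by $\beta$ requires $\beta>0$; this is harmless here since $\beta=0$ makes $\frac{1}{\beta}$ in~(\ref{eq:max}) meaningless anyway, but it is the step that would need revisiting for the $\beta<0$ regime the paper touches in its asymptotic analysis.
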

Note that for $p=q$ the cost term in~(\ref{eq:max}) is zero for all $\beta$, but for $0 < \beta < \infty$ the optimal probability differs from the prior. Despite the positive net cost, the expected net utility is greater because the chosen probability better fits the structure of the utility function.

\begin{cor}
The expected utility, $\E_{p^*_{\beta}}[U]$, is strictly increasing for $\beta\in\R_+$, and it satisfies
$$
\E_q[U]< U^*_{\beta}<\E_{p^*_{\beta}}[U]<u_{i^*_{\max}}.
$$
\end{cor}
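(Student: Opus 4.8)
The plan is to reduce the entire statement to the convexity of the log-partition function. I set $g(\beta):=\ln Z_{\beta}$, so that $U^*_{\beta}=g(\beta)/\beta$ by Proposition~\ref{prop:one}. First I would record the three facts already available in the excerpt. The normalisation gives $g(0)=\ln\sum_{i}q_i=0$. By~(\ref{eq:first_cumulant}) we have $g'(\beta)=\E_{p^*_{\beta}}[U]$, and in particular $g'(0)=\E_{p_0}[U]=\E_q[U]$, since $p_0(i)=q_ie^{0}/Z_0=q_i$. By~(\ref{eq:second_cumulant}), $g''(\beta)=\operatorname{Var}_{p^*_{\beta}}(U)>0$ whenever $U$ is non-constant, so $g$ is strictly convex and $g'$ strictly increasing on $\R_+$. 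The monotonicity claim is then immediate: because $\E_{p^*_{\beta}}[U]=g'(\beta)$ and $\frac{d}{d\beta}g'(\beta)=g''(\beta)>0$, the expected utility is strictly increasing in $\beta$.

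For the middle two inequalities I would invoke the secant-slope bracketing for a strictly convex differentiable function: for $\beta>0$,
\[
g'(0)<\frac{g(\beta)-g(0)}{\beta-0}<g'(\beta).
\]
Substituting $g(0)=0$ turns the central quotient into $g(\beta)/\beta=U^*_{\beta}$, while the two outer terms are exactly $\E_q[U]$ and $\E_{p^*_{\beta}}[U]$. This produces $\E_q[U]<U^*_{\beta}<\E_{p^*_{\beta}}[U]$ in a single step. Equivalently, one can write $g(\beta)=\int_0^{\beta}g'(t)\,dt$ and use that $g'(t)$ is squeezed strictly between $g'(0)$ and $g'(\beta)$ for $t\in(0,\beta)$; this is consistent with the limits $\lim_{\beta\to0}U^*_{\beta}=\E_q[U]$ already established in the preceding proposition.

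The final inequality $\E_{p^*_{\beta}}[U]<u_{i^*_{\max}}$ I would obtain from the explicit form of the optimiser. Since $q\in\Delta^{\circ}(Y)$, the Boltzmann--Gibbs distribution $p^*_{\beta}(i)=q_ie^{\beta u_i}/Z_{\beta}$ has full support, so $\E_{p^*_{\beta}}[U]=\sum_i p^*_{\beta}(i)u_i$ is a strict convex combination of the values $u_i$; as the maximiser $i^*_{\max}$ is assumed unique, at least one index carries positive weight with $u_i<u_{i^*_{\max}}$, which forces the strict inequality.

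The computation is routine once $g(0)=0$ and strict convexity are in hand, so I do not expect a substantial obstacle; the only real care lies in maintaining strictness throughout the chain. This rests on two standing hypotheses that I would flag at the outset: that $U$ is non-constant (so that $g''>0$ and the convex-function inequalities are genuinely strict) and that $u_{i^*_{\max}}$ is attained at a unique index (so that the last step is strict). If $U$ were constant, every term would coincide and the chain would degenerate to equalities, so both assumptions are essential rather than cosmetic.
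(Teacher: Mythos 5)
Your proof is correct and is essentially the route the paper intends: the corollary is stated without its own proof as a consequence of the cumulant identities~(\ref{eq:first_cumulant})--(\ref{eq:second_cumulant}) and the preceding monotonicity proposition, whose integral representation $\frac{1}{\beta}\ln Z_{\beta}=\int_0^1\E_{p^*_{\beta t}}[U]\,dt$ is exactly your secant-slope inequality $g'(0)<g(\beta)/\beta<g'(\beta)$ in disguise. Your explicit flagging of the two strictness hypotheses (non-constant $U$, which the paper encodes via the assumed uniqueness of $i^*_{\max}$ and $i^*_{\min}$, and full support of $q$) is a welcome clarification but not a departure from the paper's argument.
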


\subsection{Geometry of the solution: geodesics}
Let both $q\in\Delta^{\circ}(Y)$ and $r\geq0$ be fixed. We consider the optimisation problem
\begin{eqnarray}
\label{eq:constrained_opti}
\max_{p\in\Delta(Y)}\E_p[U]\\\nonumber
\text{s.t.}\quad D_{\operatorname{KL}}(p||q)\leq r. 
\end{eqnarray}
An application of the Karush-Kuhn-Tucker method~\cite[Kapitel~20]{pampel2017arbeitsbuch} gives the Lagrange function\footnote{Note that the use of $\beta$ to denote one of the Lagrange multipliers is, at this point, merely a matter of notation.}
\begin{equation}
L(p,\lambda,\xi,\beta)=\sum_{i=0}^n p_i u_i+\lambda\left(1-\sum_{i=0}^n p_i\right)+\sum_{i=0}^n\xi_i p_i+\beta\left(r-\sum_{i=0}^np_i\ln\frac{p_i}{q_i}\right),
\end{equation}
with $\xi_i\geq0$ and $\beta\geq0$, as required for a maximum.

The condition $\nabla_p L=0$, gives the following system of equations for $i\in[n]=Y$:\begin{eqnarray}
\label{eq:lagr1}
     u_i-\lambda+\xi_i-\beta(\ln\frac{p_i}{q_i}+1)=0,&&\\
     \label{eq:lagr2}
     1-p_1-\dots-p_n=0,&&\\
     \label{eq:lagr3}
     \xi_i p_i=0,&&\\\label{eq:Div_KL_r}
     \label{eq:lagr4}
     \beta (r-D_{\operatorname{KL}}(p||q))=0.&&
\end{eqnarray}
In order for the solution to satisfy $p\in\Delta^{\circ}(Y)$, it is necessary that  $\xi_i=0$ for all $i\in Y$.

Condition~(\ref{eq:lagr1}) gives $q_ie^{-1}e^{\beta(u_i-\lambda)}=p_i$, and by summing over $i$ and taking~(\ref{eq:lagr2}) into account, we get $\sum_i q_ie^{-1}e^{\beta(u_i-\lambda)}=1$. This gives $\beta^{-1}(\ln Z_{\beta}-1)=\lambda$, where $Z_{\beta}$ is defined in~(\ref{eq:partition_sum}). Then for $\beta$ fixed, and with the substitution $\beta\to\frac{1}{\beta}$, we get
\begin{equation}
\label{eq:solution_Boltzmann}
p_i=q_i\frac{1}{Z_{\beta}}e^{\beta u_i},
\end{equation}
which is the Gibbs distribution.

The condition~(\ref{eq:lagr4}) requires a distinction of cases: for $\beta=0$ we get from~(\ref{eq:solution_Boltzmann}) that $p\equiv q$. For $\beta\neq0$, 
the optimal value of the Lagrange multiplier $\beta^*=\beta^*(r)$ is determined from the condition $r=D_{\operatorname{KL}}(p||q)$. The solution is then given by the first-order differential equation
\begin{equation}
\label{eq:bregman}
r=\frac{1}{Z_{\beta}}\sum_{i=0}^nq_ie^{\beta u_i}\ln\frac{e^{\beta u_i}}{Z_{\beta}}=\beta\cdot\E_{p_{\beta}^*}[U]-\ln Z_{\beta}=\beta\frac{d}{d\beta}\ln Z_{\beta}-\ln Z_{\beta},
\end{equation}
where strict monotonicity of $U^*_{\beta}$ implies that for $r>0$, $\beta^*>0$ must hold.

\begin{lem}
\label{eq:r_beta_relation}
The function $r=r(\beta)$, defined by~(\ref{eq:bregman}), is strictly increasing for $\beta\in\R_+$ (cf. figure below). 
It satisfies $r(0)=0$ and $\lim_{\beta\to+\infty} r(\beta)=r_{\max}:=-\ln q_{i^*_{\max}}$, $i^*_{\max}:=\argmax_{i\in Y} u_i$. 
Hence the inverse function $r\mapsto\beta(r)\in\R_+$  exists for all $r\in[0,r_{\max}]$, and is necessarily strictly increasing.
\end{lem}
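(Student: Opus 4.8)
The plan is to recognise the right-hand side of~(\ref{eq:bregman}) as a Legendre-type transform of the log-partition function and to read off its monotonicity from the cumulant identities already established. Write $\Lambda(\beta):=\ln Z_\beta$, so that~(\ref{eq:first_cumulant}) gives $\Lambda'(\beta)=\E_{p^*_\beta}[U]$ and~(\ref{eq:second_cumulant}) gives $\Lambda''(\beta)=\operatorname{Var}_{p^*_\beta}(U)$, which is strictly positive because the standing hypothesis that $\argmax$ and $\argmin$ are unique forces $U$ to be non-constant. In this notation the defining relation~(\ref{eq:bregman}) reads simply $r(\beta)=\beta\,\Lambda'(\beta)-\Lambda(\beta)$, which is the Legendre transform of $\Lambda$ evaluated at $\Lambda'(\beta)$ and also equals $D_{\operatorname{KL}}(p^*_\beta||q)$. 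Strict monotonicity then follows by a one-line differentiation,
\[
r'(\beta)=\Lambda'(\beta)+\beta\,\Lambda''(\beta)-\Lambda'(\beta)=\beta\,\Lambda''(\beta)=\beta\,\operatorname{Var}_{p^*_\beta}(U),
\]
so that $r'(\beta)>0$ for every $\beta>0$. Since $\Lambda$ is smooth, $r$ is smooth, and in particular continuous.

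Next I would pin down the two boundary values. For the left endpoint, $\Lambda(0)=\ln\sum_i q_i=\ln 1=0$ and hence $r(0)=0\cdot\Lambda'(0)-0=0$; equivalently $p^*_0=q$ gives $D_{\operatorname{KL}}(q||q)=0$. For the limit at $+\infty$, set $u^*:=u_{i^*_{\max}}$ and factor out the dominant exponential, writing $Z_\beta=e^{\beta u^*}\bigl(q_{i^*_{\max}}+\sum_{i\neq i^*_{\max}}q_i e^{\beta(u_i-u^*)}\bigr)$. As $u_i-u^*<0$ for $i\neq i^*_{\max}$, the parenthetical term tends to $q_{i^*_{\max}}$, so $\Lambda(\beta)=\beta u^*+\ln q_{i^*_{\max}}+o(1)$, while $\Lambda'(\beta)=\E_{p^*_\beta}[U]\to u^*$. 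Combining these, $r(\beta)=\beta\bigl(\Lambda'(\beta)-u^*\bigr)+\bigl(\beta u^*-\Lambda(\beta)\bigr)\to 0-\ln q_{i^*_{\max}}=r_{\max}$. Equivalently, one may invoke the Dirac limit $p^*_\beta\to\delta_{i^*_{\max}}$ together with the identity $D_{\operatorname{KL}}(\delta_{i^*_{\max}}||q)=-\ln q_{i^*_{\max}}$ recalled earlier.

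Finally, the existence of the inverse is a formal consequence: $r$ is continuous and strictly increasing on $[0,\infty)$ with $r(0)=0$ and $\sup_\beta r(\beta)=r_{\max}$, so by the intermediate value theorem it is a bijection onto the range $[0,r_{\max})$, and its inverse $r\mapsto\beta(r)$ is continuous and strictly increasing there, with $\beta(r)\to+\infty$ as $r\uparrow r_{\max}$ (the endpoint $r_{\max}$ is approached but not attained at any finite $\beta$). The step I expect to require the most care is the $+\infty$ limit, specifically verifying that $\beta\bigl(\Lambda'(\beta)-u^*\bigr)\to 0$: the prefactor $\beta$ grows, and one must confirm that it is dominated by the exponential decay $e^{\beta(u_i-u^*)}$ of each sub-maximal term in $\E_{p^*_\beta}[U-u^*]=\sum_{i\neq i^*_{\max}}p^*_\beta(i)(u_i-u^*)$. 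Everything else is either a one-line differentiation or a direct appeal to the cumulant formulas~(\ref{eq:first_cumulant})--(\ref{eq:second_cumulant}) and the Dirac/disintegration facts stated above.
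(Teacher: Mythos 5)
Your proof is correct and its core coincides exactly with the paper's: the paper's entire proof consists of the one-line differentiation $r'(\beta)=\Lambda'(\beta)+\beta\Lambda''(\beta)-\Lambda'(\beta)=\beta\cdot\operatorname{Var}_{p^*_{\beta}}(U)>0$, which is precisely your computation with $\Lambda=\ln Z_{\beta}$. Beyond that you rigorously supply the boundary values $r(0)=0$ and $\lim_{\beta\to+\infty}r(\beta)=-\ln q_{i^*_{\max}}$, including the delicate check that $\beta\bigl(\Lambda'(\beta)-u^*\bigr)\to0$ (the paper leaves these limits to its earlier asymptotic proposition and the identity $D_{\operatorname{KL}}(\delta_i||q)=-\ln q_i$), and your observation that $r_{\max}$ is approached but never attained at finite $\beta$, so the inverse is strictly defined on $[0,r_{\max})$, is a correct sharpening of the lemma's stated range $[0,r_{\max}]$.
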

\begin{equation*}
\qquad\qquad\qquad\qquad\qquad\includegraphics[width=0.3\linewidth]{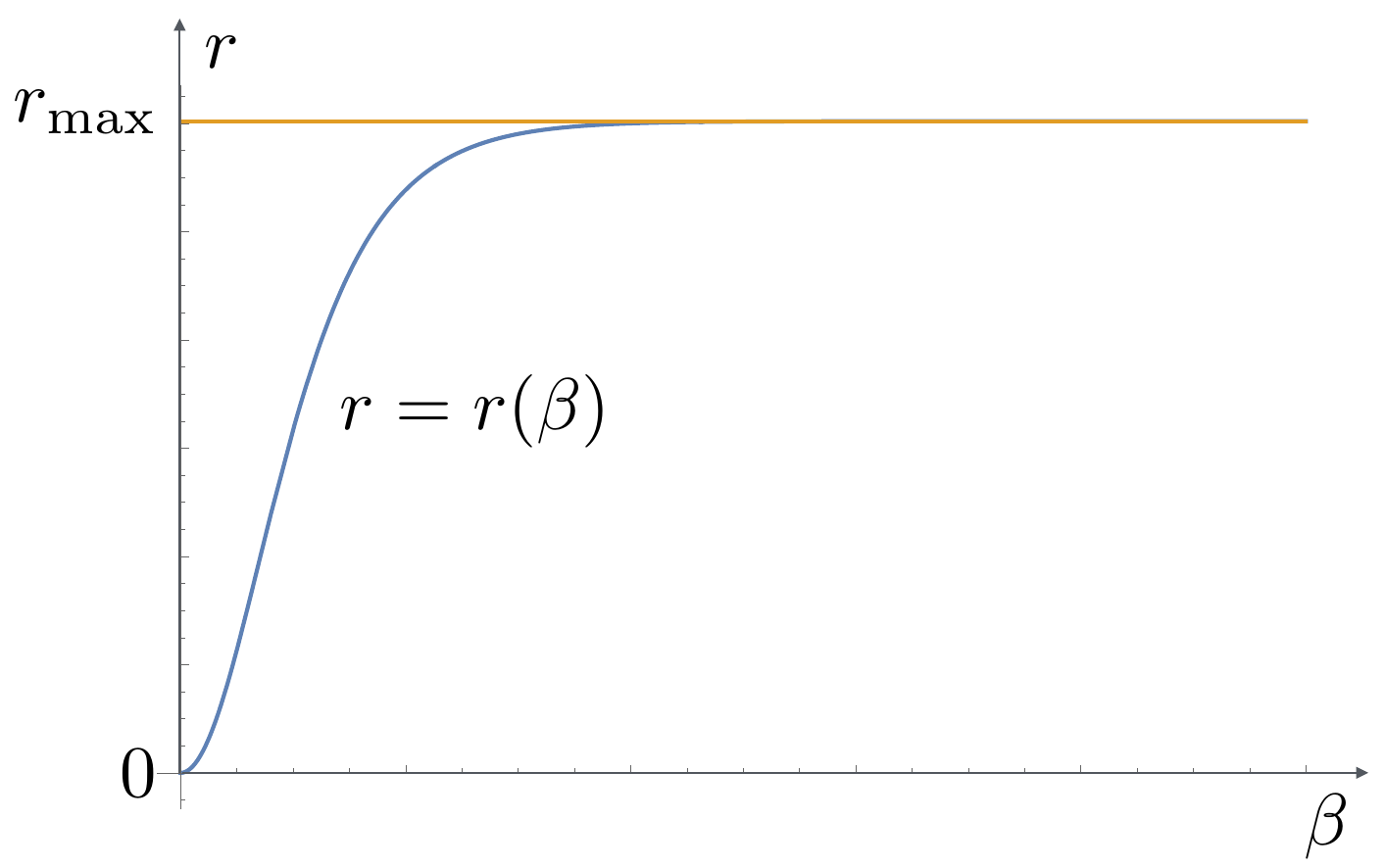} 
\end{equation*}
\begin{proof}
From (\ref{eq:bregman}) we have $r(\beta)=\beta\frac{d}{d\beta}\ln Z_{\beta}-\ln Z_{\beta}$, and so for $\beta>0$ we get
$$
r'(\beta)=\frac{d}{d\beta}\ln Z_{\beta}+\beta\frac{d^2}{d\beta^2}\ln Z_{\beta}-\frac{d}{d\beta}\ln Z_{\beta}=\beta\cdot\operatorname{Var}_{p^*_{\beta}}(U)>0\quad(\text{by~(\ref{eq:second_cumulant}})). 
$$
\end{proof}
Let us make the following remark. The relation between the formulations in~(\ref{eq:max}) and in~(\ref{eq:constrained_opti}) follows from Lemma~\ref{eq:r_beta_relation} and is given by the correspondence $1/\beta(r) \leftrightarrow r$.
\begin{prop}
The solution of~(\ref{eq:max}) gives an $e$-geodesic $\gamma^{(e)}:\R\rightarrow\Delta(Y)$ with initial conditions $\gamma^{e}(0)=q$, $\dot{\gamma}^{(e)}(0)\in T_q\Delta^{\circ}(Y)$ for the tangent vector, whose components are   
\begin{equation}
 \dot{\gamma}^{(e)}_i(0)=q_i(U(i)-\E_q[U]),\quad i\in Y.
\end{equation}
The geodesic has the following asymptotic bounds:
\begin{eqnarray}
    \lim_{t\to+\infty}\gamma^{(e)}(t)&=&\delta_{\argmax_iU(i)},\\
    \lim_{t\to-\infty}\gamma^{(e)}(t)&=&\delta_{\argmin_iU(i)}.
\end{eqnarray}

\end{prop}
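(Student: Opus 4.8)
The plan is to recognise the solution curve $\beta\mapsto p_\beta^*$ of~(\ref{eq:max}) as a one-parameter exponential family and to invoke the identification of such families with $e$-geodesics. By Proposition~\ref{prop:one} the maximiser is the Boltzmann--Gibbs distribution $p_\beta^*(i)=Z_\beta^{-1}q_ie^{\beta u_i}$, which exists for every $\beta\in\R$; I set $\gamma^{(e)}(\beta):=p_\beta^*$ and take $t=\beta$ as the curve parameter. Passing to logarithms gives
\[
\ln\gamma^{(e)}(\beta)(i)=\ln q_i+\beta\,U(i)-\ln Z_\beta,
\]
which is affine in $\beta$ in the log-coordinates, with carrier $q$, single sufficient statistic $U$, and log-partition function $\psi(\beta)=\ln Z_\beta$. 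Comparing with the defining equation $\ln\gamma^{(e)}_{p,q}(t)=\ln p+t(\ln q-\ln p)-\psi(t)$, the log-direction $\ln q-\ln p$ is here replaced by $U$, so $\gamma^{(e)}$ is precisely the $e$-geodesic through $q$ with log-direction $U$. Since the scale of the tangent vector is parametrisation dependent, all assertions below are understood for the parametrisation $t=\beta$.

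For the initial conditions I would first evaluate at $\beta=0$: since $Z_0=\sum_j q_j=1$ one gets $\gamma^{(e)}(0)(i)=q_i$, i.e. $\gamma^{(e)}(0)=q$. Differentiating $p_\beta^*(i)=Z_\beta^{-1}q_ie^{\beta u_i}$ yields $\tfrac{d}{d\beta}p_\beta^*(i)=p_\beta^*(i)\bigl(U(i)-\E_{p_\beta^*}[U]\bigr)$, and evaluating at $\beta=0$ with~(\ref{eq:first_cumulant}) gives $\dot\gamma^{(e)}_i(0)=q_i\bigl(U(i)-\E_q[U]\bigr)$, as claimed. A one-line check $\sum_i q_i(U(i)-\E_q[U])=\E_q[U]-\E_q[U]=0$ confirms that this vector lies in $\mathscr{S}_0(Y)$, hence $\dot\gamma^{(e)}(0)\in T_q\Delta^{\circ}(Y)$.

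The asymptotic bounds I would obtain directly from the sigmoidal Proposition above: its low-temperature limits already give $\lim_{\beta\to+\infty}p_\beta^*=\delta_{i^*_{\max}}$ and $\lim_{\beta\to-\infty}p_\beta^*=\delta_{i^*_{\min}}$, using the assumed uniqueness of the maximiser and minimiser; substituting $i^*_{\max}=\argmax_i U(i)$ and $i^*_{\min}=\argmin_i U(i)$ yields the two stated limits.

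The main obstacle --- and the only genuine subtlety --- is the passage between the segment-wise definition of the $e$-geodesic and the present curve. The definition in the excerpt is stated for the geodesic joining two \emph{interior} points over $t\in[0,1]$, whereas $\gamma^{(e)}$ is naturally parametrised over all of $\R$ and its limiting endpoints $\delta_{i^*_{\max}}$, $\delta_{i^*_{\min}}$ lie on the boundary $\partial\Delta(Y)$, where the $e$-geodesic is no longer defined in the strict interior sense. I would therefore make explicit that $\gamma^{(e)}$ is the maximal $e$-autoparallel curve (equivalently, the full one-parameter exponential family) through $q$, and that the $\pm\infty$ limits are to be read as convergence in the closure $\Delta(Y)$ rather than as points on the geodesic itself.
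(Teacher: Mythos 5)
Your proposal is correct, and its skeleton matches the paper's: both start from the Boltzmann--Gibbs maximiser of~(\ref{eq:max}) given by Proposition~\ref{prop:one}, obtain $\gamma^{(e)}(0)=q$ from $Z_0=1$, compute the tangent components $q_i\bigl(u_i-\E_q[U]\bigr)$ by differentiating at $\beta=0$ using~(\ref{eq:first_cumulant}), check $\sum_i v_i=0$, and read off the $\beta\to\pm\infty$ limits from the earlier sigmoidal proposition under the stated uniqueness of $i^*_{\max}$ and $i^*_{\min}$. The one genuine difference is how you certify that the curve is an $e$-geodesic: the paper writes down the second-order autoparallel equation~(\ref{eq:second_order_diff}) together with its closed-form solution~(\ref{eq:geodesic}) from Ay et al., and verifies by calculation that $\gamma^{(e)}(t)=q\,e^{tU}/Z_t$ with the stated tangent vector solves it (equivalently, substituting $v_i/q_i=u_i-\E_q[U]$ into~(\ref{eq:geodesic}) reproduces the Gibbs curve, the common factor $e^{-t\E_q[U]}$ cancelling in the normalisation). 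You instead identify the curve as log-affine --- a one-parameter exponential family with carrier $q$ and sufficient statistic $U$ --- and invoke the identification of such families with $e$-geodesics. Both are sound; the paper's route stays entirely within the machinery it sets up (the ODE plus the cited solution formula), while yours is shorter but leans on the exponential-family/$e$-autoparallel equivalence, which the paper's segment-wise definition over $[0,1]$ between interior points does not literally provide. Your closing paragraph addresses exactly this (maximal $e$-autoparallel curve over $\R$, with the $\pm\infty$ limits read as convergence in the closure $\Delta(Y)$), making explicit a boundary subtlety the paper leaves implicit; and your derivative identity $\tfrac{d}{d\beta}p^*_\beta=p^*_\beta\bigl(U-\E_{p^*_\beta}[U]\bigr)$ is precisely the gradient-flow reading of the Boltzmann--Gibbs family that the paper notes after the proposition.
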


\begin{figure}
    \centering
    \includegraphics[width=0.5\linewidth]{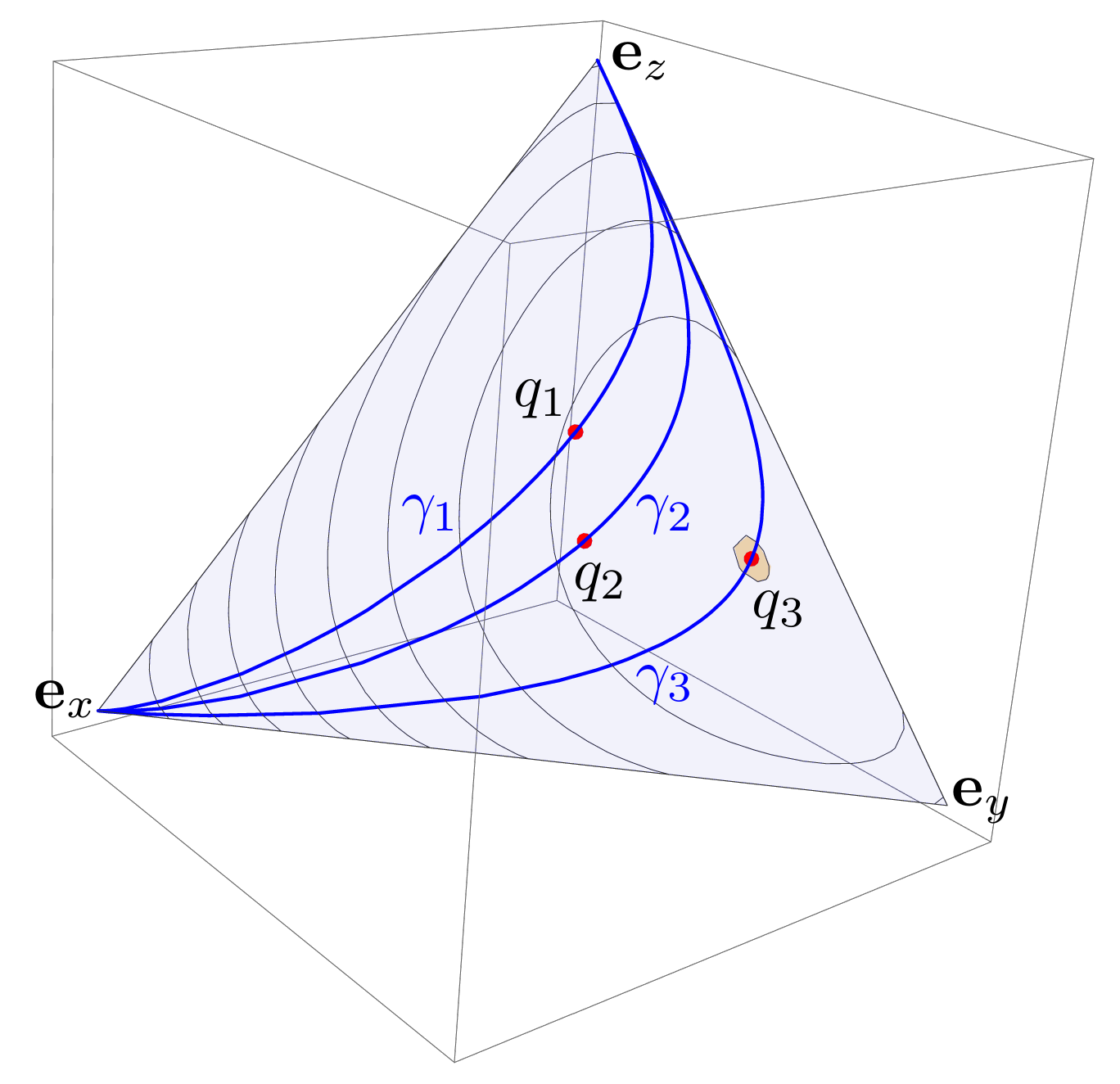}
    \caption{The probability simplex with priors $q_1$, $q_2$ and $q_3$ and $e$ geodesics ${\color{blue} \gamma_i}$ through $q_i$ ({\color{blue} blue curves}), parameterised by $\beta$. The contours show different KL spheres in $\Delta_2$ with different radii, all centred on point $q_3$.}
   \label{fig:Bregman_Prob_simplex}
\end{figure}
The explicit solution is obtained by solving the second-order differential equation\footnote{For simplicity, we omit the superscript $(e)$.}
\begin{equation}
\label{eq:second_order_diff}
    \ddot{\gamma}-\frac{\dot{\gamma}^2}{\gamma}+\gamma\sum_{i\in Y}\frac{\dot{\gamma}^2_i}{\gamma_i}=0,
\end{equation}
with initial conditions $\gamma(0)=q\in\Delta^{\circ}(Y)$ and $\dot{\gamma}(0)=v\in T_{q}\Delta^{\circ}(Y)$. The solution is given in components by (cf.~\cite[p. 46]{ay2017information})
\begin{equation}
\label{eq:geodesic}
t\mapsto\frac{q_ie^{t\frac{v_i}{q_i}}}{\sum_{j\in[n]}q_je^{t\frac{v_j}{q_j}}},\quad i\in Y.
\end{equation}
Then $\gamma^{(e)}(t):=q\frac{1}{Z_t}e^{tU}$, with 
\begin{equation}
    v_i=\dot{\gamma}_i^{(e)}(0)=\frac{d}{dt}\Big|_{t=0}q_i\frac{e^{tu_i}}{Z_t}=q_i\left(u_i-\E_q[U]\right),\quad i\in Y,
\end{equation}
as the tangent vector at the origin, does indeed satisfy~(\ref{eq:second_order_diff}), as a calculation shows. 
We also have $\sum_{i\in Y} v_i=\sum_{i\in Y}q_iu_i-\E_q[U]\cdot\sum_{i\in Y}q_i=0$, as required for a tangent vector.  
The Boltzmann-Gibbs distribution~(\ref{eq:Boltzmann_Gibbs}) can in fact be interpreted as the gradient flow of the expected value of the utility function $U$, cf.~\cite[Example 3.3, p. 229]{pistone2020information}. 

For $X$ and $Y$ discrete, let $U(x,y)$ be a state-dependent utility function as in~(\ref{eq:utility}) and $\kappa\in\mathfrak{K}(X,Y)$ be a Markov kernel representing a collection of state-dependent a priori measures.  
Then the corresponding solution is given by a one-parameter family of Markov kernels $K^*_{\beta}$, with components 
\begin{equation}
\label{eq:individual_free_energy}
k^*_{\beta}(x,y)=\kappa(x,y)\frac{1}{Z_{\beta}(x)}e^{\beta U(x,y)},
\end{equation}
where $Z_{\beta}(x):=\sum_{y\in Y}\kappa(x,y)e^{\beta U(x,y)}$ is the state-dependent partition sum.

Geometrically, $\beta^*(r)$ is given by the intersection of the geodesic through $q$ and the Bregman sphere for increasing values of $r$, cf. Figure~\ref{fig:Bregman_Prob_simplex}.
Changing the prior $q$ while keeping the utility fixed gives different geodesics that still converge to the vertices where the utility reaches its maximum and minimum.
If the utility function is changed but the prior is kept, the geodesic still passes through $q$, but the vertices may be different.\footnote{For example, consider the following cases: $\left(\begin{array}{cc}1 & 0 \\0 & 0\end{array}\right)$, $\left(\begin{array}{cc}1 & 0 \\1 & 0\end{array}\right)$, $\left(\begin{array}{cc}1 & 1 \\0 & 0\end{array}\right)$, $\left(\begin{array}{cc}1 & 0 \\0 & 1\end{array}\right)$}

\section{A constraint robust-control problem}
\label{sec:robust}
We discuss a generalisation of~\cite{tishby2000information} and~\cite[p. 4]{genewein2015bounded} from a geometrical point of view.
Let $X$ and $\tilde{Y}$ be finite sets, $(X, 2^X,\P)$ a finite probability space with source $\P=(p(x))_{x\in X}$ and $U:X\times\tilde{Y}\to\R$ a state-dependent utility function as in~(\ref{eq:utility}). 
Given a source probability, the question of which Markov prior ${\color{red}\kappa}$ maximises the individual differences in free energy on average, as in~(\ref{eq:max},\ref{eq:individual_free_energy}), leads to the following optimisation problem
\begin{equation}
\label{eq:inf_inequality}
\argmax_{\color{red}\kappa}\left(\sum_{x\in X}p(x)\cdot\left(\max_{K_x\in\Delta(\operatorname{supp}({\color{red}\kappa_x}))}\left(\E_{K_x}[U_x]- \frac{1}{\beta}D_{\operatorname{KL}}(K_x||{\color{red}\kappa_x})\right)\right)\right).
\end{equation}
First, for~(\ref{eq:inf_inequality}) to be well defined at all, {\color{red} $\operatorname{supp}(K_x)\subset\operatorname{supp}(\kappa_x)$} must hold for all $x\in X$ and all kernels. 
Next we assume that the support of all kernels to be considered is given by a rule, i.e. $\operatorname{supp}(\kappa_x)=\mathfrak{O}_x\subset\tilde{Y}$ for all $x\in X$ and $\kappa$. 
Then a uniform domain $Y\subset\tilde{Y}$ can be obtained for the optimisation problem, e.g. by setting $Y: =\bigcap_{x\in X}\mathfrak{O}_x$ and considering only priors $\kappa\in\mathfrak{K}(X,Y)$ with $\operatorname{supp}(\kappa_x)=Y$ for all $x$ (written as $\operatorname{supp}(\kappa)=Y$ for short).
However, restricting the set of possible actions on the basis of legal or moral constraints leads from a purely utilitarian first-best problem to a second-best problem, cf.~Section~\ref{sec:legal_app} and~\cite[p. 818 et seq.]{Mas-Colell1995}. 

\subsection{Generic kernel}
For a generic prior $\kappa\in\mathfrak{K}(X,Y)$ with $\operatorname{supp}(\kappa)=Y$ and $K\in\mathfrak{K}(X,Y)$, considered as a parameter, set $\pi(x,y):=p(x)K(x,y)$. 
Then, instead of maximising~(\ref{eq:inf_inequality}), we can equivalently determine the kernel $\kappa$ which, for a fixed $K$, minimises the expected relative entropy $\sum_{x\in X} p(x) D_{\operatorname{KL}}(K(x,y)||\kappa_x(y))$, i.e. 
\begin{equation}
\label{eq:generic_kernel}
\kappa^*:=\argmin_{\kappa\in\mathfrak{K}(X,Y)}\sum_{x\in X,y\in Y}\pi(x,y)\ln\frac{\pi(x,y)}{p(x)\kappa(x,y)}=\argmin_{\kappa\in\mathfrak{K}(X,Y)} D_{\operatorname{KL}}(\P\rtimes K||\P\rtimes\kappa).
\end{equation}
Since we have ${\tilde{\pi}(x,y):=p(x)\kappa(x,y)},\pi(x,y)\in\mathscr{M}_{\P}$, (cf. ~Figure~\ref{fig:combined1}), the minimum in~(\ref{eq:generic_kernel}) is attained for $\tilde{\pi}=\pi$, i.e. for $\kappa^*=K$.

Analytically, the Lagrangian corresponding to~(\ref{eq:generic_kernel}) is
$$
L(\kappa(x,y),\lambda_x)=\sum_{x,y}\pi(x,y)\ln\frac{K(x,y)}{\kappa(x,y)}+\sum_x\lambda_x(1-\sum_y\kappa(x,y)).
$$
Taking derivatives w.r.t. $\partial/\partial\kappa(x_i,y_j)$, for a given $K(x,y)$, one gets
\begin{equation}
\label{eq:aux1}
\pi(x_i,y_j)=\lambda_{x_i}\kappa(x_i,y_j).
\end{equation}
Summing over $y$ and taking into account the relations $\sum_y\pi(x_i,y)=p(x_i)$ and $\sum_y\kappa(x_i,y)=1$ gives $p(x_i)=\lambda_{x_i}$. Plugging this result into~(\ref{eq:aux1}) we obtain
$$
\pi(x_i,y_j)=p(x_i)\kappa(x_i,y_j).
$$
This shows that for all $x_i,y_j$ we must have $\kappa(x_i,y_j)=K(x_i,y_j)$, i.e. $\kappa^*=K$. 
The remaining inner optimisation in~(\ref{eq:inf_inequality}), which is independent of $\beta$, reduces to 
\begin{equation}
\label{eq:opt_inner_general}
K^*:=\argmax_{K\in\mathfrak{K}(X,Y)}\E_{\P\rtimes K}[U].
\end{equation}
The $\beta$-independent solution of~(\ref{eq:opt_inner_general}) is given for all $x\in X$ by
\begin{equation}
\label{eq:sol_opt_inner_general}
K^*_x=\delta_{xy^*(x)},\quad\text{for}~y^*(x):=\argmax_{y\in Y}U(x,y), 
\end{equation}
which corresponds to (\ref{eq:individual_free_energy}) for $\beta=+\infty$.
\subsection{Constant kernel}
Let $\kappa_q\in\mathfrak{K}(X,Y)$ be the constant kernel corresponding to $q\in\Delta^{\circ}(Y)$.
The outer maximisation over $\kappa_q$ in~(\ref{eq:inf_inequality}), and otherwise as discussed in~(\ref{eq:generic_kernel}), yields the condition 
\begin{equation}
\label{eq:constant_kernel}
q^*:=\argmin_{q\in\Delta^{\circ}(Y)} D_{\operatorname{KL}}(\P\rtimes K;\P\otimes q).
\end{equation}
Geometrically, the solution is given by the $m$-projection of $\pi:=\P\rtimes K$ onto the exponential submanifold $\mathscr{E}_{\P}(0)$, and it is equal to $\P\otimes K_*\P$, i.e. the solution $q^*$ is given by the $Y$-marginal $\pi_Y=K_*\P$ of $\P\rtimes K$, cf.~(\ref{eq:mini_mutual_info}).
In contrast to the generic situation, the solution is restricted to an exponential submanifold. 

Analytically, the outer maximisation over $\kappa_q$ in~(\ref{eq:inf_inequality}), and otherwise as above, gives for~(\ref{eq:constant_kernel}), with $q=(q(y))_{y\in Y}$,\footnote{We do not write out the condition $q_i\geq0$ explicitly.} the Lagrangian
$$
L(q,\lambda)=\sum_{x,y}\pi(x,y)\ln\frac{K(x,y)}{q(y)}+\lambda(1-\sum_yq(y)).
$$
Taking derivatives w.r.t. $\partial/\partial q(y_i)$, we get
$$
\underbrace{\sum_x \pi(x,y_i)}_{=\pi_Y(y_i)}=\lambda q(y_i),
$$
where $\pi_Y$ is the $Y$-marginal of $\pi(x,y)$. 
Summing over $y$, with $\sum_y\pi_Y(y)=1$ and $\sum_yq(y)=1$, gives $\lambda=1$. This implies that $q(y)=K_*\P(y)(=\pi_Y(y))$ for all $y\in Y$.
Hence, the optimal prior is equal to $q^*=K_*\P$, cf.~\cite[Csisz{\'{a}}r and Tusn{\'{a}}dy's Lemma 2]{genewein2015bounded,tishby2000information}.
Plugging $q^*=K_*\P$ into~(\ref{eq:inf_inequality}) gives $\E_{\P\rtimes K}[U]-\frac{1}{\beta}\sum_{x,y}\pi(x,y)\ln\frac{K(x,y)}{(K_*\P)(y)}$, which by~(\ref{eq:mutual_info}) gives $\E_{\P\rtimes K}[U]-\frac{1}{\beta}I(\P;K)$, cf.~\cite[Equation (7)]{genewein2015bounded}. 
The resulting inner maximisation problem is 
\begin{eqnarray}
\label{eq:constrained_information}
\bar{U}(R)&:=&\underset{\pi\in\mathscr{M}_{\P}}{\max}\,\, \E_{\pi}[U],\\\nonumber
&\text{s.t.}& I(\pi)\leq R, R\geq0.
\end{eqnarray}
The function $\bar{U}(R)$, as a function of $R\in\R_+$, is called the rate utility function.
In contrast to the rate distortion function~\cite[Section 2.3]{Berger1971} or~\cite[Chapter 13]{cover1991elements}, which is defined by the minimisation of distortion for a fixed rate, the rate utility function is defined by the maximisation of utility.

`Bauer's Maximum Principle'~\cite[p. 298]{guide2006infinite} states that every upper-semicontinuous convex function defined on a compact convex subset $C$ of a locally convex Hausdorff space, has a maximiser that is an extreme point.
An extreme point of $C$ is a point that does not lie on an open line segment connecting two points of $C$. 
\begin{lem}
\label{lem:existence_solution}
The maximisation problem~(\ref{eq:constrained_information}) has a solution, which is not necessarily unique. It is reached at an extreme point of the convex, compact set $C_R:=\{\pi:I(\pi)\leq R\}$.
\end{lem}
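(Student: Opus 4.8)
The plan is to invoke \emph{Bauer's Maximum Principle}, stated just above, which requires three ingredients: a compact convex domain, a surrounding locally convex Hausdorff space, and an objective that is simultaneously upper-semicontinuous and convex. I would verify each in turn and then read off existence, the extreme-point property, and the failure of uniqueness.

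First I would pin down the domain. The feasible region of~(\ref{eq:constrained_information}) is precisely $C_R=\{\pi\in\mathscr{M}_{\P}:I(\pi)\leq R\}$, so maximising $\E_{\pi}[U]$ over $\pi\in\mathscr{M}_{\P}$ subject to $I(\pi)\leq R$ is the same as maximising over $C_R$. By the preceding lemma, $C_R$ is a compact convex subset of $\mathscr{M}_{\P}$. Since $\mathscr{M}_{\P}\subset\Delta(X\times Y)\subset\R^{|X||Y|}$ and finite-dimensional Euclidean space is a locally convex Hausdorff topological vector space, $C_R$ is a compact convex subset of such a space, as required.

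Next I would check the objective. The map $\pi\mapsto\E_{\pi}[U]=\sum_{x,y}\pi(x,y)\,U(x,y)$ is a linear functional of $\pi$; in particular it is continuous (hence upper-semicontinuous) and convex (indeed affine). All hypotheses of Bauer's Maximum Principle are therefore met, and it yields a maximiser of $\E_{\cdot}[U]$ on $C_R$ that is an extreme point of $C_R$. This establishes both the existence claim and the extreme-point claim.

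Finally, the non-uniqueness is a consequence of the same linearity: because $\E_{\pi}[U]$ is affine rather than strictly convex, the maximal value may be attained along an entire face of $C_R$ rather than at an isolated point, so the maximiser need not be unique. The only point demanding care is the dual role of linearity: it is precisely convexity (not concavity) that Bauer needs for a \emph{maximiser} at an extreme point, and the very same property simultaneously explains why uniqueness can fail. There is otherwise no serious obstacle, since compactness and convexity of $C_R$ have already been secured by the earlier lemma.
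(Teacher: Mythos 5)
Your proposal is correct and follows essentially the same route as the paper: the paper's proof likewise observes that $\E_{\bullet}[U]$ is a continuous linear functional on the compact convex set $C_R$ (compactness and convexity coming from the earlier lemma) and applies Bauer's Maximum Principle to obtain a maximiser at an extreme point. Your additional remarks — identifying $\R^{|X||Y|}$ as the ambient locally convex Hausdorff space and explaining non-uniqueness via the affineness of the objective — merely make explicit what the paper leaves implicit.
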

\begin{proof}
$\E_{\bullet}[U]$ is a continuous linear functional defined on the convex and compact set, $C_R$. Bauer's Maximum Principle implies that a maximum is attained at some extreme point of $C_R$.
\end{proof}
If the utility matrix $U$ has symmetries or degeneracies, then there are several $\pi\in C_R$ that maximise~(\ref{eq:constrained_information}), but in the generic case each point on the rate-utility curve is reached by a unique stochastic kernel (conditional probability). 
This is analogous to rate distortion theory, cf.~\cite[Theorem 2.4.2]{Berger1971}. 

Let $R_{\min}:=\min_{R\in\R_+}\left(\text{s.t.}~\max_{\pi\in\mathscr{M}_{\P}:I(\pi)\leq R}\E_{\pi}[U]=\bar{U}(R_{\max})\right)$ be the smallest value for which the utility rate function reaches its maximal value. By definition, it satisfies $0\leq R_{\min}\leq R_{\max}$.
\begin{lem}[Concavity of the rate-utility function]
\label{lem:utility_rate_function}
The rate-utility function $\bar{U}(R)$, defined in~(\ref{eq:constrained_information}), is a non-decreasing and concave function of $R\in[0,R_{\max}]$. It is strictly increasing on $[0,R_{\min}]$ and constant on $[R_{\min},R_{\max}]$.
\end{lem}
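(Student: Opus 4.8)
The plan is to establish the four claims in turn, using only the convexity facts already assembled. \emph{Monotonicity} is immediate: for $R_1 \le R_2$ the feasible sets are nested, $C_{R_1} \subseteq C_{R_2}$, so maximising the functional $\E_{\bullet}[U]$ over the larger set cannot decrease its value, whence $\bar{U}(R_1) \le \bar{U}(R_2)$.

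For \emph{concavity} I would fix $R_1, R_2 \in [0, R_{\max}]$ and $\lambda \in [0,1]$, and choose maximisers $\pi_1, \pi_2$ of~(\ref{eq:constrained_information}) at the two radii---these exist by Lemma~\ref{lem:existence_solution}---so that $\pi_i \in \mathscr{M}_{\P}$, $I(\pi_i) \le R_i$ and $\E_{\pi_i}[U] = \bar{U}(R_i)$. Set $\pi_{\lambda} := \lambda \pi_1 + (1-\lambda)\pi_2$. The $m$-convexity of $\mathscr{M}_{\P}$ (Lemma~\ref{prop:properties}, item~2) keeps $\pi_{\lambda}$ inside $\mathscr{M}_{\P}$, and the convexity of the mutual information on $\mathscr{M}_{\P}$ established in the Legendre-transform subsection (equivalently, convexity of $I(\P;K)$ in the kernel $K$ for a fixed source marginal, cf.~\cite[Thm.~2.7.4]{cover1991elements}) gives $I(\pi_{\lambda}) \le \lambda I(\pi_1) + (1-\lambda) I(\pi_2) \le \lambda R_1 + (1-\lambda) R_2$. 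Hence $\pi_{\lambda}$ is feasible at the radius $R_{\lambda} := \lambda R_1 + (1-\lambda) R_2$, and since $\pi \mapsto \E_{\pi}[U]$ is affine,
$$
\bar{U}(R_{\lambda}) \ge \E_{\pi_{\lambda}}[U] = \lambda\, \bar{U}(R_1) + (1-\lambda)\, \bar{U}(R_2),
$$
which is exactly the concavity inequality.

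The \emph{constancy on} $[R_{\min}, R_{\max}]$ then drops out: by definition $\bar{U}(R_{\min}) = \bar{U}(R_{\max}) =: M$ is the largest value attained, and monotonicity squeezes $M \le \bar{U}(R) \le \bar{U}(R_{\max}) = M$ for every $R$ in the interval. For the \emph{strict increase on} $[0, R_{\min}]$ I would argue by contradiction. Suppose $\bar{U}(R_1) = \bar{U}(R_2)$ for some $R_1 < R_2 \le R_{\min}$. If $R_2 = R_{\min}$, this already says that $R_1 < R_{\min}$ attains $M$, contradicting the minimality in the definition of $R_{\min}$. If $R_2 < R_{\min}$, then $\bar{U}(R_2) < M = \bar{U}(R_{\min})$; writing $R_2 = \mu R_1 + (1-\mu) R_{\min}$ with $\mu \in (0,1)$ and applying concavity yields $\bar{U}(R_2) \ge \mu \bar{U}(R_1) + (1-\mu) M > \mu \bar{U}(R_1) + (1-\mu)\bar{U}(R_2) = \bar{U}(R_2)$, using $\bar{U}(R_1) = \bar{U}(R_2)$ in the last equality, a contradiction.

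I expect the only genuinely delicate point to be the strict-increase claim, because it relies on reading $R_{\min}$ precisely as the \emph{smallest} maximising radius and on coupling that minimality with concavity; the two boundary subcases ($R_2 = R_{\min}$ versus $R_2 < R_{\min}$) must be separated as above. By comparison the concavity proof is routine once its two convexity inputs---$m$-convexity of $\mathscr{M}_{\P}$ and convexity of $I$ along mixtures with fixed source marginal---are invoked, while monotonicity and constancy are essentially bookkeeping.
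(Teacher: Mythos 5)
Your proof is correct and follows essentially the same route as the paper: nested feasible sets for monotonicity, the mixture $\pi_{\lambda}$ together with convexity of $I(\P;K)$ in the kernel for fixed source~\cite[Thm.~2.7.4]{cover1991elements} and affinity of $\pi\mapsto\E_{\pi}[U]$ for concavity. Your case analysis for the strict-increase claim merely spells out in detail what the paper compresses into the remark that a non-decreasing concave function on a compact interval can only be constant on a subinterval containing the right endpoint.
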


The subsequent proof is analogous to that of the convexity of the rate-distortion function~\cite[Lemma 13.4.1]{cover1991elements}.
\begin{proof}
$\bar{U}(R)$ is the maximum of the expectation over increasingly larger sets as $R$ increases. Thus $\bar{U}(R)$ is non-decreasing in $R$.

Let $(R_1,\bar{U}_1)$ and $(R_2,\bar{U}_2)$ be two rate utility pairs which lie on the rate-utility curve. Let the joint distributions that achieve these pairs be $\pi_1:=\P\rtimes K_1$ and $\pi_2:=\P\rtimes K_2$. Consider the distribution $\pi_{\lambda}:=\lambda\pi_1+(1-\lambda)\pi_2$, $\lambda\in[0,1]$. 

Mutual information is a convex function of the stochastic kernel (conditional distribution) for a fixed source (cf.~\cite[Theorem 2.7.4]{cover1991elements}) and hence by definition of the utility-rate function, 
\begin{equation}
\label{eq:mutual_info_convex}
I(\pi_{\lambda})\leq\lambda I(\pi_1)+(1-\lambda)I(\pi_2)\leq\lambda R_1+(1-\lambda) R_2=: R_{\lambda}.
\end{equation}
Thus, by the linearity of expected utility and by assumption, we have
\begin{equation}
\begin{split}
\lambda\bar{U}(R_1)+(1-\lambda)\bar{U}(R_2)&=\lambda\E_{\pi_1}[U]+(1-\lambda)\E_{\pi_2}[U]=\E_{\pi_{\lambda}}[U]\\
&\leq\max_{\pi:I(\pi)\leq R_{\lambda}}\E_{\pi}[U]=\bar{U}(R_{\lambda}),
\end{split}
\end{equation}
which proves that $\bar{U}[R]$ is a concave function of $R$, since $I(\pi_{\lambda})\leq R_{\lambda}$, by inequality~(\ref{eq:mutual_info_convex}).

The second statement is a consequence of the concavity of the function. Namely, a non-decreasing concave function on a compact interval can only be constant on a subinterval containing the right endpoint. 
\end{proof}

The Karush-Kuhn-Tucker method applied to~(\ref{eq:constrained_information}),
with $\E_{\pi}[U]=\sum_{i,j} p_i k_{ij} u_{ij}$, gives the Lagrange function
\begin{equation}
\label{eq:Lagrangian_utility_rate}
L(k_{ij},\lambda,\xi,\beta)=\E_{\pi}[U]+\sum_i\lambda_i\left(1-\sum_{j} k_{ij} \right)+\sum_{i,j}\xi_i k_{ij}+\beta\left(r-\sum_{ij} p_i k_{ij}\ln\frac{k_{ij}}{\sum_{\ell} p_{\ell} k_{\ell j}}\right),
\end{equation}
where $u_{ij}:=U(i,j)$, $K=(k_{ij})$, $\lambda=(\lambda_i)$, $\xi=(\xi_{ij})$. In addition, $\xi_{ij}\geq0$ and $\beta\geq0$ are necessary conditions for a maximum. Set $q_j:=\sum_i p_i k_{ij}$, which is the $j$th component of the $Y$-marginal of $\pi=(p_ik_{ij})_{i,j}$. 

The condition $\nabla_K L=0$ defines the stationary points, whose coordinates are given by the equations: 
\begin{eqnarray}
\label{eq:lagr1_kern}
\frac{\partial}{\partial k_{\mu\nu}}L(k_{ij},\xi,\lambda)&=&p_{\mu}u_{\mu\nu}-\beta p_{\mu}(\ln k_{\mu\nu}-\ln q_{\nu})+\xi_{\mu\nu}-\lambda_{\mu}=0,\\
     \label{eq:lagr2_kern}
     1-k_{i1}-\dots-k_{in}&=&0,\quad i=1,\dots,M,\\
     \label{eq:lagr3_kern}
     \xi_{ij} k_{ij}&=&0,\quad i=1,\dots, M, j=1,\dots,N,\\
     \label{eq:lagr4_kern}
     \beta(I(\P;K)-R)&=&0.
\end{eqnarray}
Note that solving the equations (\ref{eq:lagr1_kern})-(\ref{eq:lagr4_kern}) is analogous to solving the rate distortion equation~\cite[Sec. 2.5 and 2.6]{Berger1971}.

Since the matrix of $U$ is a free parameter in the model, we only consider the generic case. Furthermore, we do not consider all possible case distinctions for the determination of the Lagrange multipliers.
Define the auxiliary function
\begin{equation}
f_{\beta}(k_{ij}):=\sum_{ij}p_i k_{ij} u_{ij}-\frac{1}{\beta}\sum_{ij} p_i k_{ij}\ln\frac{k_{ij}}{\sum_{\ell} p_{\ell} k_{\ell j}}.
\end{equation}
Then
\begin{equation}
\begin{split}
\frac{\partial}{\partial k_{\mu\nu}}f_{\beta}(k_{ij})& =p_{\mu}u_{\mu\nu}-\frac{1}{\beta}\left(p_{\mu}\ln\frac{k_{\mu\nu}}{\sum_{\ell}p_lk_{\ell\nu}}+\sum_{ij} p_{i}k_{ij}\frac{\partial}{\partial k_{\mu\nu}}\ln\frac{k_{ij}}{\sum_{\ell}p_{\ell}k_{\ell j}}\right)\\
&=p_{\mu}u_{\mu\nu}-\frac{1}{\beta}\left(p_{\mu}\ln k_{\mu\nu}-p_{\mu}\ln q_{\nu}+0\right)\quad\text{(by identity~(\ref{eq:calculation}))}\\
&=p_{\mu}u_{\mu\nu}-\frac{p_{\mu}}{\beta}(\ln k_{\mu\nu}-\ln q_{\nu})
\end{split}
\end{equation}
where we use that
\begin{equation}
\label{eq:calculation}
\begin{split}
\sum_{ij}p_ik_{ij}\frac{\partial}{\partial k_{\mu\nu}}\left(\ln\frac{k_{ij}}{\sum_{\ell}p_{\ell}k_{\ell j}}\right)&=p_{\mu} k_{\mu\nu}\frac{1}{k_{\mu\nu}}-\sum_{ij}p_i k_{ij}\frac{\partial}{\partial k_{\mu\nu}}\ln\sum_{\ell}p_{\ell}k_{\ell j}\\
&=p_{\mu}-\sum_{i}p_ik_{i\nu}\frac{p_{\mu}}{q_{\nu}}=p_{\mu}-\frac{p_{\mu}}{q_{\nu}}\sum_i p_ik_{i\nu}=0.
\end{split}
\end{equation}

For $k_{ij}>0$, $\xi_{ij}=0$ must hold for all $i,j$.  We get from (\ref{eq:lagr1_kern}), after substituting $\beta\mapsto\frac{1}{\beta}$,
$$
k_{\mu\nu}=q_{\nu}e^{\beta(u_{\mu\nu}-\frac{\lambda_{\mu}}{p_{\mu}})},
$$ 
and (\ref{eq:lagr2_kern}) gives
\begin{equation}
\sum_{j} k_{\mu j}=1=e^{-\beta\frac{\lambda_{\mu}}{p_{\mu}}}\sum_{\nu}q_{\nu} e^{\beta u_{\mu\nu}}\quad\Leftrightarrow\quad\lambda_{\mu}=\frac{p_{\mu}}{\beta}\ln(\underbrace{\sum_{\nu}q_{\nu} e^{\beta u_{\mu\nu}}}_{=:Z_{\mu}}),
\end{equation}
with $Z_{\mu}$, the $\mu$th-partition function.

Then the following system of self-consistent, $\beta$-dependent equations must hold simultaneously for all $\mu=1,\dots,M$ and $\nu=1,\dots,N$:
\begin{equation}
\label{eq:self_consistent}
\begin{split}
k_{\mu\nu}(\beta)&=q_{\nu}\frac{1}{Z_{\mu}}e^{\beta u_{\mu\nu}},\\
q_{\nu}(\beta)&:=\sum_i p_i k_{i\nu},\\
Z_{\mu}(\beta)&:=\sum_j q_j e^{\beta u_{\mu j}}.
\end{split}
\end{equation}
The system of equations~(\ref{eq:self_consistent}) shows that each row of the solution corresponds to a distinct Boltzmann-Gibbs distribution. The row-independent Gibbs part $q$ is given by the $Y$-marginal of the joint probability $\pi_{\beta}:=\P\rtimes K(\beta)$. By changing the notation slightly, and given that we have identified the optimal value $\beta=\beta^*$, we get
\begin{equation}
\pi_{\beta}(x,y)=p(x)\frac{q_{\beta}(y)}{Z_{\beta}(x)}e^{\beta U(x,y)},
\end{equation}
as the $\beta$-dependent solution.

Also, the system~(\ref{eq:self_consistent}) is equivalent to $M\cdot N$ quadratic equations in $M\cdot N$ variables $k_{\mu\nu}$, with $\beta$ as a free parameter. So, 
\begin{equation}
k_{\mu\nu}\sum_{ij}p_i k_{ij} e^{\beta u_{\mu j}} -e^{\beta u_{\mu\nu}}\sum_i p_i k_{i\nu}=0,
\end{equation}
must hold for all $\mu=1,\dots,M$ and $\nu=1,\dots,N$, which, according to the affine version of B{\'{e}}zout's theorem (cf.~\cite[${\S}$ 7]{cox2015ideals}), has at most $2^{MN}$ complex $\beta$-dependent solutions if counted with multiplicities. 

If $\beta\neq0$, then~(\ref{eq:lagr4_kern}) implies that $R=I(\P;K)$. The optimal value for $\beta^*$ is thus given by
\begin{equation}
\label{eq:rate_beta}
R=\sum_{ij}\frac{p_iq_j}{Z_i}e^{\beta u_{ij}}(\beta u_{ij}-\ln Z_i)=\beta\E_{\pi_{\beta}}[U]-\sum_i p_i\ln Z_i(\beta).
\end{equation}
Then the maximum expected utility~(\ref{eq:constrained_information}), as a function of $\beta$, is given by 
\begin{equation}
\label{eq:utility_beta}
\bar{U}(\beta)=\sum_{ij}u_{ij}p_i\frac{q_i(\beta)}{Z_i(\beta)}e^{\beta u_{ij}}.
\end{equation}
\begin{prop}[Local strict monotonicity]
\label{prop:slope_utility_rate}
The slope of the utility rate function at $(\bar{U}_{\beta},R_{\beta})$, parametrically generated from~(\ref{eq:rate_beta}) and (\ref{eq:utility_beta}), is equal to $1/\beta\in\R^*_+$, i.e. 
\begin{equation}
\frac{d\bar{U}}{dR}(R_{\beta})=\frac{1}{\beta}.
\end{equation}
The utility function $\bar{U}(R)$ is strictly increasing at $R_{\beta}$.
\end{prop}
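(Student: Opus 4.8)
The plan is to treat both coordinates as functions of the single parameter $\beta$ fixed by the self-consistent system~(\ref{eq:self_consistent}), and to compute the slope of the parametric curve $\beta\mapsto(R_{\beta},\bar{U}_{\beta})$ by the chain rule, $\frac{d\bar{U}}{dR}=\bar{U}'(\beta)/R'(\beta)$, which is valid wherever $R'(\beta)\neq0$. The whole first claim then reduces to the single identity $R'(\beta)=\beta\,\bar{U}'(\beta)$, after which the slope $1/\beta$ drops out.

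First I would differentiate $R(\beta)=\beta\bar{U}(\beta)-\sum_i p_i\ln Z_i(\beta)$ from~(\ref{eq:rate_beta}), obtaining $R'(\beta)=\bar{U}(\beta)+\beta\bar{U}'(\beta)-\sum_i p_i Z_i'(\beta)/Z_i(\beta)$. The heart of the argument is to show that the last sum again equals $\bar{U}(\beta)$, so that the two copies of $\bar{U}(\beta)$ cancel and only the middle term survives. The delicate point is that the marginal $q_j$ appearing in $Z_i=\sum_j q_j e^{\beta u_{ij}}$ is itself a function of $\beta$ through $q_j(\beta)=\sum_i p_i k_{ij}(\beta)$; hence $Z_i'/Z_i=\sum_j (q_j'/q_j)(q_je^{\beta u_{ij}}/Z_i)+\sum_j u_{ij}(q_je^{\beta u_{ij}}/Z_i)$. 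The self-consistency relation $k_{ij}=q_je^{\beta u_{ij}}/Z_i$ rewrites this as $Z_i'/Z_i=\sum_j (q_j'/q_j)k_{ij}+\sum_j u_{ij}k_{ij}$. Weighting by $p_i$ and summing, the second term reproduces $\sum_{ij}p_ik_{ij}u_{ij}=\bar{U}(\beta)$, while the first term, after interchanging the order of summation, collapses by $\sum_i p_ik_{ij}=q_j$ to $\sum_j q_j'(\beta)$. This residual vanishes because $q(\beta)$ is a probability distribution for every $\beta$, so $\sum_j q_j(\beta)\equiv1$ and therefore $\sum_j q_j'(\beta)=0$. This gives $\sum_i p_iZ_i'/Z_i=\bar{U}(\beta)$, hence $R'(\beta)=\beta\bar{U}'(\beta)$ and $\frac{d\bar{U}}{dR}=1/\beta$.

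For the second claim I would note that $\beta\in\R_+^*$ forces the slope $1/\beta$ to be strictly positive, so $\bar{U}$ is strictly increasing at $R_\beta$. To make the division by $R'(\beta)$ legitimate I would first secure $\bar{U}'(\beta)>0$ (equivalently $R'(\beta)>0$): this can be read off from Lemma~\ref{lem:utility_rate_function}, which already establishes that $\bar{U}(R)$ is strictly increasing on $[0,R_{\min}]$, the range swept out as $\beta$ runs over $\R_+^*$; alternatively one differentiates~(\ref{eq:utility_beta}) directly and identifies $\bar{U}'(\beta)$ with a conditional-variance-type expression generalising the single-distribution identity $r'(\beta)=\beta\operatorname{Var}_{p^*_\beta}(U)$ of Lemma~\ref{eq:r_beta_relation}.

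The step I expect to be the main obstacle is exactly the cancellation $\sum_i p_iZ_i'/Z_i=\bar{U}(\beta)$: it is tempting to treat $q_j$ inside $Z_i$ as a constant in $\beta$, which would destroy the computation, so the essential subtlety is to track the implicit dependence of the self-consistent marginal $q(\beta)$ and to recognise that its normalisation $\sum_j q_j\equiv1$ is precisely what annihilates the extra term. A secondary subtlety is guaranteeing $R'(\beta)\neq0$ so that the parametric slope is well defined, which is why the monotonicity input from Lemma~\ref{lem:utility_rate_function} (or a direct positivity computation) is genuinely needed and not merely cosmetic.
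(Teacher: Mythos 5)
Your proposal is correct and takes essentially the same route as the paper: the paper's proof simply asserts $dR/d\bar{U}=\beta$ by appeal to Berger's Theorem 2.5.1 for the rate-distortion function and then applies the inverse function rule, and your computation $R'(\beta)=\beta\,\bar{U}'(\beta)$ --- including the cancellation of the implicit $q'(\beta)$ terms via $\sum_j q_j'(\beta)=0$ after using the self-consistency $k_{ij}=q_je^{\beta u_{ij}}/Z_i$ --- is precisely the self-contained verification of that cited identity. Your explicit concern for $R'(\beta)\neq0$ (so that the parametric chain rule is legitimate) is, if anything, more careful than the paper's proof, which leaves this point tacit.
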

\begin{proof}
The derivative of the inverse function is given by $dR/d\bar{U}=\beta$, which corresponds to Berger's result for the rate distortion function~\cite[Theorem 2.5.1]{Berger1971}. The claim is now proven by applying the `Inverse Function Rule'. The second statement is a direct consequence of the first.
\end{proof}
%%%%%%%%%%%%%%%%%%%%%%
\subsection{The utility expansion path}
Let us discuss the geometry underlying the optimisation problem~(\ref{eq:constrained_information}). 
\begin{lem}
Let $U$ be a generic utility matrix. The minimum of the rate-utility function is \begin{equation}
\bar{U}(0)=\E_{\P\otimes \delta_{y^*}}[U]=\sum_{x\in X} p(x)U(x,y^*),
\end{equation}
with $y^*:=\argmax_{y\in Y}\E_{\P\otimes \delta_y}[U]$. It is assumed for $\pi(x,y)=p(x)\cdot\delta_{y^*}$ and for the constant kernel $K_{\delta_{y^*}}$.
The maximum of the rate-utility function is 
\begin{equation}
    \bar{U}(R_{\max})=\sum_{x\in X}p(x) U(x,y^*(x)),
\end{equation}
with $y^*(x):=\argmax_{y\in Y}U(x,y)$. 
It is assumed for 
$\pi(x,y)=p(x)\delta_{yy^*(x)}$ and for the kernel $K(x,y)=\delta_{y^*(x)y}$, where $\delta_{ij}$ is the Kronecker delta. 
\end{lem}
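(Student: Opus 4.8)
The plan is to treat the two endpoints separately, exploiting in both cases that the objective $\pi\mapsto\E_{\pi}[U]$ is \emph{linear} in $\pi$, and hence in the kernel $K$, so that each maximisation reduces to a linear program over a simplex whose optima sit at vertices.

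First, for the minimum I would use that $R=0$ forces $I(\pi)=0$: since mutual information is nonnegative, the constraint $I(\pi)\leq 0$ can only be met with equality, and $I(\pi)=0$ holds if and only if $X$ and $Y$ are independent. Intersecting this with the fibre $\mathscr{M}_{\P}$ confines the admissible measures to $\mathscr{E}_0(\P)=\mathscr{M}_{\P}\cap\mathscr{E}_{\perp\!\!\!\perp}$, i.e.\ to product measures of the form $\pi=\P\otimes\nu$ with $\nu\in\Delta(Y)$. Writing $g(y):=\E_{\P\otimes\delta_y}[U]=\sum_{x}p(x)U(x,y)$, the objective becomes the linear functional $\nu\mapsto\E_{\P\otimes\nu}[U]=\sum_{y}\nu(y)g(y)$ on the simplex $\Delta(Y)$. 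A linear functional on a simplex attains its maximum at a vertex, so the optimiser is $\nu=\delta_{y^*}$ with $y^*=\argmax_{y}g(y)$; genericity of $U$ makes this vertex unique, yielding $\bar{U}(0)=\sum_{x}p(x)U(x,y^*)$, attained at the constant kernel $K_{\delta_{y^*}}$.

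For the maximum I would first observe that $\bar{U}(R_{\max})$ is the maximal value of the (non-decreasing, by Lemma~\ref{lem:utility_rate_function}) rate-utility function, and that the information constraint becomes inactive once $R$ exceeds the rate of the unconstrained optimiser; hence $\bar{U}(R_{\max})=\max_{\pi\in\mathscr{M}_{\P}}\E_{\pi}[U]$ over the whole fibre. The crucial structural point is that this objective decouples over rows: $\E_{\pi}[U]=\sum_{x}p(x)\sum_{y}K(x,y)U(x,y)$, and the rows $K_x=K(x,\cdot)\in\Delta(Y)$ may be chosen independently. Maximising row by row is again a linear program on $\Delta(Y)$, with optimum at the vertex $\delta_{y^*(x)}$, $y^*(x)=\argmax_{y}U(x,y)$ (unique by genericity). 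Reassembling the rows gives $K(x,y)=\delta_{y^*(x)y}$, equivalently $\pi(x,y)=p(x)\delta_{yy^*(x)}$, and $\bar{U}(R_{\max})=\sum_{x}p(x)U(x,y^*(x))$.

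The main obstacle, such as it is, is bookkeeping around the genericity hypothesis and feasibility rather than any deep inequality. I would record explicitly that genericity is what removes the non-uniqueness flagged in Lemma~\ref{lem:existence_solution}, guaranteeing that both the aggregate maximiser $y^*$ and each row maximiser $y^*(x)$ are single points; and I would confirm that the deterministic kernel realising the overall maximum is feasible at $R_{\max}$, i.e.\ that its finite rate $I(\P\rtimes K)$ does not exceed $R_{\max}$, which holds by the definition of $R_{\max}$ as the right endpoint of the domain together with the non-decreasing, eventually-constant behaviour of $\bar{U}$. The conceptual content is simply that the $R=0$ optimum must commit to a single, state-independent action (no information transmitted from $X$ to $Y$), whereas the $R_{\max}$ optimum is free to select the best action per state (full information), and both optima are vertices because the utility is linear in the kernel.
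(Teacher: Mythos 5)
Your proposal is correct and follows essentially the same route as the paper's proof: at $R=0$ you reduce to product measures $\P\otimes\nu$ in $\mathscr{M}_{\P}$ and pick the maximising vertex $\delta_{y^*}$ of $\Delta(Y)$, and at $R_{\max}$ you drop the (inactive) information constraint and maximise over the fibre, obtaining the row-wise deterministic kernel $K(x,y)=\delta_{y^*(x)y}$. The only cosmetic difference is that you make the linear-programming-on-a-simplex and row-decoupling arguments explicit where the paper appeals to extreme points via Bauer's Maximum Principle (Lemma~\ref{lem:existence_solution}), and you use genericity for uniqueness where the paper instead remarks that the optimal value is independent of the choice of row maximisers.
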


\begin{proof}
For $R=0$, we have $I(\pi)=0$ iff $\pi=\P\otimes\nu$ for some $\nu\in\Delta(Y)$. Given that the maximum is attained at an extreme point of $C_0=\{I(\pi)=0\}$, $\nu$ must be a vertex of $\Delta(Y)$, i.e. a Dirac measure. The specific vertex is determined by the columns of the utility matrix.

For $R=R_{\max}$, the maximising distribution must be an extreme point of the compact convex set $\mathscr{M}_{\P}$, since the mutual information does not impose any further restrictions on the solution space. So the solution is $\pi_{R_{\max}}:=\argmax_{\pi\in\operatorname{extreme}(\mathscr{M}_{\P})}\E_{\pi}[U]$. 

However, depending on the structure of $U$, multiple solutions are possible. Thus, for each choice $j^*(i)\in\argmax_j u_{ij}$, the corresponding stochastic kernel is given by
    $$
    K(x,y)=\delta_{yy^*(x)},
    $$
and with the joint distribution $(\P\rtimes K)(x,y)=p(x)\delta_{yy^*(x)}$. 
Nevertheless, the resulting maximum expected utility is independent of the choices made and is equal to 
    $$
    \bar{U}(R_{\max})=\sum_{x\in X}p(x) U(x,y^*(x)).
    $$ 
\end{proof}
We identify a $M\times N$ utility matrix $U$ with a point in $\R^{MN}$. The hyperplane defined by $U$ is given by $\{x~|~\sum_{ij}u_{ij}x_{ij}=0\}$, with unit normal vector
\begin{equation}
\boldsymbol{n}:=\frac{U}{\| U\|}.
\end{equation}

For a general utility matrix, the associated hyperplane should be the support hyperplane of the mutual information hypersurface. 
The contact point $\pi_R$, where the two hypersurfaces are tangent to each other, is given by 
\begin{equation}
\label{eq:tangent_point}
\boldsymbol{n}=\frac{\nabla I(\pi_R)}{\|\nabla I(\pi_R)\|}.
\end{equation}
\begin{df}
The utility expansion path $\gamma_+:[0,R_{\max}]\rightarrow\mathscr{M}_{\P}$ is defined, for increasing values of $R$, as the set of points satisfying~(\ref{eq:tangent_point}). 
\end{df}
In general, the expansion curve is neither $e$- nor $m$-parallel, cf.~Figure~\ref{fig:Utility_rate}.
If the utility hyperplanes are transverse to the mutual information isohypersurfaces, then the path degenerates to an extreme point corresponding to a Dirac measure. Note that this construction follows the same principle as the construction of the wealth expansion path in consumer choice theory~\cite[p. 24]{Mas-Colell1995}. 

The following simple but nontrivial $2\times 2$ example shows that the expansion path may be constant. 

For $\P:=(p,1-p)$, $p\in(0,1)$, $U: =\left(\begin{array}{cc}1 & 0 \\ 1 & 0\end{array}\right)$, $K=\left(\begin{array}{cc}k_{11} & k_{12} \\ k_{21} & k_{22}\end{array}\right)$ and $\pi:=\P\rtimes K$, the expected utility w.r.t $\pi$ is $\E_{\pi}[U]=pk_{11}+(1-p)k_{21}$. It is maximal for $k_{11}=k_{21}=1$, and therefore the extreme stochastic matrix is equal to $K_{\max}=\left(\begin{array}{cc}1 & 0 \\1 & 0\end{array}\right)$. The corresponding $Y$-marginal is $\hat{\pi}_Y=(1,0)$, i.e. the Dirac measure $\delta_0$ supported at the vertex $0$ of $[0,1]$.
The maximum of the expected utility over $\mathscr{M}_{\P}$ is equal to $1$, and it is assumed for the product measure $\pi_{\max}=\P\otimes\delta_0$, with $I(\pi_{\max})=0$. In contrast, for the joint distribution $\pi:=(p,0,0,1-p)$, we have $I(\pi)=-(p\ln p+(1-p)\ln(1-p))>0$ for $p\in(0,1)$, which is exactly the entropy of the distribution $\P$.

The dual objective of utility minimisation is given by
\begin{equation}
\label{eq:utility_rate_min}
\bar{U}(R):=\min_{\substack{\pi\in\mathscr{M}_{\P} \\ I(\pi)\leq R}}\E_{\pi}[U].
\end{equation}
From a geometrical point of view, this corresponds to the solution of the equation with the opposite sign 
\begin{equation}
\label{eq:utility_contraction_path}
-\boldsymbol{n}=\frac{\nabla I(\pi_R)}{\|\nabla I(\pi_R)\|}.
\end{equation}
\begin{df}
The utility contraction path $\gamma_-:[0,R_{\max}]\rightarrow\mathscr{M}_{\P}$ is defined, for increasing values of $R$, as the set of points satisfying~(\ref{eq:utility_contraction_path}). 
\end{df}
It can be shown that $\gamma_+$ and $\gamma_-$ are disjoint and related by a reflection through a point (involutive isometry), since for any fixed value of the mutual information the hypersurface consists of two disjoint components, cf.~Figure~\ref{fig:Utility_rate}.

\begin{figure}
    \centering
    \includegraphics[width=1\linewidth]{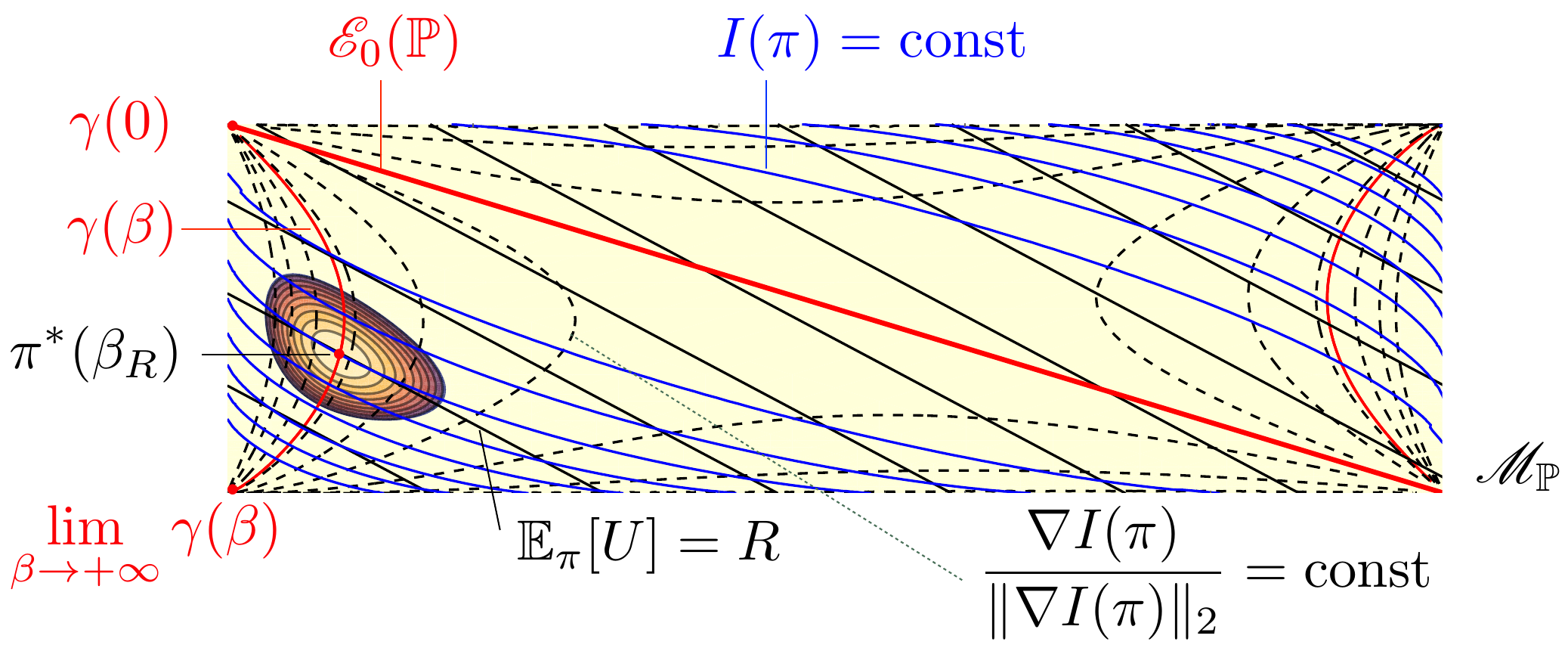}
    \caption{{\color{red}Red diagonal line} {\color{red}$\mathscr{E}_0(\P)$}: Locus of product measures in $\mathscr{M}_{\P}$. The dashed contours show mutual information values with $I(P;K)=R$ for different values of $R$. The {\color{blue}blue contour lines} show values of $\E_{P\rtimes K}[U]$ for $\bar{U}(R)\geq R_0$. The {\color{red}red curved lines} show the paths of utility expansion ${\color{red}\gamma_{\max}(\beta)}$ and contraction ${\color{red}\gamma_{\min}(\beta)}$.}
    \label{fig:Utility_rate}
\end{figure}
\section{An example of application to legal theory}
\label{sec:legal_app}

Let $\mathfrak{G}_x$ be the set of exercisable (fundamental) rights depending on a subject (personal scope) or a situation (material scope) or both, represented by a vector $x$. 
Let $\mathfrak{C}_x\subset\mathfrak{G}_x$ be the subset of fundamental rights that cannot be restricted in principle. 

For $R$, an $n$-ary relation, and $\varphi$ a propositional formula, let $\sigma_{\varphi}(R)$ be the function that selects (restricts to) from $R$ those $n$-tuples for which the formula $\varphi$ holds.

Applying the selection operation yields $\sigma_{\varphi}(\mathfrak{O}_x)\subset\mathfrak{O}_x$, which must satisfy the following conditions: The conditions that make up $\varphi$ must satisfy constitutional or legal requirements, e.g. they must be non-discriminatory, and $\mathfrak{C}_x\cap\sigma_{\varphi}(\mathfrak{O}_x)=\emptyset$. Let $\hat{a}\in\sigma_{\varphi}(\mathfrak{O}_x)$ denote a right (element) that has been restricted.

Let $\mathfrak{L}\subset X$ be the subset of legal states of the world. For a restriction to be justified, $\hat{a}.x\in\mathfrak{L}$ and $\hat{a}.x\succsim x$ must hold, i.e. the restriction must lead to a legal state that is at least as desirable as the current state.  

Let $U_x(\hat{a}):=U(x,\hat{a})\in\R_+$ (as in~(\ref{eq:utility})) be the net public utility,\footnote{A cost-benefit analysis must be conducted and an aggregate utility function separately modelled according to the principles of welfare economics. However, this approach presents its own fundamental challenges, cf.~\cite[Chapter~22]{Mas-Colell1995}.}  the sovereign achieves for the `greater good' by restricting $\hat{a}$ in the context of $x$. Let $\kappa_x\in\Delta^{\circ}(\mathfrak{O}_x)$ be the a priori probability of $x$ choosing from the set of unrestricted rights, and $p\in\Delta(\sigma_{\varphi}(\mathfrak{O}_x))$ be the probability of the sovereign choosing from the set of restricted rights.  

Let $d$ be a metric or a divergence between probability measures. A disutility function $D$ (or $D_x$ in a fixed state $x$) is a strictly decreasing convex function of $d$ with values in $\overline{\R}_+:=\R_+\cup\{+\infty\}$, such that $D(0)=D_{\max}>0$ and $D(d)\rightarrow0$ for $d\to+\infty$. The three basic types are (not to scale)
\begin{equation*}
d\mapsto\frac{1}{d}: \includegraphics[width=0.19\linewidth]{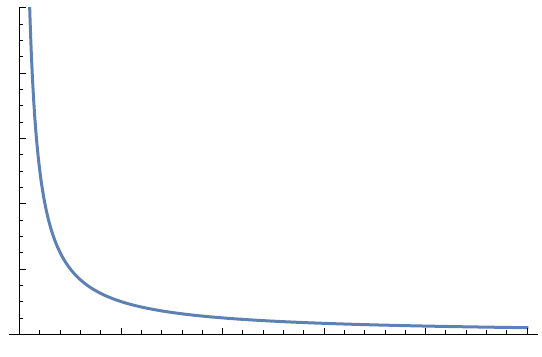},\quad d\mapsto e^{-d}: \includegraphics[width=0.19\linewidth]{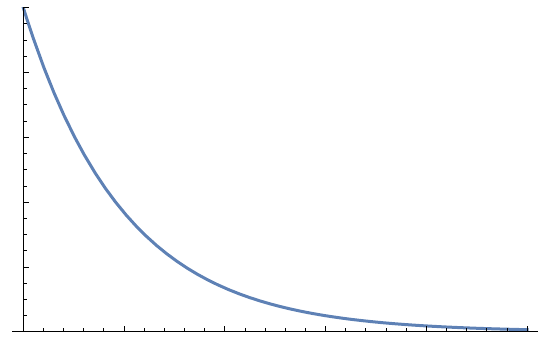},
\quad d\mapsto D_{\max}-d: \includegraphics[width=0.19\linewidth]{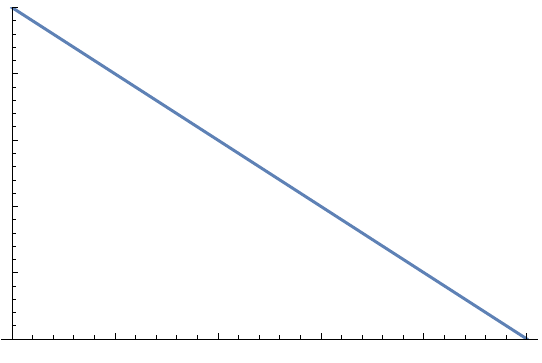}
\end{equation*}

The optimisation problem for the end-means relation, called proportionality in the strict sense, is, for both $\beta\in\R_+$ and $\kappa_x\in\Delta^{\circ}(\mathfrak{O}_x)$ fixed, given by
\begin{eqnarray}
\label{eq:subsidiarity1}
p^*_{\beta}&:=&\underset{p\in\Delta(\sigma_{\varphi}(\mathfrak{O}_x))}{\mathrm{argmax}}\left(\E_p[U_x]-\frac{1}{\beta}D_x(d(p,\kappa_x))\right),\\\nonumber
&\text{s.t.}& \beta\cdot\E_{p^*_{\beta}}[U_x]>D_x(d(p^*_{\beta},\kappa_x)).
\end{eqnarray}
So if the net benefit to the public is positive, the sovereign has a right to act.

Example: For $m<n$, let $\Delta_m\subset\Delta_n$ be a $m$-dimensional face and $q\in\Delta^{\circ}_n$ be a fixed measure with full support. 
Then the maximum divergence w.r.t. $q$ is equal to $d_{\max}:=d_{\max}(q):=-\ln q_{i_{\min}}$, $i_{\min}:=\argmin_{i\in[n]}q_i$ with $i_{\min}\notin[m]$, as we assume. 

Define $D(d(p,q)):=d_{\max}-D_{\operatorname{KL}}(p||q)$ for $p\in\Delta_{m}$ as the disutility function. Let the sovereign's utility vector (a row of the utility matrix in situation $x$) be $U=(u_0,u_1,\dots,u_m)$.
Then the net expected utility $F_{\beta}[p]:=\E_p[U]-\frac{1}{\beta}D(d(p,q))$ is a continuous convex functional (as the sum of two convex functions; recall that for fixed $q$, $D(\cdot,q)$ is convex, cf.~\cite[Theorem 2.7.1]{cover1991elements}) which has a maximum at an extreme point of $\Delta_m$, i.e. a vertex $j\in[m]$. 
Put $d^*_{j^*}:=d(\delta_{j^*},q)$ where $j^*:=\argmax_{j\in[m]}d(\delta_j,q)$ and set $\xi^*:=\argmax_{\xi\in[m]}u_{\xi}$ with $u_{\xi^*}$ the corresponding maximum value.
The minimum divergence $d_{\min}>0$ between $q$ and $\Delta_m$ is equal to the conditional probability
$$
q|_{\Delta_m}=\frac{1}{\sum_{j=0}^mq_{i_j}}(q_{i_0},q_{i_1},\dots,q_{i_m}).
$$
\begin{minipage}{0.4\textwidth}
Then (cf.~Figure~\ref{img:bound_disut}) $d(\cdot,q)$ is a continuous bounded functional on the face $\Delta_m$ with 
$$
0<d_{\min}\leq d(p,q)\leq d^*_{j^*}<d_{\max}.
$$
For $\beta\to+\infty$, $p^*=\delta_{\xi^*}$ and $F[p^*]=u_{\xi^*}$ is the maximum public utility (note that $\xi^*$ and $j^*$ are not necessarily the same). 
\end{minipage}%
\begin{minipage}{0.5\textwidth}
\begin{center}
    \includegraphics[width=0.7\textwidth]{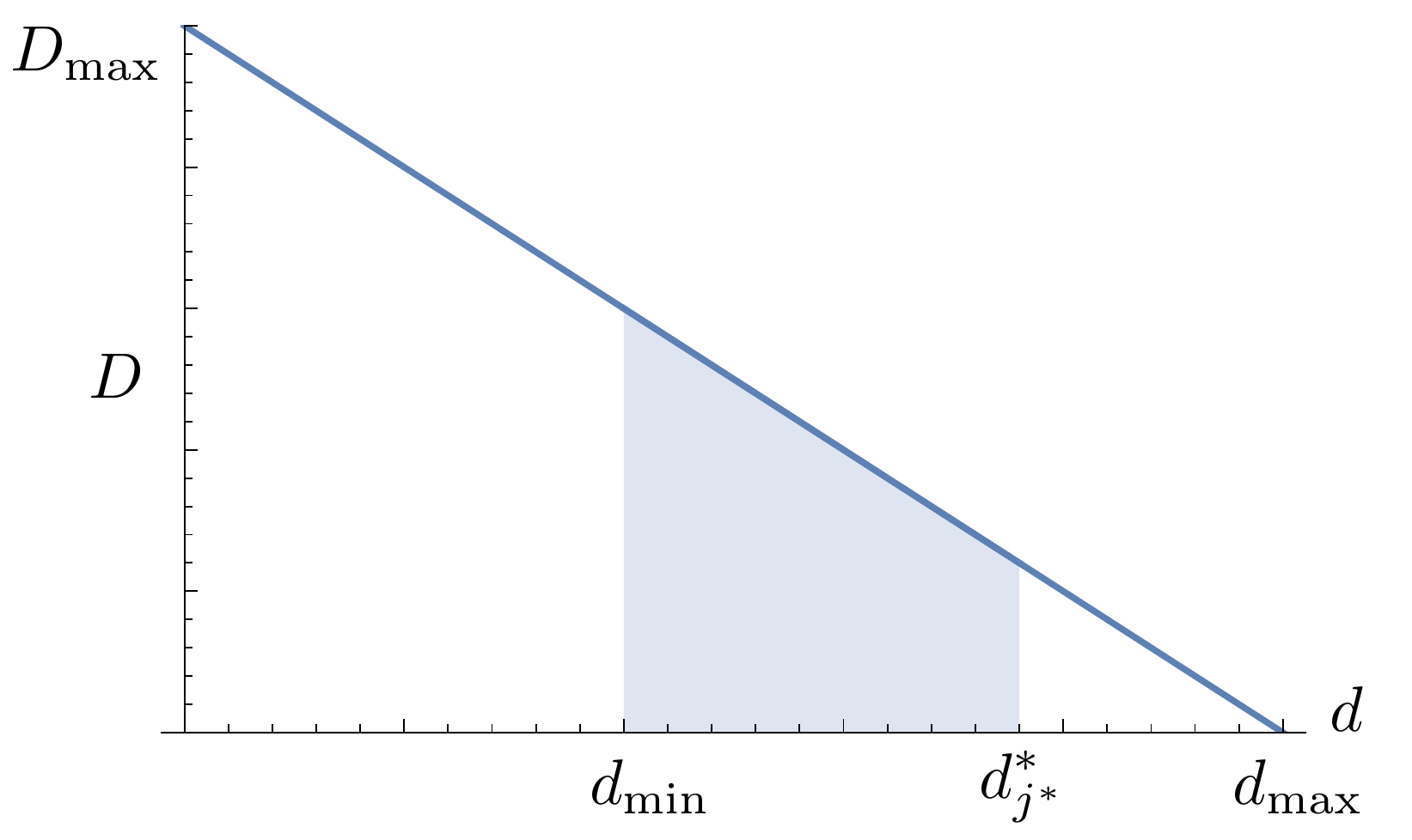}
    \captionof{figure}{Disutility function $D(d)$}
    \label{img:bound_disut}
\end{center}
\end{minipage}

For $\beta\to0$, $F_{\beta}[p]<0$ for all $p\in\Delta_m$, because $D(d)\geq d_{\max}-d_{j^*}>0$, i.e. it is bounded from below, and thus~(\ref{eq:subsidiarity1}) has no solution.
Between the two extremes there exists $\beta_0>0$ such that $F_{\beta_0+\varepsilon}[p]>0$ for some $p\in\Delta_m$ and for all $\varepsilon>0$. 
Furthermore, for a $\beta_i>\beta_0$, $i\in\{1,\dots,\ell\}$, the solution can change discontinuously from vertex $j_i\to j_{i-1}$, $j_0:=\xi^*$, such that $p^*_{\beta_i}=\delta_{j_i}\to p^*_{\beta+\varepsilon}=\delta_{j_{i-1}}$, cf. Figure~\ref{fig:Disutility_surface}.
Unlike in ~(\ref{eq:constrained_opti}), we have a second best problem. Since $D(d)$ is bounded, it is not possible to compensate for increasing values of the coupling constant by moving far enough away from the prior. 
\begin{figure}
    \centering
    \includegraphics[width=0.6\linewidth]{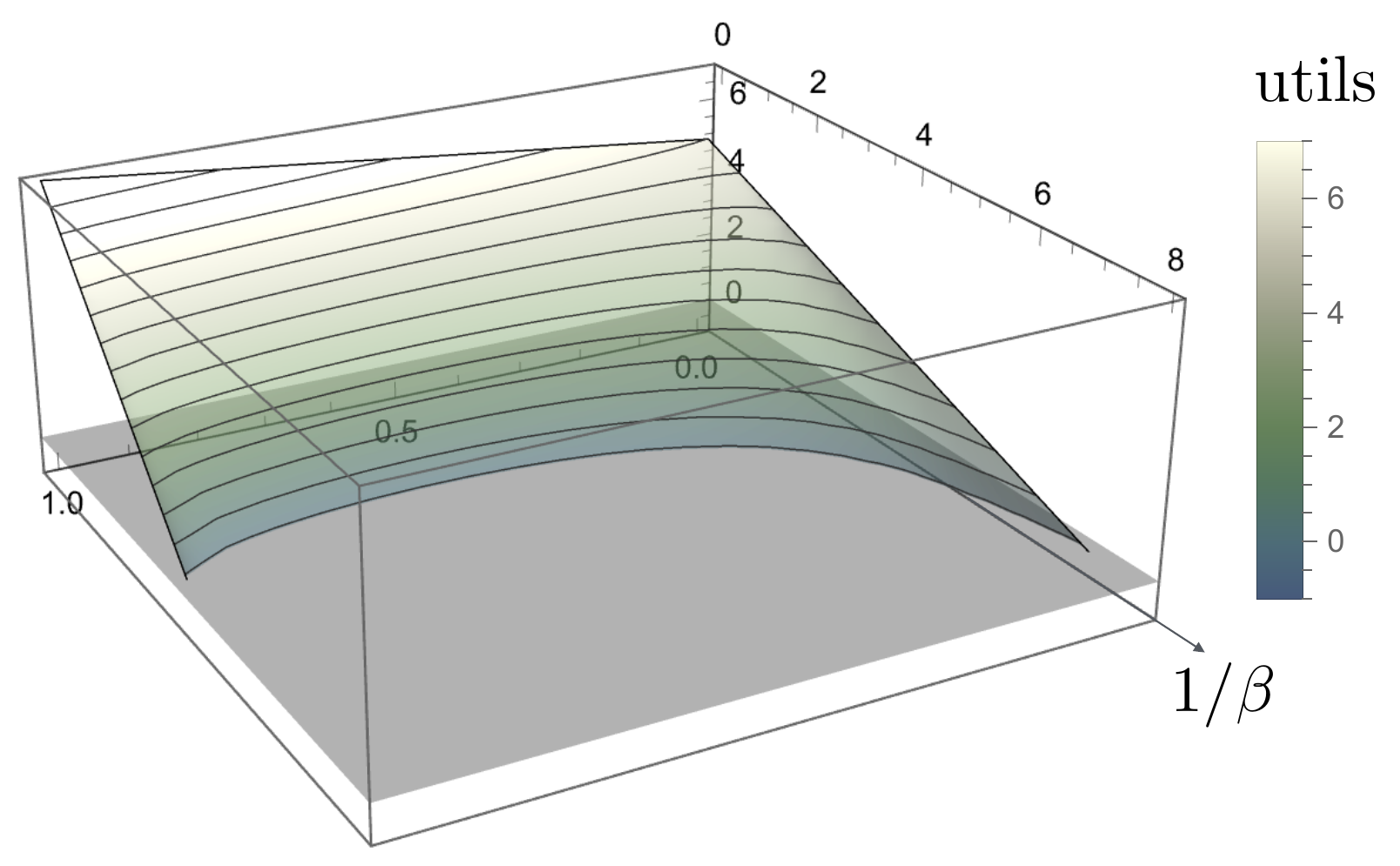}
    \caption{The $z$-axis shows the values of the net expected utility $F_{\beta}[p]$ as a function of temperature $1/\beta\in[0,8]$, probability $p\in\Delta_1$, prior $q=(0.7,0.2,0.1)\in\Delta^{\circ}_2$ and utility matrix $U=(7,5)$. For a critical temperature $1/\beta_0\in[0,8]$ the optimal solution $p^*_{\beta}$ changes discontinuously from the Dirac measure $\delta_0$ to $\delta_1$.} 
 \label{fig:Disutility_surface}
\end{figure}

So, let us summarise our findings.
Our examination of the information geometry underlying bounded rational decision-making and its implications for the field of jurisprudence has yielded a number of remarkable insights. 
In essence, it can be succinctly summarised by analysing the components of the main object of our investigation, which is the following type of formula
\begin{equation}
\label{eq:fundamental_eq}
\max_{p\in\Delta(\operatorname{supp}(q))}\left\{\underbrace{\E_p[U]}_{\text{utilitarian}}-\frac{1}{\beta}\underbrace{\mathcal{R}(p,q)}_{\text{deontic}}\right\},
\end{equation}
where $\mathcal{R}$ is a regularisation function and $q$ is a Bayesian prior.
Expected utility is a statistical utilitarian approach to improving the global situation. However, it requires the formulation of a (fair) aggregate utility function, which is the domain of social choice and welfare economics~\cite{Mas-Colell1995}.
In contrast, $\mathcal{R}$, with the help of the prior $q$ and its support, serves as a cut-off and thus incorporates deontological considerations for the protection of individual rights, which expected utility cannot provide. 
If reality provides a source probability for the states of the world and replaces a fixed Bayesian prior, the range over which optimisation can be performed is still dictated by the allowed support for the measures representing the policies to be considered. 
However, neither utilitarian nor deontological considerations can determine the coupling constant itself. It is therefore ultimately up to a third party, such as the legislator, to determine an acceptable range of values, i.e. the margin of discretion.

\subsection*{Acknowledgments}
I would like to thank Takuya Murayama (Kyushu University) for his help in proving the strict monotonicity of the free energy, his suggestion to use regular conditional probabilities, and for general comments. 
\bibliography{references_MG.bib}
\bibliographystyle{acm}
\end{document}